\newcommand{\calC}{\mathcal C}
\newcommand{\calS}{\mathcal S}
\newcommand{\calG}{\mathcal G}
\newtheorem{theorem}{Theorem}
\newtheorem{corollary}{Corollary}
\newtheorem{lemma}{Lemma}
\newtheorem{definition}{Definition}
\newtheorem*{question*}{Question}
\newcommand{\ket}[1]{\ensuremath{\vert#1\rangle}}
\newcommand{\zz}{\mathbb Z}
\newcommand{\rr}{\mathbb R}
\newcommand\mc[1]{\mathcal{#1}}
\begin{document}
 \bibliographystyle{plainnat}
\title{Hierarchies of resources for measurement-based quantum computation}
\author{Markus Frembs}%
\affiliation{Centre for Quantum Dynamics, Griffith University, Gold Coast, QLD 4222, Australia}%
\author{Sam Roberts}%
\affiliation{Centre for Engineered Quantum Systems, School of Physics, The University of Sydney, Sydney, NSW 2006, Australia}%
\author{Earl Campbell}%
\affiliation{Department of Physics \& Astronomy, University of Sheffield, Sheffield, S3 7RH, United Kingdom}
\affiliation{Riverlane, Cambridge  CB2 3BZ, United Kingdom}
\author{Stephen Bartlett}%
\affiliation{Centre for Engineered Quantum Systems, School of Physics, The University of Sydney, Sydney, NSW 2006, Australia}%
\begin{abstract}
For certain restricted computational tasks, quantum mechanics provides a provable advantage over any possible classical implementation.  Several of these results have been proven using the framework of measurement-based quantum computation (MBQC), where non-locality and more generally contextuality have been identified as necessary resources for certain quantum computations.  
Here, we consider the computational power of MBQC in more detail by refining its resource requirements, both on the allowed operations and the number of accessible qubits. More precisely, we identify which Boolean functions can be computed in non-adaptive MBQC, with local operations contained within a finite level in the Clifford hierarchy. Moreover, for non-adaptive MBQC restricted to certain subtheories such as stabiliser MBQC, we compute the minimal number of qubits required to compute a given Boolean function. Our results point towards hierarchies of resources that more sharply characterise the power of MBQC beyond the binary of contextuality vs non-contextuality.

\end{abstract}

\maketitle

\section{Introduction}

Quantum computation promises many advantages over classical computation, including the ability to efficiently solve certain problems, such as factoring, where no efficient classical algorithms are currently known. What drives this quantum advantage?

Contextuality offers a potential answer to this question, as it has been found to be an important resource for quantum computation in a variety of settings~\cite{Raussendorf2009,howard2014contextuality,bermejo2017contextuality,delfosse2015wigner,raussendorf2017contextuality,karanjai2018contextuality,raussendorf2016cohomological,okay2017topological,veitch2014resource,mansfield2018quantum,pashayan2015estimating,deSilva2018,FrembsRobertsBartlett2018,shahandeh2021quantum}. Roughly speaking, contextuality is the impossibility of assigning pre-determined outcomes to all potential measurements of a quantum system in a way that is independent of other, simultaneously performed measurements \cite{KochenSpecker1967}. Contextuality is a common notion of non-classicality. Notably, contextuality plays a central role in a recent seminal result showing a provable quantum advantage for a class of shallow quantum circuits over their classical counterparts \cite{BravyiGossetKoenig2018} (later extended to the noisy setting in Ref.~\cite{bravyi2020quantum}). While the class of problems solvable with such circuits is not motivated by practical applications, it provides a proof of principle that quantum advantages over classical computation are possible, and highlights quantum contextuality as a key resource.

Despite this evidence for the role of contextuality as a resource for quantum advantage, a finer characterisation of this resource is largely missing.  We address this problem by asking a related question: how non-classical is quantum computation? This is similar to the study of the extent to which quantum mechanics violates certain Bell inequalities, yet with an explicit emphasis on computation and computationally relevant resource constraints.

In this paper, we study the computability of Boolean functions in the framework of measurement-based quantum computation (MBQC)~\cite{raussendorf2001one,RaussendorfBrowneBriegel2003,briegel2009measurement}, observing that many of the relevant results in the literature including Refs.~\cite{BravyiGossetKoenig2018,AndersBrowne2009} are readily and naturally formulated within the measurement-based framework. For simplicity, we focus on non-adaptive MBQC with linear side-processing, where contextuality provides the sharpest known separation between classical and quantum computation \cite{Raussendorf2013,FrembsRobertsBartlett2018}. We outline this setup in Sec.~\ref{sec: setting} below.

Within this setting, we further consider the interplay between the following two resource aspects: the amount of magic (non-Clifford operations, see Sec.~\ref{sec: stabiliser subtheory}) necessary and the number of qubits required for the computation of a given Boolean function. Already in this limited framework, the classification of Boolean functions under these resources points towards a rich structure beyond the classical paradigm. We summarise our main results and provide an overview to the structure of the paper in Sec.~\ref{secSummaryOfResults}.

\subsection{The setting}\label{sec: setting}
 
In this section, we define our restricted framework of MBQC. A MBQC consists of a correlated quantum resource state, and a control computer with restricted computational power. The quantum resource state consists of $N$ local subsystems---or parties---each of which consists of a qubit and measurement device that exchanges classical information with the control computer once. The control computer is responsible for selecting the measurement settings for each local subsystem, and for processing the measurement outcomes into useful computational output. Importantly, the power of the control computer is limited: we consider control computers that can only compute linear functions, and as such are not even classically universal.\footnote{The restriction to linear side-processing greatly simplifies the analysis of contextuality as a resource in MBQC. While nonlinear side-processing is not required for universal MBQC, one may consider relaxing this restriction in future studies in order to quantify any advantage of (MB)QC over universal classical computation in practical settings.} This notion of MBQC is known as $l2$-MBQC (where the $l2$ stands for mod-2 linear side-processing) and is based on the model of Anders and Browne~\cite{AndersBrowne2009}. The following definition is based on Refs.~\cite{Raussendorf2013,FrembsRobertsBartlett2018}. (See Ref.~\cite{okay2017topological} for a more general notion of MBQC.)

\begin{definition}\label{defldMBQC}
    A $l2$-MBQC with classical input $\mathbf{i} \in \zz_2^n$ and classical output $o \in \zz_2$ consists of $N$ qubit subsystems, jointly prepared in the state $\ket{\psi}$, each of which receives an input $c_k(\mathbf{i}) \in \zz_2$ from the control computer, performs a measurement $M_k(c_k(\mathbf{i}))$, and returns a measurement outcome $m_k \in \zz_2$, for $k = 1,\ldots, N$.\footnote{Throughout, we will use boldface for vectors.} The inputs and computational output satisfy the following conditions:
    \begin{enumerate}
    \item The computational output $o \in \zz_2$ is a linear function of the local measurement outcomes $\mathbf{m} = (m_1,\cdots,m_N)^\intercal\in \zz_2^N$,
    \begin{equation*}\label{eq: MBQC linear post-processing}
        o = \sum_{k=1}^N m_k + m_0 \quad \mathrm{mod}\ 2\; ,
    \end{equation*}
    for some $m_0 \in \zz_2$.  
    \item Local measurements $M_k(c_k)$ have eigenvalues $(-1)^{m_k}$.  
    The measurement settings $\mathbf{c} = (c_1,\cdots,c_N)^\intercal \in \zz_2^N$ are linear functions of the classical input $\mathbf{i} = (i_1,\cdots,i_n)^\intercal\in \zz_2^n$ and the measurement outcomes $\mathbf{m}$ via
    \begin{equation}\label{eq: MBQC linear pre-processing}
        \mathbf{c} = T\mathbf{m} + P\mathbf{i} \quad \mathrm{mod}\ 2,
    \end{equation}
    for some $T \in \text{Mat}(N \times N, \zz_2)$ and $P \in \text{Mat}(N \times n,\zz_2)$.
    \item For a suitable ordering of the parties $1,\ldots,N$ the matrix $T$ in Eq.~(\ref{eq: MBQC linear pre-processing}) is lower triangular with vanishing diagonal. If $T = 0$ the $l2$-MBQC is called non-adaptive.
    \end{enumerate}
\end{definition}

\begin{figure}[h]%
	\centering
	\includegraphics[width=0.65\linewidth]{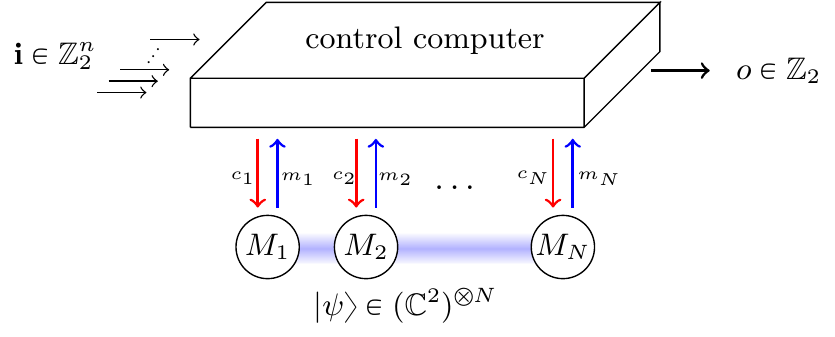}
	\caption{The schematic setup of an l2-MBQC defined in Def.~\ref{defldMBQC}, from Ref.~\cite{FrembsRobertsBartlett2018}. 
	For each qubit (indexed by $k$) of the resource state $\ket{\psi}$, the control computer determines the measurement settings $c_k$ as a linear function of the inputs $\mathbf{i}\in \zz_2^n$ and any previous measurement outcomes $m_1, \ldots, m_{k-1}$. The output $o \in \zz_2$ is evaluated by the control computer as the parity of the measurement outcomes.}
\end{figure}

Whenever the output of the computer is a deterministic function of the inputs we have $o = o(\mathbf{i})$ for $\mathbf{i} \in \zz_2^n$. We say the $l2$-MBQC is deterministic in this case. More generally, in the non-deterministic (probabilistic) case every input specifies a probability distribution over the outputs. We will mostly restrict ourselves to deterministic $l2$-MBQC (with the exception of Thm.~\ref{thm: stabiliser success probability}). Moreover, we will focus on the non-adaptive case. The latter is a natural restriction for the study of contextuality (nonlocality) as a resource in MBQC \cite{HobanCampbell2011}, since adaptivity generally allows to reproduce any nonlocal correlations (see also Remark 1 in \cite{Raussendorf2013}). Nevertheless, more flexible restrictions on adaptivity can still lead to interesting classes of  algorithms such as shallow circuits in \cite{BravyiGossetKoenig2018}. We briefly discuss the adaptive case in App.~\ref{secAdaptivity}. Finally, we note that Def.~\ref{defldMBQC} is readily generalised to qudit systems, but requires care in the definitions of the higher-dimensional measurements allowed within the framework. Many of our results generalise to qudit systems of prime dimension, yet additional technicalities arise; to simplify presentation we only consider the qubit case in the main body of the text.

\subsection{The stabiliser subtheory}\label{sec: stabiliser subtheory}
We denote the group of Pauli operators on $N$ qubits as $\mathcal{P}_N$. 
Throughout, we label the local computational basis states as  $|q\rangle$ for $q \in \{0,1\}$. 
An important class of operators is given by the Clifford hierarchy.

\begin{definition}\label{def: Clifford hierarchy}
    The Clifford hierarchy on $N$ qubits is defined recursively by setting $\mathcal{C}_N^{1} = \mathcal{P}_N$, and letting the $k$'th level $\mathcal{C}_N^{k}$ be given by
	\begin{equation}
	    \mathcal{C}_N^{k} = \{ U \in \mathcal{U}((\mathbb{C}^2)^{\otimes N}) ~|~ UPU^{\dagger} \in \mathcal{C}_N^{k-1} ~\forall P \in \mathcal{P}_N \}.
	\end{equation}
\end{definition}

Notably, the second level $\mathcal{C}_N^{2}$ is the normaliser of the Pauli group and is known as the \emph{Clifford group}. Any state that can be obtained by applying a gate from the Clifford group to a computational basis state is known as a \emph{stabiliser state}. Note that in the setting of the Clifford hierarchy, it is natural to model the classical control in Def.~\ref{defldMBQC} in the form of unitary conjugation on some fiducial measurement setting.
\begin{definition}\label{defLevelDMBQC}
	We say a MBQC belongs to level-$D$ if the local measurement settings are of the form of 
	\begin{equation}\label{eq: ld-MBQC operators by unitary conjugation}
	    M_k(c_k) = U_k(c_k) M_k(0) U_k^{-1}(c_k)\; ,
	\end{equation}
	where $M_k(0)\in \mathcal{P}_1$ is some fiducial measurement setting, $U_k(c_k) \in \mathcal{C}_1^{D}$, and where the resource state is a stabiliser state.
\end{definition}
When the $l2$-MBQC belongs to level-2, the MBQC belongs to the stabiliser subtheory, and is classically efficiently simulable by the Gottesman-Knill theorem~\cite{gottesman1998heisenberg,aaronson2004improved}. Level-3 MBQCs are universal for quantum computation (in the adaptive case), with the scheme based on cluster states~\cite{RaussendorfBrowneBriegel2003} being a well known example. The restriction on resource states being stabiliser states is without loss of generality -- one can additionally allow resource states that are obtained by applying a $D$'th level gate to a stabiliser state, in close analogy with the paradigm of stabiliser quantum computing supplemented by magic state injection. 

In the context of MBQC, it is convenient to express the output of the computation in terms of a polynomial. Namely, every Boolean function $f\,:\zz_2^{n}\longrightarrow \zz_2$ is given by a polynomial from the ring $\zz_2[{x}_{1},\,\ldots ,\,{x}_{n}]$ in $n$ variables ${x}_{1},\,\ldots ,\,{x}_{n}\in \zz_2$. This representation is known as the algebraic normal form.

\subsection{Summary of results}\label{secSummaryOfResults}

In this paper, we study the computability of Boolean functions in non-adaptive $l2$-MBQC under various resource constraints. Below, we summarize our main results, and outline the structure of the rest of the paper.

\textbf{Contextuality.} We begin by recalling that any Boolean function $f: \zz_2^n \rightarrow \zz_2$ can be computed within non-adaptive, deterministic $l2$-MBQC \cite{HobanCampbell2011} (see Thm.~\ref{thm: ld-MBQC universal} in Sec.~\ref{secCompleteness}). In the classical setting, only linear functions are computable. Thus, nonlinearity indicates the presence of quantumness in the form of contextuality \cite{Raussendorf2013,FrembsRobertsBartlett2018}. The proof of this result relies on operators outside the Clifford group, i.e., outside the second level in the Clifford hierarchy; moreover, it generally requires an exponential (in the degree of $f$, expressed as a polynomial) number of qubits. This suggests a finer classification in terms of the Clifford hierarchy, which we present in Sec.~\ref{secstabiliserMBQC}, and the number of qubits (`qubit count') required to implement a given Boolean function in the non-adaptive case, presented in Sec.~\ref{SecGeneralResourceRequirements}.\footnote{In the adaptive case, one must also consider the time required to implement a given function, which we briefly address in App.~\ref{secAdaptivity}.} A natural starting point for these considerations is the stabiliser sub-theory, where resource states are stabiliser states and operators are restricted to the second level in the Clifford hierarchy.

\textbf{Stabiliser theory.} In the case of $l2$-MBQCs belonging to level-2 (i.e., stabiliser MBQCs), we show the computable functions (in non-adaptive MBQC) to be heavily restricted: in the deterministic case, only quadratic functions can be computed (see Thm.~\ref{thm:stabdeterm}), while in the probabilistic case, the success probability (see Def.~\ref{def: success probability}) to compute a given Boolean function is bounded by its non-quadraticity (see Def.~\ref{def: non-quadraticity}), i.e., the Hamming distance to the nearest quadratic function (see Thm.~\ref{thm: stabiliser success probability}). These results are presented in Sec.~\ref{sec: Stabiliser formalism and quadratic Boolean functions}.

Moreover, we find that in the deterministic case, a quadratic function can be implemented using $\mathrm{rk}(f)+1$ qubits only, where $\mathrm{rk}(f)$ denotes the rank of the matrix corresponding to the quadratic terms of $f$ (see Thm.~\ref{thm: resource of quadratic Boolean functions}).

\textbf{Clifford hierarchy.} Despite being non-classical (contextual), the above mentioned results (Thm.~\ref{thm:stabdeterm} and Thm.~\ref{thm: stabiliser success probability}) show that computation within non-adaptive stabiliser $l2$-MBQC is limited.\footnote{Note that for $d$ odd prime, the stabiliser formalism is in fact non-contextual \cite{Gross2006}. At least in this case, we can take it as the lowest level of such a hierarchy.} A natural way to extend the stabiliser case is via the Clifford hierarchy. In Sec.~\ref{sec: Beyond quadratic functions}, we consider what non-Clifford resources are required to implement a given Boolean function within $l2$-MBQC. The main result of this section, Thm.~\ref{thm:CliffHierNec} shows that operations from the $D$'th level in the Clifford hierarchy are required whenever a non-adaptive, deterministic $l2$-MBQC computes a polynomial of degree $D$.

\textbf{Qubit count.} While we can compute the minimal number of qubits in the stabiliser case, i.e., for quadratic functions (see Thm.~\ref{thm: resource of quadratic Boolean functions} in Sec.~\ref{sec: Quadratic Boolean functions}), generalizing this result beyond the stabiliser case is challenging. In Sec.~\ref{sec: Optimal representation of functions in non-adaptive l2-MBQC}, we consider an approach based on GHZ states, which (by the proof of Thm.~\ref{thm: ld-MBQC universal}) provide a universal resource for function computation in non-adaptive, deterministic $l2$-MBQC.\footnote{Note however, that GHZ states are not universal for MBQC in general.} We characterise the number of qubits required to compute an arbitrary Boolean function in terms of the minimal number of Fourier components (see Thm.~\ref{thm: zero term reduction}). Similar optimisation problems arise in circuit synthesis \cite{AmyMosca2016,seroussi1983maximum,heyfron2018efficient,Kissenger,heyfron2019quantum}.

In addition, we employ the discrete Fourier transform to obtain upper bounds on the qubit count for certain highly symmetric functions, which turn out to be optimal in some cases, e.g. for $\delta$-functions Cor.~\ref{cor: optimality delta-function}. As an immediate consequence, we conclude that the number of qubits required to implement a Boolean function $f$ in non-adaptive, deterministic $l2$-MBQC is far from monotonic in the degree of $f$ (see Cor.~\ref{cor: mismatch degree vs qubit count}), thus further hinting at a rich substructure of contextuality beyond the results in Refs.~\cite{raussendorf2016cohomological,FrembsRobertsBartlett2018}. 

Finally, we discuss possible avenues towards related and future research in Sec.~\ref{sec: discussion}.

\section{Every Boolean function has a representation as contextual MBQC}\label{secCompleteness}

In this section we prove Theorem~\ref{thm: ld-MBQC universal}, that non-adaptive $l2$-MBQC is complete. That is, for any function $f:\zz_2^n \rightarrow \zz_2$ there exists an $l2$-MBQC with output function $o(\mathbf{i})=f(\mathbf{i})$ for all inputs $\mathbf{i} \in \zz_2^n$. This is in sharp contrast to the classical regime, which is restricted to linearity---nonlinear computation is an indicator of quantum contextuality \cite{Raussendorf2013,FrembsRobertsBartlett2018}. The proof strategy is to first construct $l2$-MBQCs that compute the $n$-dimensional $\delta$-function. Linearly composing the output of many such parallel $l2$-MBQCs can then be used to compute any function. In fact, our proof is easily generalised to qudits of prime dimension (see App.~\ref{sec: Proof of lm: qudit delta function}).\\

We begin by defining the resource state and the measurement operators relevant for this construction. We take the resource state to be given by the $N$-qubit GHZ state 
\begin{equation}\label{eq: GHZ resource state}
    |\psi\rangle = \frac{1}{\sqrt{2}} (|0\rangle^{\otimes N} + |1\rangle^{\otimes N})\; .
\end{equation}
This is a mild restriction, since the GHZ state in Eq.~(\ref{eq: GHZ resource state}) will prove to be a universal resource for non-adaptive, deterministic $l2$-MBQC in Thm.~\ref{thm: ld-MBQC universal} below (see also \cite{HobanCampbell2011,WernerWolf2001}). More generally, in Sec.~\ref{secstabiliserMBQC} we will define a hierarchy for $l2$-MBQC by restricting the allowed operations to certain levels in the Clifford hierarchy and the resource state to a stabiliser state (see Def.~\ref{defLevelDMBQC}). Note also that the GHZ state is a stabiliser state. Finally, in Sec.~\ref{SecGeneralResourceRequirements} we will analyse the qubit count for $l2$-MBQC with a GHZ resource state.

Next, recall from Def.~\ref{defldMBQC} that each party performs one of two measurements $M_k(c_k)$ determined by a single input $c_k \in \zz_2$. Moreover, we require that $M_k$ has (non-degenerate) eigenvalues $(-1)^q$, $q \in \zz_2$, i.e., $M_k^2 = 1$. We define the following canonical measurement operators
\begin{equation}\label{eq: l2-MBQC operators X(f)}
    X(\theta)|q\rangle = \theta^{1-2q}|q\oplus 1\rangle = e^{i\pi(1-2q)\vartheta}|q\oplus 1\rangle\; ,
\end{equation}
where $\theta = e^{i\pi \vartheta}$. 
In matrix (gate) representation, these operators take the form
\begin{equation}\label{eq: l2-MBQC operators X(theta,f)}
    X(\theta) = \left( \begin{array}{cc}
       0 & \theta^*  \\
    \theta & 0  
    \end{array} \right)\; .
\end{equation}

The inputs $c_k$ to the measurement devices thus specify $M_k(c_k) = X_k(\theta(c_k))$ and are themselves determined in a linear way from the computational input $\mathbf{i}\in \zz_2^n$ and other measurement outcomes $m_k \in \zz_2$ according to the general setup in Def.~\ref{defldMBQC}. (Note that in the non-adaptive case, $c_k = c_k(\mathbf{i})$ is a linear functions of the inputs only.)

The output function of the $l2$-MBQC, $o(\mathbf{i}) = \oplus_{k=1}^N m_k$, arises as the parity of the individual measurement outcomes on local qubits. The resource state $|\psi\rangle$ is a $+1$-parity eigenstate of the operator $\otimes_{k=1}^N X_k(0)$. On the other hand, we can easily construct operators for which $|\psi\rangle$ is a $(-1)$-parity eigenstate. For instance, consider the prototypical Anders-Browne $3$-qubit example, where $M_k(0) = X_k(0) = X_k$ and $M_k(1) = X_k(\sqrt{-1}) = Y_k$.
Note that this choice of local measurements solves the following set of linear equations $\sum_{k=1}^3 c_k(i_1,i_2) \cdot \vartheta_k = o(i_1,i_2)$, where $\theta_k = e^{i\pi \vartheta_k}$, $\vartheta_k = \frac{1}{2}$, and $c_1(i_1,i_2) = i_1$, $c_2(i_1,i_2) = i_2$, $c_3(i_1,i_2) = i_1 \oplus i_2$, and $o(i_1,i_2) = i_1i_2 \oplus i_1 \oplus i_2$.\\

In fact, this example is representative of the general case. More precisely, for deterministic $l2$-MBQC the computation can be expressed in terms of the phase parameters in the local measurement operators of Eq.~(\ref{eq: l2-MBQC operators X(f)}).

\begin{theorem}\label{thm: phase relations ld-MBQC}
	In every non-adaptive, deterministic $l2$-MBQC with a GHZ resource state, the output function $o: \zz_2^n \rightarrow \zz_2$ arises from the phase relations between local measurement operators in Eq.~(\ref{eq: l2-MBQC operators X(f)}),
	\begin{equation}\label{eq: phase relations}
	    o(\mathbf{i}) = \sum_{k=1}^N c_k(\mathbf{i})\vartheta_k \pmod 2 \quad \forall \mathbf{i} \in \zz_2^n\; .
	\end{equation}
\end{theorem}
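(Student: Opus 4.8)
The plan is to compute the output statistics directly from the GHZ state and then impose determinism. Since the MBQC is non-adaptive, each setting $c_k=c_k(\mathbf{i})$ is a fixed linear function of $\mathbf{i}$, so for a given input the $N$ parties carry out a genuine product measurement of the mutually commuting observables $X_k(\theta(c_k(\mathbf{i})))$, each with non-degenerate eigenvalues $(-1)^{m_k}$. Writing $\Pi^{(k)}_{m_k}$ for the one-qubit eigenprojectors and using $X_k(\theta)=\Pi^{(k)}_0-\Pi^{(k)}_1$, $\mathbb{1}=\Pi^{(k)}_0+\Pi^{(k)}_1$, one gets the identity $\sum_{\mathbf m:\,\bigoplus_k m_k=b}\bigotimes_k\Pi^{(k)}_{m_k}=\tfrac12\big(\mathbb{1}+(-1)^b\bigotimes_k X_k(\theta(c_k(\mathbf{i})))\big)$. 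Hence the distribution of $o(\mathbf{i})=\bigoplus_k m_k$ is governed entirely by the GHZ expectation value of the single product observable $\bigotimes_k X_k(\theta(c_k(\mathbf{i})))$.

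Next I would evaluate that expectation value. With $\theta(c_k(\mathbf{i}))=e^{i\pi c_k(\mathbf{i})\vartheta_k}$ (recall the fiducial $M_k(0)=X_k$ has vanishing phase), the relations $X(\theta)\ket{0}=\theta\ket{1}$, $X(\theta)\ket{1}=\theta^*\ket{0}$ show that $\bigotimes_k X_k(\theta(c_k(\mathbf{i})))$ maps $\ket{0}^{\otimes N}\mapsto e^{i\pi\Phi(\mathbf{i})}\ket{1}^{\otimes N}$ and $\ket{1}^{\otimes N}\mapsto e^{-i\pi\Phi(\mathbf{i})}\ket{0}^{\otimes N}$, where $\Phi(\mathbf{i}):=\sum_{k=1}^N c_k(\mathbf{i})\vartheta_k$. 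Therefore $\bra{\psi}\bigotimes_k X_k(\theta(c_k(\mathbf{i})))\ket{\psi}=\cos(\pi\Phi(\mathbf{i}))$, and combining with the identity above, $\mathrm{Prob}[\bigoplus_k m_k=0\mid\mathbf{i}]=\cos^2(\tfrac\pi2\Phi(\mathbf{i}))$ and $\mathrm{Prob}[\bigoplus_k m_k=1\mid\mathbf{i}]=\sin^2(\tfrac\pi2\Phi(\mathbf{i}))$.

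Finally I would use determinism. For a deterministic $l2$-MBQC the output $o(\mathbf{i})=\bigoplus_k m_k$ must be a definite bit for every $\mathbf{i}$, which forces $\cos^2(\tfrac\pi2\Phi(\mathbf{i}))\in\{0,1\}$, i.e.\ $\Phi(\mathbf{i})\in\zz$. The probabilities then collapse to $o(\mathbf{i})=0$ precisely when $\Phi(\mathbf{i})$ is even and $o(\mathbf{i})=1$ precisely when $\Phi(\mathbf{i})$ is odd, that is $o(\mathbf{i})=\Phi(\mathbf{i})=\sum_{k=1}^N c_k(\mathbf{i})\vartheta_k \pmod 2$, which is the claim. (In this section $o=\bigoplus_k m_k$ carries no additive constant, so none appears in the formula.)

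I expect the first step to require the most care: it must be made precise that the fine-grained joint product measurement of the individual $X_k(\theta(c_k))$ has a parity statistic determined solely by the coarse-grained product observable $\bigotimes_k X_k(\theta(c_k))$ — this is exactly where non-adaptivity (so that the settings, hence the measured observables, are fixed functions of $\mathbf{i}$ and the measurement is a true product measurement) and the non-degeneracy of the $\pm1$ spectra enter. After that, the argument is the short GHZ computation above together with the observation that determinism is equivalent to the phase sum $\Phi(\mathbf{i})$ being an integer, in which case its parity is exactly the output.
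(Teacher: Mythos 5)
Your calculation is correct as far as it goes, and for the case it covers it is arguably cleaner than the paper's: you reduce the parity statistics to the single expectation value $\bra{\psi}\bigotimes_k X_k(\theta(c_k(\mathbf{i})))\ket{\psi}=\cos(\pi\Phi(\mathbf{i}))$ via the projector identity, whereas the paper re-expands the GHZ state in the local measurement eigenbases and reads off that it is a parity eigenstate; both give $\Phi(\mathbf{i})\in\zz$ and $o=\Phi \bmod 2$ under determinism. However, there is a genuine gap: you assume from the outset that every local measurement is of the equatorial form $M_k(c_k)=X_k(\theta(c_k))$ with $M_k(0)=X_k$. The theorem quantifies over \emph{every} non-adaptive, deterministic $l2$-MBQC with a GHZ resource state, and Definition~\ref{defldMBQC} only requires the $M_k(c_k)$ to be binary observables with non-degenerate $\pm 1$ spectrum --- a priori arbitrary points on the Bloch sphere, not necessarily equatorial. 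If some $M_k(c_k)$ had a nontrivial $Z$-component, the phase parameters $\vartheta_k$ in Eq.~(\ref{eq: phase relations}) would not even be defined, so part of the claim is precisely that this cannot happen.

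The paper's proof spends its entire second half closing exactly this hole: it parametrises general local bases by $\ket{m}_{\varphi,\vartheta}$ with an extra polar angle $\varphi$, writes out the $2^{N}/2$ vanishing-amplitude constraints ${}_{\varphi,\vartheta}\langle\mathbf{m}\vert\psi\rangle=0$ for wrong-parity outcome strings, and shows that for $N>2$ these force $\varphi_k=\pi/4$ for every $k$ (with computational-basis measurements dismissed separately, and $N=2$ a genuine exception), i.e.\ that determinism on the GHZ state \emph{forces} the measurements to be of the form $X(\theta)$. Your proposal would be complete if you either added such a necessity argument or explicitly weakened the statement to MBQCs whose measurements are already given by Eq.~(\ref{eq: l2-MBQC operators X(f)}); as written, it proves only the sufficiency half of the theorem.
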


\begin{proof}
	We give the proof in App.~\ref{app: proof of Thm 1}.
\end{proof}

Finding an implementation to compute $o$ as a $l2$-MBQC thus reduces to finding a set of (linear) functions $c_k$, which satisfies the required parity conditions in Eq.~(\ref{eq: phase relations}). 

We first construct an $l2$-MBQC that computes the $n$-dimensional $\delta$-function $\delta : \zz_2^n \rightarrow \zz_2$ defined by
\begin{align}\label{eq: n-dimensional delta function}
    \delta(\mathbf{i}) &:= \begin{cases} 1 &\mathrm{if\ } \mathbf{i}=\mathbf{0} \\ 0 &\mathrm{elsewhere} \end{cases}\; .
\end{align}

We remark that the $\delta$ function is an important and ubiquitous function -- up to linear pre- and post-processing it is equivalent to the $n$-bit AND function. We have the following lemma.

\begin{lemma}\label{lem: qubit delta function}
	The $n$-dimensional $\delta$-function can be implemented on $N = 2^n-1$ qubits within non-adaptive, deterministic $l2$-MBQC.
\end{lemma}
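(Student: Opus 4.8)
The plan is to use Theorem~\ref{thm: phase relations ld-MBQC}, which reduces the problem to finding linear functions $c_k : \zz_2^n \to \zz_2$ and phase parameters $\vartheta_k$ such that $\sum_{k=1}^N c_k(\mathbf{i})\vartheta_k \equiv \delta(\mathbf{i}) \pmod 2$ for all $\mathbf{i} \in \zz_2^n$. Since the $N = 2^n - 1$ nonzero vectors of $\zz_2^n$ are in bijection with the nonzero linear functionals on $\zz_2^n$, I would index the parties by the nonzero vectors $\mathbf{a} \in \zz_2^n \setminus \{\mathbf{0}\}$ and set $c_{\mathbf{a}}(\mathbf{i}) = \mathbf{a} \cdot \mathbf{i} = \sum_{j} a_j i_j \pmod 2$, i.e.\ the linear function with coefficient vector $\mathbf{a}$. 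The remaining freedom is the choice of the phase parameters $\vartheta_{\mathbf{a}} \in [0,2)$; I would look for a uniform choice $\vartheta_{\mathbf{a}} = \vartheta$ depending only on $n$, so that the constraint becomes $\vartheta \cdot \#\{\mathbf{a} \neq \mathbf{0} : \mathbf{a}\cdot\mathbf{i} = 1\} \equiv \delta(\mathbf{i}) \pmod 2$.

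The key step is then a counting identity: for fixed nonzero $\mathbf{i}$, the number of nonzero $\mathbf{a}$ with $\mathbf{a} \cdot \mathbf{i} = 1$ equals $2^{n-1}$ (half of all $2^n$ vectors pair to $1$ with any fixed nonzero $\mathbf{i}$, and none of these is $\mathbf{0}$), whereas for $\mathbf{i} = \mathbf{0}$ this count is $0$. So I need $\vartheta$ with $2^{n-1}\vartheta \equiv 0$ and (vacuously, since the sum over an empty set is $0$) I need the $\mathbf{i}=\mathbf{0}$ case to give $1$ — which fails with this naive ansatz, since an empty sum is $0 \neq 1 = \delta(\mathbf{0})$. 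The fix is to allow the post-processing offset $m_0$ from Def.~\ref{defldMBQC}, or equivalently to absorb a constant: I would instead demand $m_0 + \sum_{\mathbf{a}} c_{\mathbf{a}}(\mathbf{i})\vartheta_{\mathbf{a}} \equiv \delta(\mathbf{i})$. With a constant shift $\delta(\mathbf{i}) = 1 - (1 - \delta(\mathbf{i}))$ and noting $1 - \delta$ is the indicator of $\mathbf{i} \neq \mathbf{0}$, I set $m_0 = 1$ and require $\sum_{\mathbf{a}} (\mathbf{a}\cdot\mathbf{i})\vartheta \equiv 1 - \delta(\mathbf{i}) \pmod 2$; using the count above, for $\mathbf{i}\neq\mathbf{0}$ this reads $2^{n-1}\vartheta \equiv 1$ and for $\mathbf{i}=\mathbf{0}$ it reads $0 \equiv 0$. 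Hence $\vartheta = 2^{-(n-1)} = 2^{2-n}$ works, i.e.\ each party measures $X_{\mathbf{a}}(e^{i\pi 2^{2-n}})$ or $X_{\mathbf{a}}(1) = X$ according to whether $\mathbf{a}\cdot\mathbf{i} = 1$ or $0$.

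Finally I would verify the construction meets all the constraints of Definition~\ref{defldMBQC}: the map $\mathbf{i} \mapsto \mathbf{c} = (\mathbf{a}\cdot\mathbf{i})_{\mathbf{a}}$ is linear, realised by the matrix $P$ whose rows are the nonzero vectors of $\zz_2^n$; the setting $T = 0$, so the MBQC is non-adaptive; the resource is the $N$-qubit GHZ state, a stabiliser state; and the output is the parity of the $m_k$ plus $m_0 = 1$. Determinism follows because, by Theorem~\ref{thm: phase relations ld-MBQC}, once the phase relation $o(\mathbf{i}) = m_0 + \sum_k c_k(\mathbf{i})\vartheta_k \pmod 2$ holds the output is a fixed function of $\mathbf{i}$; I would double-check this by noting that the GHZ state is a simultaneous eigenstate of products of the $X_{\mathbf{a}}(\theta)$ operators with eigenvalue determined precisely by the total phase. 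I expect the main obstacle to be bookkeeping rather than conceptual: getting the parity/offset accounting exactly right (the empty-sum subtlety at $\mathbf{i}=\mathbf{0}$, and keeping track of the factor of $2$ in $\vartheta = 2^{2-n}$ versus $2^{1-n}$), and confirming that Theorem~\ref{thm: phase relations ld-MBQC} as stated already incorporates the $m_0$ offset or else folding $m_0$ into one extra party in a way that does not change the qubit count $N = 2^n - 1$.
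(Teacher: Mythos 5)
Your proposal is correct and follows essentially the same construction as the paper's proof in App.~B: index the $2^n-1$ qubits of a GHZ state by the nonzero vectors $\mathbf{a}\in\zz_2^n$, take $c_{\mathbf{a}}(\mathbf{i})=\mathbf{a}\cdot\mathbf{i}$, use a uniform phase $\vartheta=2^{-(n-1)}$ so that the count $2^{n-1}$ of contributing parties flips the parity exactly for $\mathbf{i}\neq\mathbf{0}$, and fix the offset with $m_0=1$ (the paper likewise notes the scheme natively computes $\delta(\mathbf{i})+1$ and corrects by post-processing). Your counting argument (half of all $2^n$ vectors satisfy $\mathbf{a}\cdot\mathbf{i}=1$ for nonzero $\mathbf{i}$) is a cleaner route to the same count $2^{n-1}$ that the paper obtains via binomial sums, and the only slip is the stray ``$2^{2-n}$'' where $2^{-(n-1)}=2^{1-n}$ is meant, which you flag yourself.
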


\begin{proof}
	We prove this in App.~\ref{app: proof of Lm 1} by giving an explicit measurement scheme acting on a GHZ state.
\end{proof}

We remark that a similar result has previously been obtained in Ref.~\cite{HobanCampbell2011}. Here, we gave a constructive proof in terms of the operators in Eq.~(\ref{eq: l2-MBQC operators X(theta,f)}). Moreover, our technique generalises to qudits of prime dimension (for details, see App.~\ref{sec: Proof of lm: qudit delta function}).

In particular, we note that Lm.~\ref{lem: qubit delta function} recovers the main example of Anders and Browne~\cite{AndersBrowne2009} (up to linear side-processing) for $n=2$ with $\theta_k = e^{i\pi \vartheta_k}, \vartheta_k = \frac{1}{2}$, such that $M(0) = X$ and $M(1) = Y$.\\

The $n$-dimensional $\delta$-function along with linear side-processing is sufficient to allow for the evaluation of arbitrary functions. In particular, one can decompose any function into a linear combination of delta functions, each of which admits an $l2$-MBQC. The outputs of these $l2$-MBQCs can be linearly combined to give the desired output, as in the following theorem.

\begin{theorem}\label{thm: ld-MBQC universal}
	For any Boolean function $f: \zz_2^n \rightarrow \zz_2$ there exists a non-adaptive $l2$-MBQC that deterministically evaluates it.
\end{theorem}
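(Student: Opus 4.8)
The plan is to reduce the computation of an arbitrary Boolean function to the $\delta$-function construction already established in Lm.~\ref{lem: qubit delta function}, exploiting the linearity of the side-processing to glue together several independent $l2$-MBQC blocks. First I would write $f$ in terms of its truth table: for every point $\mathbf{a} \in \zz_2^n$ on which $f$ takes the value $1$, we have a shifted $\delta$-function $\delta_{\mathbf{a}}(\mathbf{i}) := \delta(\mathbf{i} \oplus \mathbf{a})$, and clearly
\begin{equation*}
  f(\mathbf{i}) = \sum_{\mathbf{a} \,:\, f(\mathbf{a}) = 1} \delta_{\mathbf{a}}(\mathbf{i}) \pmod 2 \; ,
\end{equation*}
since for any fixed input $\mathbf{i}$ exactly one term of the (full) sum over all $\mathbf{a}$ is nonzero, namely $\mathbf{a} = \mathbf{i}$, so the restricted sum equals $f(\mathbf{i})$. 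The shift $\mathbf{i} \mapsto \mathbf{i} \oplus \mathbf{a}$ is an affine (hence $l2$-admissible) reparametrisation of the inputs, so by Lm.~\ref{lem: qubit delta function} each $\delta_{\mathbf{a}}$ is computed by a non-adaptive, deterministic $l2$-MBQC on its own block of $2^n - 1$ qubits with a GHZ resource state, the pre-processing matrix $P$ simply being composed with the translation by $\mathbf{a}$ (which only affects the constant offset $m_0$ in the output and the affine part of the measurement-setting assignment).

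Next I would assemble the global $l2$-MBQC. Take the resource state to be the tensor product of the individual GHZ states, one per $\mathbf{a}$ with $f(\mathbf{a}) = 1$; the total qubit number is $|\{\mathbf{a} : f(\mathbf{a})=1\}| \cdot (2^n-1) \le (2^n-1)2^n$, which is finite, as required. The control computer feeds the (affinely shifted) input $\mathbf{i}$ into each block's pre-processing independently — so the global matrices $T = 0$ and $P$ are block-diagonal in the obvious way, preserving non-adaptivity — and defines the global output as the parity of all local measurement outcomes across all blocks. Because parity is additive, this global parity is exactly $\sum_{\mathbf{a} : f(\mathbf{a})=1} o_{\mathbf{a}}(\mathbf{i}) = \sum_{\mathbf{a} : f(\mathbf{a})=1} \delta_{\mathbf{a}}(\mathbf{i}) = f(\mathbf{i}) \pmod 2$, using Thm.~\ref{thm: phase relations ld-MBQC} (or just the defining property of each block) in the middle step. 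One should double-check that all the conditions of Def.~\ref{defldMBQC} are met for the composite object: the output is linear in $\mathbf{m}$ (it is a sum of coordinates plus a constant), the measurement settings are linear in $\mathbf{i}$ (block-diagonal $P$, and no $T$), and each $M_k(c_k)$ still has eigenvalues $(-1)^{m_k}$ since we only reuse the operators of Eq.~(\ref{eq: l2-MBQC operators X(theta,f)}) from the individual constructions.

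I do not expect a serious obstacle here: the proof is essentially a bookkeeping exercise showing that the linear side-processing is rich enough to (i) implement affine input shifts and (ii) take parities of sub-computations, both of which are immediate from the mod-$2$ linear structure of Def.~\ref{defldMBQC}. The only mild subtlety worth spelling out is that combining blocks in parallel keeps the MBQC non-adaptive — this is clear because $T$ stays zero — and that the number of qubits remains finite (though possibly exponential in $n$), which is all Thm.~\ref{thm: ld-MBQC universal} claims; the sharper qubit-count questions are deferred to Sec.~\ref{SecGeneralResourceRequirements}. If anything requires care it is merely making the affine-shift step precise within the strictly linear (not affine) framework of Eq.~(\ref{eq: MBQC linear pre-processing}), which is handled by absorbing the constant contribution of $\mathbf{a}$ into the offset $m_0$ of the output and into fixed flips of the measurement settings; I would state this explicitly to avoid any ambiguity.
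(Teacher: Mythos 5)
Your proposal is correct and takes essentially the same route as the paper, which likewise decomposes $f$ as a mod-2 sum of shifted $\delta$-functions $f(\mathbf{i}) = \sum_{\mathbf{j}} f_{\mathbf{j}}\,\delta(\mathbf{i}-\mathbf{j})$ and invokes Lm.~\ref{lem: qubit delta function} together with parallel composition and parity post-processing. The only difference is that you spell out the bookkeeping (block-diagonal $P$, $T=0$, absorbing the affine shift into the measurement-setting labels and $m_0$) that the paper's one-line proof leaves implicit; this is a valid and welcome elaboration, not a deviation.
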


\begin{proof}
	This follows directly from Lm.~\ref{lem: qubit delta function} and the fact that every function can be written as a sum of $\delta$-functions $f(\mathbf{i}) = \sum_{\mathbf{j} \in \zz_2^n} f_\mathbf{j} \delta(\mathbf{i} - \mathbf{j})$, $f_\mathbf{j} \in \zz_2$ for all inputs $\mathbf{i} \in \zz_2^n$,.
\end{proof}

The number of qubits in the implementation of the $\delta$-function is $N=2^{n}-1$, which is optimal (see Ref.~\cite{HobanCampbell2011}).
We explore the question of optimality for arbitrary Boolean functions in more detail in Sec.~\ref{SecGeneralResourceRequirements}, as well as other resource aspects related with $l2$-MBQC.

\section{Boolean functions as MBQC - (dependence on) Clifford hierarchy}\label{secstabiliserMBQC}

While any Boolean function can be computed using $l2$-MBQC, the type of measurements required above depended on the complexity (e.g. the degree) of the polynomial representing the Boolean function. In this section, we study the implementation of Boolean functions in $l2$-MBQC restricted to the stabiliser subtheory where only Pauli operators can be measured. In the deterministic, non-adaptive case such $l2$-MBQCs admit a simple description, namely the entire computation can be expressed as a set of eigenvalue equations that relate the inputs and outputs of the computation as follows:
\begin{equation}\label{eq: output function in ld-MBQC}
    \bigotimes_{k=1}^N U_k(c_k(\mathbf{i})) M_k(0) U_k^{-1}(c_k(\mathbf{i})) |\psi\rangle = \omega^{o(\mathbf{i})}|\psi\rangle \quad \forall \mathbf{i} \in \zz_2^n\; ,
\end{equation}
where $\omega = e^{\frac{2 \pi i}{2}} = -1$ is a square root of unity. 
In Sec.~\ref{sec: Stabiliser formalism and quadratic Boolean functions} we prove that any quadratic Boolean function can be computed within the stabiliser formalism. Conversely, any non-quadratic function requires gates from higher levels in the Clifford hierarchy. In fact, the complexity of a Boolean function in $l2$-MBQC relates to local phases via the discrete Fourier transform (see Sec.~\ref{sec: Polynomial vs Z_2-linear function representation}), which in turn puts a bound on the necessary level in the Clifford hierarchy. We make this precise in Sec.~\ref{sec: Necessity of non-Clifford operations}.

\subsection{Quadratic Boolean functions and stabiliser formalism}\label{sec: Stabiliser formalism and quadratic Boolean functions}

The qubit stabiliser formalism is contextual. For instance, the prototypical Anders-Browne NAND-gate computes a quadratic Boolean function. It is natural to ask whether stabiliser $l2$-MBQC can realise any polynomial $f: \zz_2^n \rightarrow \zz_2$. 
However, this is not the case. In fact, non-adaptive, deterministic stabiliser MBQC is limited to quadratic Boolean functions.

\begin{theorem}\label{thm:stabdeterm}
	For a non-adaptive, deterministic, level-2 (i.e., stabiliser) $l2$-MBQC only quadratic functions can be computed.
\end{theorem}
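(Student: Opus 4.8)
The plan is to combine the phase-relation characterization of Theorem~\ref{thm: phase relations ld-MBQC} with the structural constraint that Clifford operations act on Pauli measurement settings only by conjugation within the Pauli group. The key observation is that in a level-2 $l2$-MBQC, each local measurement operator $M_k(c_k) = U_k(c_k) M_k(0) U_k^{-1}(c_k)$ with $U_k(c_k) \in \mathcal{C}_1^2$ is itself a (Hermitian, squaring to identity) Pauli operator, since the Clifford group normalizes $\mathcal{P}_1$. Up to the central phase, a single-qubit Pauli measurement with $\pm1$ eigenvalues is one of $\pm X, \pm Y, \pm Z$, and for the GHZ state the only settings that can contribute nontrivially to the parity eigenvalue equation are those in the $X$--$Y$ plane, i.e.\ of the form $X_k(\theta_k)$ with $\theta_k \in \{1, i\}$, corresponding to $\vartheta_k \in \{0, \tfrac12\}$ (settings involving $Z$ annihilate the coherence and force the output to be input-independent, hence linear). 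So the phase parameters $\vartheta_k(c_k)$ are constrained to take values in $\tfrac12\zz_2$ rather than being arbitrary.

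**Next I would** substitute this into Eq.~(\ref{eq: phase relations}). Since $c_k(\mathbf{i}) = \sum_j P_{kj} i_j + (\text{const})$ is an \emph{affine} function of $\mathbf{i}$ over $\zz_2$, and since $\vartheta_k$ depends on $c_k$ only through whether the setting is $X$-like or $Y$-like, the phase contribution of party $k$ has the form $c_k(\mathbf{i}) \vartheta_k(c_k(\mathbf{i})) \pmod 2$ where $\vartheta_k(c_k) \in \{0, \tfrac12\}$. Writing $\vartheta_k(c_k) = \tfrac12(a_k + b_k c_k)$ for suitable bits $a_k, b_k \in \zz_2$ (the two possible values of a bit-valued function of the single bit $c_k$), the term becomes $\tfrac12 c_k(\mathbf{i})(a_k + b_k c_k(\mathbf{i})) \pmod 2$, i.e.\ $\tfrac12(a_k c_k(\mathbf{i}) + b_k c_k(\mathbf{i})^2) \pmod 2$. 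Because $c_k(\mathbf{i})$ is an affine form in the $\zz_2$-variables $i_j$, both $c_k(\mathbf{i})$ and $c_k(\mathbf{i})^2 = c_k(\mathbf{i})$ are degree-$\le 1$ in the $i_j$ as $\zz_2$-valued expressions — but the factor $\tfrac12$ together with reduction mod $2$ is exactly the operation that turns a product of two $\zz_2$-affine forms into a degree-$\le 2$ Boolean function: the identity $\tfrac12(u + v - (u \oplus v)) = uv$ over the integers (with $u,v \in \{0,1\}$ lifted) shows that summing such half-integer phases reproduces precisely the quadratic forms. Collecting all $N$ parties, $o(\mathbf{i}) = \sum_k \tfrac12(a_k c_k(\mathbf{i}) + b_k c_k(\mathbf{i})) \pmod 2$ expands, via these carry identities, into a Boolean polynomial of degree at most $2$ in $\mathbf{i}$.

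**The main obstacle** I anticipate is making the passage from "sum of half-integer-valued phases, each a function of an affine $\zz_2$-form" to "Boolean polynomial of degree $\le 2$" fully rigorous, since it mixes arithmetic mod $2$ (for the $c_k$) with arithmetic mod $2$ at the level of the $\vartheta$'s (effectively mod-$4$ arithmetic on the doubled quantities), and one must track carries carefully. The clean way to handle this is to lift: write $c_k(\mathbf{i}) \in \{0,1\} \subset \zz$ as an honest integer-valued function $\tilde c_k(\mathbf{i}) = \sum_{j} P_{kj} i_j + p_k - 2(\text{carries})$, note that $\sum_k \vartheta_k(\tilde c_k) \pmod 2$ only sees $\sum_k \tilde c_k(\mathbf{i}) \cdot (\text{bit}) \pmod 4$, and then invoke the standard fact that the mod-$4$ reduction of an integer linear combination of products of $\zz_2$-linear forms is a Boolean polynomial whose degree is bounded by the number of factors — here two, because each $\vartheta_k$ is linear in its single argument $c_k$ which is itself linear in $\mathbf{i}$. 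This is exactly the mechanism behind the Anders--Browne example ($o = i_1 i_2 \oplus i_1 \oplus i_2$ from three parties each contributing $\vartheta = \tfrac12$). I would also need the converse-free direction only, so the theorem as stated requires just this upper bound; the realizability of \emph{every} quadratic function (the matching lower bound, Thm.~\ref{thm: resource of quadratic Boolean functions}) is handled separately.
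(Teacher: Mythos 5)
There is a genuine gap: your argument establishes the claim only for GHZ resource states, whereas Thm.~\ref{thm:stabdeterm} is a statement about \emph{arbitrary} stabiliser resource states (see Def.~\ref{defLevelDMBQC}, which only requires $\ket{\psi}$ to be a stabiliser state). Your entire derivation is routed through Thm.~\ref{thm: phase relations ld-MBQC}, which is proved only for the GHZ state, and a general stabiliser state (e.g.\ a cluster state, or a product of smaller entangled blocks) is not locally Clifford equivalent to a GHZ state, so you cannot reduce to that case by absorbing local Cliffords into the measurement settings. The paper's own proof avoids this by working directly with the stabiliser group $\mathcal{S}$ of an arbitrary $\ket{\psi}$: after normalising $M_k(0)=X_k$, $M_k(1)=Z_k$, determinism forces $(-1)^{f(\mathbf{x})}M(P\mathbf{x})M(\mathbf{0}) = (-1)^{f(\mathbf{x})}Q(P\mathbf{x}) \in \mathcal{S}$, and the composition law $Q(\mathbf{u})Q(\mathbf{v}) = (-1)^{\mathbf{u}\cdot\mathbf{v}}Q(\mathbf{u}\oplus\mathbf{v})$ together with group closure pins the output down to $f(\mathbf{x}) = \sum_i l_i x_i + \sum_{i<j}(\mathbf{p}_i\cdot\mathbf{p}_j)\,x_ix_j$, a quadratic form; this argument never needs to know which stabiliser state is being used. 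As a bonus, it produces the explicit condition $Q(f) = P^{\intercal}P \bmod 2$ that drives the qubit-count result in Thm.~\ref{thm: resource of quadratic Boolean functions}.

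For the GHZ-restricted case your mechanism is essentially sound and is, in effect, the $D=2$ specialisation of the paper's proof of Thm.~\ref{thm:CliffHierNec}: half-integer phases $\vartheta_k \in \tfrac12\zz$ multiplying $\zz_2$-linear forms $c_k(\mathbf{i})$, reduced mod $2$, can only generate monomials of degree at most $2$ because $\tfrac12(-2)^{W(\mathbf{b})-1} \equiv 0 \pmod 2$ for $W(\mathbf{b})\ge 3$ in Eq.~(\ref{eq: Z2-linear functions as monomials}). Two smaller points to tidy up if you pursue this route: (i) the level-$2$ constraint gives $\vartheta_k(c_k) \in \{0,\tfrac12,1,\tfrac32\}$ (the signs $\pm X,\pm Y$ are allowed), not just $\{0,\tfrac12\}$, though the extra integer parts only contribute linear terms; and (ii) the exclusion of $Z$-type settings should be cited from the proof of Thm.~\ref{thm: phase relations ld-MBQC} (which shows $\varphi_k = \pi/4$ for $N\ge 3$) rather than asserted informally. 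But neither of these repairs the restriction to GHZ, which is the essential missing piece.
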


\begin{proof}
    We give the proof in App.~\ref{sec: Proof of thm:stabdeterm}
\end{proof}

For the probabilistic case, we need two additional concepts: the \emph{success probability} for a MBQC and the \emph{non-quadraticity} of a Boolean function.

\begin{definition}[success probability]\label{def: success probability}
	Let $f: \zz_2^n \rightarrow \zz_2$ be a Boolean function, and let $A$ be a MBQC, which implements $f$ with probability $p(\mathbf{i})$ on inputs $\mathbf{i} \in \zz_2^n$. We define the average success probability by $P_{succ}=\sum_{\mathbf{i} \in \zz^n_2} p(\mathbf{i})/2^n$.
\end{definition}

Assume that a deterministic MBQC implements a Boolean function $g: \zz_2^n \rightarrow \zz_2$, then the success probability is $P_{succ}=1- d_H(f,g)/2^n$ where $d_H(f,g) := |\{\mathbf{i} \in \zz_2^n \mid f(\mathbf{i}) \neq g(\mathbf{i})\}|$ denotes the Hamming distance between $f$ and $g$. 
Clearly, $P_{succ}=1$ if and only if $f=g$. In order to compute the success probability for general functions, we measure how far it is from being quadratic (see e.g.~\cite{Kolokotronis2007}).

\begin{definition}[non-quadraticity]\label{def: non-quadraticity}
    Let $f: \zz_2^n \rightarrow \zz_2$ be a Boolean function. Then the non-quadraticity of $f$ is given by
    \begin{equation}
        \mathcal{NQ}(f) := \mathrm{min} \{ d_{H}(f,q) :  q: \zz_2^n \rightarrow \zz_2\ \mathrm{quadratic}\} .
    \end{equation}
\end{definition}

It then follows as a corollary of Thm.~\ref{thm:stabdeterm} that
\begin{corollary}
\label{cor: stabiliser success probability}
	Let $f: \zz_2^n \rightarrow \zz_2$ be an arbitrary Boolean function.  The maximum success probability of computing this function in non-adaptive, deterministic, level-$2$ (i.e., stabiliser) $l2$-MBQC is
	\begin{equation}
	P_{succ} \ =\ 1 - \frac{\mathcal{NQ}(f)}{2^n} .
	\end{equation}
\end{corollary}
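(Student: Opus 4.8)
The plan is to use Theorem~\ref{thm:stabdeterm} together with its converse to pin down exactly which Boolean functions a non-adaptive, deterministic, level-$2$ $l2$-MBQC can realise, and then to optimise the success probability over that class. First I would note that, by Definition~\ref{defldMBQC}, a \emph{deterministic} level-$2$ $l2$-MBQC outputs a fixed Boolean function $g:\zz_2^n\to\zz_2$, so on every input $\mathbf{i}$ the outcome probability $p(\mathbf{i})$ is $1$ if $g(\mathbf{i})=f(\mathbf{i})$ and $0$ otherwise; hence $P_{succ}=1-d_H(f,g)/2^n$, exactly as recorded in the text preceding Definition~\ref{def: non-quadraticity}. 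Theorem~\ref{thm:stabdeterm} then tells us that $g$ must be quadratic, so every attainable success probability for target $f$ is of the form $1-d_H(f,g)/2^n$ with $g$ quadratic.

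Second, I would supply the converse direction: every quadratic $g$ \emph{is} realised by some non-adaptive, deterministic, level-$2$ $l2$-MBQC. Writing $g$ in algebraic normal form, $g=\sum_{a<b}g_{ab}\,x_ax_b+\sum_a g_a x_a+g_0$, each quadratic monomial $x_ax_b$ is computed — up to a linear correction, which the linear side-processing absorbs at no cost — by a relabelled copy of the Anders--Browne three-qubit MBQC with $M_k(0)=X_k$ and $M_k(1)=Y_k$, which by the excerpt computes $i_1i_2\oplus i_1\oplus i_2$. Running the relevant monomial gadgets in parallel on a product-of-GHZ stabiliser resource state and taking the parity of their outputs — permitted by the linear post-processing of Definition~\ref{defldMBQC}, together with the freedom to choose $m_0$ — yields a non-adaptive, deterministic, level-$2$ $l2$-MBQC computing $g$. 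This is precisely the realisability statement announced at the start of Sec.~\ref{sec: Stabiliser formalism and quadratic Boolean functions}.

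Combining the two directions, the functions computable by non-adaptive, deterministic, level-$2$ $l2$-MBQC are \emph{exactly} the quadratic Boolean functions, so the set of attainable success probabilities for target $f$ is $\{\,1-d_H(f,g)/2^n : g\ \mathrm{quadratic}\,\}$. Since there are only finitely many Boolean functions on $n$ bits, this set has a maximum, attained at a quadratic $g$ minimising $d_H(f,g)$, which by Definition~\ref{def: non-quadraticity} equals $\mathcal{NQ}(f)$. Hence $P_{succ}=1-\mathcal{NQ}(f)/2^n$, as claimed.

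The only non-routine ingredient is the converse realisability step, and even that is immediate from the Anders--Browne gadget and closure of the framework under linear composition; all the real content sits in Theorem~\ref{thm:stabdeterm} itself. One point worth handling cleanly is that the hypothesis already fixes the MBQC to be deterministic, so there is no need here to argue that probabilistic level-$2$ strategies cannot do better — that comparison is the separate business of Theorem~\ref{thm: stabiliser success probability} and of the genuinely probabilistic analysis, not of this corollary.
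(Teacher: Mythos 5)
Your argument is correct, and its core is the same as the paper's: Thm.~\ref{thm:stabdeterm} forces the deterministically computed function to be quadratic, and the success probability $1-d_H(f,g)/2^n$ is then maximised by the quadratic $g$ closest to $f$ in Hamming distance, which is $\mathcal{NQ}(f)$ by Def.~\ref{def: non-quadraticity}. Where you add value is in insisting on the converse realisability step: the equality (rather than just the upper bound $P_{succ}\le 1-\mathcal{NQ}(f)/2^n$) does require that \emph{every} quadratic function be achievable, and the statement of Thm.~\ref{thm:stabdeterm} as printed only asserts the ``only quadratic'' direction. The paper in fact proves both directions inside App.~D (its stabiliser construction shows any quadratic $f$ with $Q(f)=P^TP$ is realisable, with the optimal qubit count later pinned down via Lempel's factorisation in Thm.~\ref{thm: resource of quadratic Boolean functions}), and the corollary silently leans on that. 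Your alternative converse --- parallel Anders--Browne gadgets on a product of GHZ states, one per quadratic monomial, combined by parity --- is valid and more elementary, at the price of being far from qubit-optimal. One small imprecision: the leftover linear terms $i_a\oplus i_b$ from each gadget are \emph{not} absorbed by the post-processing of Def.~\ref{defldMBQC}, which only allows $o=\sum_k m_k+m_0$ with no direct dependence on $\mathbf{i}$; you need (at most) one additional ancilla qubit prepared in $\ket{+}$ and measured in $\pm X$ according to the linear function $\sum_{(a,b)}(i_a\oplus i_b)$ of the input --- a Clifford-conjugated Pauli measurement with deterministic outcome --- to cancel them. With that repair the construction goes through, and your closing remark is also right: determinism is part of the hypothesis here, so the comparison with probabilistic strategies belongs to Thm.~\ref{thm: stabiliser success probability}, not to this corollary.
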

\begin{proof}
    The proof of this simply follows by noting that Thm.~\ref{thm:stabdeterm} entails the MBQC must compute some quadratic function $q$ with success probability $1- d_H(f,q)$. The maximum success probability is achieved by choosing $q$ to minimise the Hamming distance $d_H$, which is the non-quadraticity of $f$.
\end{proof}

In general, a non-adaptive, level-$2$ $l2$-MBQC does not yield deterministic outputs. Still, Cor.~\ref{cor: stabiliser success probability} remains true also in the probabilistic case. In other words, when restricted to stabiliser measurements (and stabiliser states), the best approximation to a given Boolean function is always achieved with a deterministic $l2$-MBQC.

\begin{theorem}\label{thm: stabiliser success probability}
	Let $f: \zz_2^n \rightarrow \zz_2$ be an arbitrary Boolean function. The maximum success probability of computing $f$ in (probabilistic) non-adaptive, level-$2$ (i.e., stabiliser) $l2$-MBQC is
	\begin{equation}
	P_{succ} \ =\ 1 - \frac{\mathcal{NQ}(f)}{2^n} .
	\end{equation}
\end{theorem}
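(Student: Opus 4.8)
The plan is to reduce the probabilistic claim to the deterministic statement of Thm.~\ref{thm:stabdeterm} (and its Cor.~\ref{cor: stabiliser success probability}): first determine the exact output statistics of a non-adaptive level-$2$ $l2$-MBQC, and then argue that the genuinely random part of these statistics can never beat the best quadratic approximation of $f$. The lower bound $P_{succ}\ge 1-\mathcal{NQ}(f)/2^n$ is immediate, since deterministic level-$2$ $l2$-MBQCs realise every quadratic function, so taking the quadratic nearest to $f$ already attains it --- this is precisely Cor.~\ref{cor: stabiliser success probability}. It remains to prove the matching upper bound for an \emph{arbitrary} (probabilistic) non-adaptive level-$2$ $l2$-MBQC $A$.

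\textbf{Step 1: output statistics.} In such an $A$ each party $k$ measures a single-qubit Pauli $M_k(c_k(\mathbf{i}))$ on its own qubit (as $M_k(0)\in\mathcal{P}_1$ and $U_k(c_k)\in\mathcal{C}_1^2$), with $c_k(\mathbf{i})$ a $\zz_2$-linear function of $\mathbf{i}$; operators of different parties commute, so the parity $\bigoplus_k m_k$ of the outcomes is controlled by the Pauli $Q(\mathbf{i}):=\bigotimes_{k=1}^N M_k(c_k(\mathbf{i}))$ through
\[
\Pr\!\big[\textstyle\bigoplus_k m_k=0\big]-\Pr\!\big[\textstyle\bigoplus_k m_k=1\big]\;=\;\langle\psi|Q(\mathbf{i})|\psi\rangle\;\in\;\{0,\pm1\},
\]
where the inclusion holds because $|\psi\rangle$ is a stabiliser state. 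Hence on $G:=\{\mathbf{i}\in\zz_2^n:\langle\psi|Q(\mathbf{i})|\psi\rangle=\pm1\}$ the output $o(\mathbf{i})$ is deterministic, while on $B:=\zz_2^n\setminus G$ it is an unbiased bit. Moreover the symplectic vector of $Q(\mathbf{i})$ is affine-linear in $\mathbf{i}$, since that of $M_k(c_k(\mathbf{i}))$ is $\mathbf{v}_k^{(0)}+c_k(\mathbf{i})\,(\mathbf{v}_k^{(1)}\oplus\mathbf{v}_k^{(0)})$; as membership in the stabiliser group is a linear condition on this vector, $G$ is an affine subspace of $\zz_2^n$.

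\textbf{Step 2: the deterministic part is quadratic.} Choose an affine isomorphism $\alpha:\zz_2^m\to G$ and restrict the classical input of $A$ to $G$, reparametrised by $\alpha$; this yields an MBQC on $m$ input bits that is deterministic. An affine (rather than linear) shift of the settings introduced by $\alpha$ is absorbed by swapping $M_k(0)\leftrightarrow M_k(1)$ for the affected parties (replacing $U_k(1)$ by $U_k(1)^\dagger$), which preserves the level-$2$, non-adaptive structure, so Thm.~\ref{thm:stabdeterm} applies and the output of the restricted MBQC is quadratic. Hence $o|_G$ is the restriction to $G$ of a quadratic polynomial $q:\zz_2^n\to\zz_2$ (with $m_0$ absorbed), and
\[
P_{succ}(A)=\frac1{2^n}\Big(\big|\{\mathbf{i}\in G:f(\mathbf{i})=q(\mathbf{i})\}\big|+\tfrac{|B|}{2}\Big)=1-\frac1{2^n}\Big(\big|\{\mathbf{i}\in G:f\neq q\}\big|+\tfrac{|B|}{2}\Big).
\]

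\textbf{Step 3: the random part does not help.} It suffices to produce a quadratic $\tilde q^\star$ with $\tilde q^\star|_G=q|_G$ and $d_H(f,\tilde q^\star)\le\big|\{\mathbf{i}\in G:f\neq q\}\big|+|B|/2$; then $\mathcal{NQ}(f)\le d_H(f,\tilde q^\star)$ and substitution into the last display gives $P_{succ}(A)\le 1-\mathcal{NQ}(f)/2^n$. Let $V_G$ be the $\zz_2$-vector space of quadratic polynomials vanishing on $G$, so the quadratics agreeing with $q$ on $G$ form the coset $q+V_G$. For each $\mathbf{i}\in B$ there is $r\in V_G$ with $r(\mathbf{i})=1$: in coordinates with $G=\{\mathbf{z}=\mathbf{0}\}$ and $\mathbf{i}=(\mathbf{y},\mathbf{z})$, $\mathbf{z}\neq\mathbf{0}$, take $r=z_j$ for any $j$ with $z_j=1$. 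Hence the evaluation functional $\tilde q\mapsto\tilde q(\mathbf{i})$ is balanced on $q+V_G$, so the average over $\tilde q\in q+V_G$ of $\big|\{\mathbf{i}\in B:\tilde q(\mathbf{i})=f(\mathbf{i})\}\big|$ equals $|B|/2$; pick $\tilde q^\star\in q+V_G$ attaining at least this many agreements on $B$. Since $\tilde q^\star|_G=q|_G=o|_G$, the claimed bound on $d_H(f,\tilde q^\star)$ holds, which finishes the upper bound and hence the proof.

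\textbf{Main obstacle.} The substantive part is Steps~1--2: establishing that a probabilistic non-adaptive stabiliser $l2$-MBQC is, input by input, either deterministic with a \emph{globally quadratic} output function or a fair coin, with the deterministic locus an affine subspace. This requires running the stabiliser/symplectic bookkeeping underlying Thm.~\ref{thm:stabdeterm} over the entire input set and verifying that the restriction of the MBQC to $G$ is again a legitimate level-$2$, non-adaptive $l2$-MBQC. Once this is in place, Step~3 is an elementary counting argument and the lower bound is free from Cor.~\ref{cor: stabiliser success probability}.
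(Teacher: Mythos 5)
Your proof is correct, and its skeleton is the same as the paper's: split the inputs into a deterministic locus $G$ and a random locus $B$, show the output is a fair coin on $B$ (the paper's Lm.~\ref{5050Lem}), show $G$ is an affine subspace, invoke Thm.~\ref{thm:stabdeterm} on the restriction to $G$ to get a quadratic there, and then extend that quadratic to all of $\zz_2^n$ while losing at most $|B|/2$ agreements. The two places where you genuinely diverge are both in the supporting lemmas. First, you establish affineness of $G$ by observing that the symplectic vector of $Q(\mathbf{i})$ is affine-linear in $\mathbf{i}$ and that commuting with a maximal stabiliser is a linear condition, whereas the paper (Lm.~\ref{lm: affine space}) uses the closure identity $S(\mathbf{x})S(\mathbf{y})S(\mathbf{z})\propto S(\mathbf{x}\oplus\mathbf{y}\oplus\mathbf{z})$; these are equivalent in content. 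Second, and more substantially, your extension step replaces the paper's Lm.~\ref{EmbeddedFunctionLem} --- a recursive, one-coordinate-at-a-time construction that chooses a constant shift $\Delta_j$ on each new slab to guarantee at least half agreement --- with an averaging argument over the coset $q+V_G$ of quadratics agreeing with $q$ on $G$, using that evaluation at any $\mathbf{i}\in B$ is a nonzero (hence balanced) linear functional on $V_G$. Your version is shorter and arguably cleaner since it produces the bound in one shot rather than by induction, while the paper's recursion is more explicit and constructive. One small point you share with the paper: both arguments tacitly assume the deterministic locus is nonempty when choosing the affine parametrisation; the degenerate case $G=\emptyset$ gives $P_{succ}=1/2$, which is covered by $\mathcal{NQ}(f)\le 2^{n-1}$ (take $q$ constant), so nothing breaks, but it is worth a sentence.
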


\begin{proof}
	We give the proof in App.~\ref{sec: Proof of thm: stabiliser success probability}.
\end{proof}

In particular, this says that if $f$ is not quadratic then the success probability will be less than one, and we cannot demonstrate `strong' non-locality (contextuality) for this function. As a concrete case study, consider Example 2 in Ref.~\cite{Kolokotronis2009} that is an 8-bit input Boolean function with $\mathcal{NQ}(f)=68$. This entails an optimal success probability of $P_{succ} = 47/64 \sim 0.734375$.

Note also that the bound in Thm.~\ref{thm: stabiliser success probability} is strict since for the stabiliser formalism deterministic strategies are always optimal. However, it is not clear whether this is always the case. In particular, a similar problem arises from the well-known CHSH inequality. While quantum correlations violate the classical bound, they cannot win the related CHSH non-local game with certainty. This is different to the problem studied here, where the restriction is not on the number of qubits involved but on the level in the Clifford hierarchy of the gates used in Eq.~(\ref{eq: ld-MBQC operators by unitary conjugation}). Nevertheless, this example shows that the MBQC which best approximates a given Boolean function need not be a deterministic one.

\subsection{Beyond quadratic functions}\label{sec: Beyond quadratic functions}

Thm.~\ref{thm:stabdeterm} shows that in the non-adaptive case, non-Clifford operations are required to evaluate general (non-quadratic) Boolean functions $f$. In this section, we establish the necessity of operations belonging to higher levels in the Clifford hierarchy depending on the degree of $f$.

\subsubsection{Polynomial vs $\zz_2$-linear representation of Boolean functions}\label{sec: Polynomial vs Z_2-linear function representation}

We first introduce some tools to allow us to map between different representations of Boolean functions in $l2$-MBQC. In particular, we introduce the $\zz_2$-linear representation of a Boolean function, in addition to its polynomial representation. We show how to map between these representations using the discrete Fourier transform. This will be useful for this section and the following. (See also Ref.~\cite{o2014analysis} for more details on Boolean function analysis.) 

The polynomial representation of computational output is one useful way of characterising $l2$-MBQCs, as the polynomial degree places important constraints on the resources required. In order to characterise the optimal implementation of a given $l2$-MBQC, we consider another representation known as the $\zz_2$-linear function representation. Any Boolean function $f: \zz^n_2 \rightarrow \zz_2$ can be written in the following two ways, up to an additive constant,
\begin{equation}
f(x) = \sum_{\mathbf{a} \in \zz^n_2} C_\mathbf{a} \left( \oplus_{j=1}^n a_jx_j \right) = \sum_{\mathbf{b} \in \zz^n_2} D_\mathbf{b} \left( \prod_{j=1}^n x_j^{b_j} \right)\; ,
\end{equation}
where $C_{\mathbf{a}}\in \rr$, $D_{\mathbf{b}} \in \zz_2$ for all $\mathbf{a},\mathbf{b}\in \zz_2^n$, and $\oplus$ denotes addition modulo $2$. We focus on the first representation in terms of $\zz_2$-linear functions. In particular, we define the $\zz_2$-linear basis functions $\phi_\mathbf{a} := \oplus_{j=1}^n a_jx_j$, and monomial basis functions $\pi_{\mathbf{b}} := 2^{W(\mathbf{b})-1} \prod_{l=1}^n x_l^{b_l}$ for $\mathbf{0} \neq \mathbf{a},\mathbf{b} \in \zz^n_2$, $\phi_\mathbf{0} = \pi_\mathbf{0} := 1$, and where $W(\mathbf{b}) := |\{l \in \{1,\cdots,n\} \mid b_l \neq 0\}|$ denotes the \emph{Hamming weight} of $\mathbf{b} \in \zz^n_2$. Both sets of functions $\{\phi_{\mathbf{a}} ~|~ \mathbf{a} \in \zz_2^n\}$ and $\{\pi_{\mathbf{b}} ~|~ \mathbf{b} \in \zz_2^n\}$ are each linearly independent and generate the space of Boolean functions on bit strings $\mathbf{x} \in \zz^n_2$, as will be shown below. As such, we can determine the corresponding transformation map between the coefficients $C_\mathbf{a}$, $D_\mathbf{b}$. By Eq.~(4) in Ref.~\cite{HobanCampbell2011}, every $\zz_2$-linear function can be written in terms of monomials,
\begin{equation}\label{eq: Z2-linear functions as monomials}
\oplus_{l=1}^n x_l = \sum_{\mathbf{0} \neq \mathbf{b} \in \zz^n_2} (-2)^{W(\mathbf{b})-1} \prod_{l=1}^n x_l^{b_l}\; .
\end{equation}
Following this, it is easy to see that we can write a given $\zz_2$-linear basis function $\phi_{\mathbf{a}}$ as
\begin{equation}\label{eqLinearToPoly}
    \phi_{\mathbf{a}} = \sum_{\mathbf{0} \neq \mathbf{b} \in \mathbf{a}\zz_{2}^n} (-2)^{W(\mathbf{b})-1} \prod_{l=1}^n x_{l}^{b_l},
\end{equation}
where we have defined the set $\mathbf{a}\zz_{2}^n = \{(a_1b_1,\ldots,a_nb_n) \in \zz_2^n ~|~ \forall b_i \in \zz_2\}$. More generally, we define
the symmetric product,
\begin{equation}\label{eq: pseudo scalar product}
\langle \pi_{\mathbf{b}},\phi_\mathbf{a} \rangle := \begin{cases}
1 \quad &\text{if} ~ \mathbf{a} = \mathbf{b} = 0\; ,\\
(-1)^{\sum_{j=1}^n a_jb_j - 1} \quad &\text{otherwise}\; .
\end{cases}
\end{equation}
This defines a linear map $\mathcal{F}: \mathbb{R}^{2^n} \rightarrow \mathbb{R}^{2^n}$ with matrix coefficients $\mathcal{F}_{\pi_{\mathbf{b}}\phi_\mathbf{a}} := \langle \pi_{\mathbf{b}},\phi_\mathbf{a} \rangle$.\footnote{Note that while $\mathcal{F}$ is a map between functions over bit strings $\mathbf{i} \in \zz^n_2$ with real coefficients, it reduces to a map between Boolean functions for appropriate $\mathbf{C}_\mathbf{a}$ (and $\mathbf{D}_\mathbf{b}$). The real coefficients corresponding to a Boolean function $f$ are also known as the Walsh spectrum of $f$.} From Eq.~(\ref{eq: pseudo scalar product}) it follows that $\mathcal{F}_{\phi_\mathbf{a}\pi_{\mathbf{b}}} = \pm 1$ and given that $\mathcal{F}$ has full rank (as a basis change) it has an inverse. In fact, for fixed dimension $n$ and with appropriate normalisation factor $\mc{N} = 2^{-\frac{n}{2}}$, $\mathcal{F}$ becomes a Hadamard transform and is thus in particular orthogonal, hence, $(\mc{N}\mathcal{F}_{\pi_{\mathbf{b}}\phi_\mathbf{a}})^{-1} = \mc{N}\mathcal{F}_{\phi_\mathbf{a}\pi_{\mathbf{b}}} = \mc{N}\mathcal{F}_{\pi_{\mathbf{b}}\phi_\mathbf{a}}$. This generalises Eq.~(\ref{eq: Z2-linear functions as monomials}) and provides an explicit translation between the two representations of Boolean functions underlying Eq.~(\ref{eq: phase relations}).

More precisely, let $f = \sum_{\mathbf{a} \in \zz_2^n} C_\mathbf{a} \phi_\mathbf{a}$, where $(C_{\mathbf{a}})_{\mathbf{a}\in\zz_2^n}$ are the coefficients of $f$ in the basis $\{\phi_{\mathbf{a}}~|~\mathbf{a} \in \zz_2^n\}$, then $\mc{F}$ transforms these into coefficients $(D_{\mathbf{b}})_{\mathbf{b}\in\zz_2^n}$ in its polynomial representation $f = \sum_{\mathbf{b} \in \zz_2^n} D_\mathbf{b} (\frac{1}{2^{W(\mathbf{b})-1}} \pi_{\mathbf{b}})$, 
\begin{equation}\label{eq: map of reps on basis element}
    f = \mc{F}(\sum_{\mathbf{a} \in \zz_2^n} C_\mathbf{a} \phi_\mathbf{a})
    = \sum_{\mathbf{a} \in \zz_2^n} C_\mathbf{a} \left(\sum_{\mathbf{0} \neq \mathbf{b} \in \mathbf{a}\zz_{2}^n} (-2)^{W(\mathbf{b})-1} \prod_{l=1}^n x_{l}^{b_l}\right)
    = \sum_{\mathbf{b} \in \zz_2^n} \sum_{\mathbf{a} \in \zz_2^n} C_\mathbf{a} \langle \pi_\mathbf{b},\phi_\mathbf{a}\rangle \pi_\mathbf{b}
    = \sum_\mathbf{b} D_\mathbf{b} \pi_\mathbf{b}\; .
\end{equation}
In particular, we emphasise that the local phases $\vartheta_k$ in $\theta_k := e^{\pi i \vartheta_k}$ from Eq.~(\ref{eq: phase relations}) simply correspond to the coefficients $\mathbf{C}_\mathbf{a}$ under the mapping $\mathcal{F}^{-1}$ applied to the output function of the $l2$-MBQC (in its polynomial representation). We will apply this transformation explicitly in a number of examples in Sec.~\ref{sec: Optimal representation of functions in non-adaptive l2-MBQC} below.

\subsubsection{Necessity of non-Clifford operations}\label{sec: Necessity of non-Clifford operations}

We utilise a characterisation of the Clifford hierarchy due to Zeng et. al.~\cite{zeng2008semi}. We define a set of operations known as semi-Clifford operations~\cite{gross2007lu,zeng2008semi}.
\begin{definition}[semi-Clifford hierarchy]
    We say a gate $U\in \calC_N^k$ is a $k$-th level semi-Clifford gate (on $N$ qubits) if $U = C_1 D C_2$ where $C_1, C_2 \in \calC_N^2$ are Clifford gates, and $D \in \calC_N^k$ is diagonal. We label the set of $k$-th level semi-Clifford gates (on $N$-qubits) as $\mathcal{SC}_N^k$.
\end{definition}
In other words, gates in the semi-Clifford hierarchy are those that are diagonal up to Clifford operations. Note in the above that $D\in \calC_N^k$ necessarily, as for any $U\in \mathcal{C}_N^k$ one can verify that $C_1 U C_2 \in \mathcal{C}_N^k$ $\forall C_1, C_2 \in \mathcal{C}_N^2$~\cite{zeng2008semi}.

\begin{theorem}\label{thm:CliffHierNec}
	For a non-adaptive, deterministic $l2$-MBQC that belongs to level-$D$ in the Clifford hierarchy, only polynomials of degree at most $D$ can be computed.
\end{theorem}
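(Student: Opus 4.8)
\emph{Approach.} The plan is to induct on $D$, using Theorem~\ref{thm:stabdeterm} as the base case ($D=2$: stabiliser MBQC computes only quadratics) and obtaining the inductive step by differentiating the computation with respect to the classical input. First I would put the defining equation~(\ref{eq: output function in ld-MBQC}) into a normalised form. Since $M_k(0)\in\calP_1$ and $U_k(c_k)\in\calC_1^D$, each measurement operator $M_k(c_k)=U_k(c_k)M_k(0)U_k(c_k)^{-1}$ lies in $\calC_1^{D-1}$; using $\Pi\ket\psi=\ket\psi$ for the fiducial Pauli $\Pi=\bigotimes_k M_k(0)$ (valid after reducing, without loss of generality, to $o(\mathbf 0)=0$, which only shifts $o$ by a constant), the computation becomes $\big(\bigotimes_{k=1}^N R_k(c_k(\mathbf i))\big)\ket\psi=(-1)^{o(\mathbf i)}\ket\psi$ for all $\mathbf i$, where $R_k(0)=I$ and $R_k:=R_k(1)=M_k(1)M_k(0)\in\calC_1^{D-1}$. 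Crucially $R_k$, being a product of two Bloch-sphere reflections, is a rotation. The object I would run the induction on is this class of ``normalised'' computations -- stabiliser resource, linear settings $c_k$, operators $T_k$ with $T_k(0)=I$ and $T_k(1)$ a rotation in $\calC_1^{j}$ -- with the claim that the output has degree at most $j+1$; Theorem~\ref{thm:CliffHierNec} follows by taking $j=D-1$.

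\emph{Inductive step.} Fix a direction $\mathbf e\in\zz_2^n$ and target the discrete derivative $\Delta_{\mathbf e}o(\mathbf i)=o(\mathbf i+\mathbf e)+o(\mathbf i)$. Applying the operator at setting $\mathbf c(\mathbf i+\mathbf e)$ to the equation at setting $\mathbf c(\mathbf i)$, and using $c_k(\mathbf i+\mathbf e)=c_k(\mathbf i)+c_k(\mathbf e)$, the operator on qubit $k$ becomes $R_k(c_k(\mathbf i+\mathbf e))R_k(c_k(\mathbf i))$: this is the \emph{constant} $R_k$ whenever $c_k(\mathbf e)=1$ (here $R_k(0)=I$ is essential), and is $I$ or $R_k^2$ -- depending on $c_k(\mathbf i)$ -- when $c_k(\mathbf e)=0$. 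Evaluating the normalised equation at $\mathbf i=\mathbf e$ shows the constant part $\Pi^{\mathbf e}=\bigotimes_{k:\,c_k(\mathbf e)=1}R_k$ satisfies $\Pi^{\mathbf e}\ket\psi=(-1)^{o(\mathbf e)}\ket\psi$, so pushing it to the right-hand side gives $\big(\bigotimes_{k:\,c_k(\mathbf e)=0}S_k^{\mathbf e}(c_k(\mathbf i))\big)\ket\psi=(-1)^{\Delta_{\mathbf e}o(\mathbf i)+o(\mathbf e)}\ket\psi$ with $S_k^{\mathbf e}(0)=I$, $S_k^{\mathbf e}(1)=R_k^2$. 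Hence $\Delta_{\mathbf e}o$ (up to the irrelevant additive constant $o(\mathbf e)$) is itself computed by a normalised computation, now with single-qubit operators $R_k^2$.

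\emph{Main obstacle.} The crux is the lemma that $R_k^2$ is again a rotation in $\calC_1^{D-2}$, so that the inductive hypothesis applies and $\Delta_{\mathbf e}o$ has degree at most $D-1$. I would prove it using the semi-Clifford structure of single-qubit gates~\cite{zeng2008semi}: a rotation lying in $\calC_1^{D-1}$ is Clifford-conjugate to a diagonal rotation $\mathrm{diag}(1,e^{i\theta})$ with $\theta\in\frac{\pi}{2^{D-2}}\zz$, whose square is $\mathrm{diag}(1,e^{2i\theta})$ with $2\theta\in\frac{\pi}{2^{D-3}}\zz$, hence in $\calC_1^{D-2}$, and Clifford-conjugation preserves the level. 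That $R_k$ is a \emph{rotation} is exactly what makes this go through -- for a generic element of $\calC_1^{D-1}$ (e.g.\ $TH\in\calC_1^3$) squaring need not lower, or even keep, the level -- and verifying that the normalisation stays inside the rotation class at each step of the induction, together with the single-qubit Clifford-hierarchy bookkeeping, is where I expect the work to concentrate. Since this applies to every $\mathbf e$, and a Boolean function all of whose first discrete derivatives have degree $\le D-1$ has degree $\le D$, the induction closes.

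\emph{Cross-check.} As a sanity check -- and a cleaner route in the (already universal, by Sec.~\ref{secCompleteness}) case of a GHZ resource -- one can argue directly from Theorem~\ref{thm: phase relations ld-MBQC}: there $o(\mathbf i)=\sum_k c_k(\mathbf i)\vartheta_k\bmod 2$, and $M_k(c_k)=X_k(\theta_k)\in\calC_1^{D-1}$ forces $\vartheta_k\in\frac{1}{2^{D-1}}\zz$. Writing $c_k=\phi_{\mathbf p_k}$ ($\mathbf p_k$ the $k$-th row of $P$) and passing to the polynomial representation via the transform $\mathcal F$ of Sec.~\ref{sec: Polynomial vs Z_2-linear function representation}, the coefficient of the monomial indexed by $\mathbf b$ is $(-2)^{W(\mathbf b)-1}\sum_{\mathbf p_k\supseteq\mathbf b}\vartheta_k$, which for $W(\mathbf b)>D$ is an even integer and therefore vanishes mod $2$; so $o$ has degree at most $D$.
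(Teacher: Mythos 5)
Your main route---normalising to $R_k(c_k)=M_k(c_k)M_k(0)$, taking discrete derivatives $\Delta_{\mathbf e}o$, and inducting on the level---is genuinely different from the paper's proof. The paper instead applies the semi-Clifford decomposition $U_k=C_kD_kC_k'$ to reduce to diagonal gates acting on a rotated stabiliser state, invokes the classification of diagonal hierarchy gates to write the output as an explicit phase polynomial, and then passes to the monomial basis via the $(-2)^{W(\mathbf b)-1}$ Fourier coefficients; your ``cross-check'' paragraph is essentially this argument, except restricted to GHZ resources via Thm.~\ref{thm: phase relations ld-MBQC}, whereas the paper covers arbitrary stabiliser resource states by using that all nonzero computational-basis amplitudes of a stabiliser state have equal magnitude. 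Your derivative reduction itself is carried out correctly: the case split on $c_k(\mathbf e)$, the extraction of the constant factor $\Pi^{\mathbf e}$, and the conclusion that $\Delta_{\mathbf e}o$ is computed by a normalised computation with operators $R_k^2$ are all sound.

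The gap is in the key lemma. The assertion that a rotation lying in $\calC_1^{D-1}$ is Clifford-conjugate to a diagonal gate is false, and it fails for operators $R_k=M_k(1)M_k(0)$ that genuinely arise under Def.~\ref{defLevelDMBQC}. Concretely, take $D=3$, $U=HT\in\calC_1^3$ and $M(0)=X$: then $M(1)=UXU^\dagger=(Z-Y)/\sqrt2$ and $R=M(1)M(0)=i(Y+Z)/\sqrt2$, whose adjoint action is the $\pi$-rotation about the non-Pauli axis $(0,1,1)/\sqrt2$. This $R$ lies in $\calC_1^2$ as required, but it is not $C\,\mathrm{diag}(1,e^{i\theta})\,C^\dagger$ for any Clifford $C$, since any such gate is (up to phase) a rotation about a Pauli axis and the axis of a nontrivial rotation is unique up to sign. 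The geometric reason is that $R_k$ rotates about $\mathbf n_0\times\mathbf n_1$, where $\mathbf n_0$ is the Pauli axis of $M_k(0)$ and $\mathbf n_1=U\mathbf n_0$; writing $U=C_1DC_2$ one finds that either this cross product is a Pauli axis and the rotation angle lies in $\{\pm2\theta,\pi\pm2\theta\}$ with $\theta\in\tfrac{\pi}{2^{D-1}}\zz$, \emph{or} $\mathbf n_0\perp\mathbf n_1$, in which case the axis can be arbitrary but $R_k$ is a $\pi$-rotation and $R_k^2=-I\in\calC_1^1$. So the conclusion $R_k^2\in\calC_1^{D-2}$ does hold, but only after this extra case is handled, and the inductive invariant must be restated as ``$T_k(1)$ is, up to phase, either $C\,\mathrm{diag}(1,e^{i\phi})C^\dagger$ with $\phi\in\tfrac{2\pi}{2^{j}}\zz$ or a $\pi$-rotation''---a class that one then checks is closed under squaring (with $j\mapsto j-1$). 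As written, your justification does not establish this, so the induction does not yet close; with that case analysis supplied (and the base case phrased for eigenvalue equations with possibly non-Hermitian Pauli-phase operators, as in the proof of Thm.~\ref{thm:stabdeterm}), the strategy should go through.
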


\begin{proof}
    We give the proof in App.~\ref{app: Clifford hierarchy}.
\end{proof}

This theorem can be viewed as a generalisation of Thm.~\ref{thm:stabdeterm}. If a non-adaptive $l2$-MBQC belonging to some level in the Clifford hierarchy computes a polynomial of degree $D$, then it at least belongs to level-$D$ in the Clifford hierarchy. Moreover, this bound is tight, as it is saturated by the delta function construction in Lm.~\ref{lem: qubit delta function}.
	
We remark that the analogous problem for qudits is open. In our argument we used the fact that the semi-Clifford hierarchy is equal to the Clifford hierarchy for single qubits, $\mathcal{SC}_1^k = \mathcal{C}_1^k$, which has not been shown to hold for general qudits (see Ref.~\cite{de2020efficient} for a more comprehensive discussion). For prime qudits it is conjectured that all Clifford hierarchy gates are semi-Clifford, and has been proven true for the third-level gates~\cite{de2020efficient}. We also remark that certain gates do not belong to any finite level in the Clifford hierarchy. For qubits, an example is the square root of the Hadamard, $\sqrt{H}$. For qudits an example is the phase gate $D_3 = \text{diag}(1,1,-1)$.

\section{Boolean functions as MBQC - (dependence on) qubit count}\label{SecGeneralResourceRequirements}

In this section, we ask for the minimal number of qubits, also known as qubit count, needed to implement a Boolean function $f: \zz_2^n \rightarrow \zz_2$ in non-adaptive, deterministic $l2$-MBQC.

\begin{definition}
	Let $f: \zz_2^n \rightarrow \zz_2$. We call a non-adaptive $l2$-MBQC which deterministically implements $f$ \emph{optimal}, if no other non-adaptive $l2$-MBQC exists which deterministically implements $f$ on fewer qubits. The minimal number of qubits over all possible resource states is denoted by $R(f)$, while $R_\mathrm{GHZ}(f)$ denotes the minimal number of qubits when restricted to the $n$-qubit GHZ state in Eq~(\ref{eq: GHZ resource state}).
\end{definition}

Note first that we have the freedom to manipulate $f$ by any invertible linear transformation on the inputs via pre-processing $P$. The resource cost $R$ should therefore be an invariant under affine transformations.
We thus define an equivalence relation on all functions with signature $f: \zz_2^n \rightarrow \zz_2$ under affine transformations as follows,
\begin{equation}\label{eq: linear equivalence classes}
    f \sim f' :\Longleftrightarrow \exists P \in \mathrm{Mat}(n \times n, \zz_2),\ \mathrm{rk}(M) = n:\ f'(\mathbf{i}) = f(P\mathbf{i}) \quad \forall \mathbf{i} \in \zz_2^n.
\end{equation}
Furthermore, in Sec.~\ref{secCompleteness} we have seen how the $n$-dimensional $\delta$-function can be implemented as a non-adaptive $l2$-MBQC on $N=2^n-1$ qubits.\footnote{Note that this is the same scaling behaviour as for the $n$-dimensional $\mathrm{AND}(\mathbf{i}) = \prod_{j=1}^n i_j$, which is optimal by Ref.~\cite{HobanCampbell2011}.} Hence, given an arbitrary Boolean function $f: \zz_2^n \rightarrow \zz_2$, one way to implement it is by naively adding all terms in the sum $f(\mathbf{i}) = \sum_{\mathbf{j} \in \zz_2^n} f_\mathbf{j} \delta(\mathbf{i}-\mathbf{j})$ with $f_\mathbf{j} \in \zz_2$ for all $\mathbf{i} \in \zz_2^n$. However, the minimal number of qubits is only subadditive in this as well as its polynomial representation. To see this, we again consider the stabiliser case first.\\

\subsection{Qubit count in stabilzer $l2$-MBQC}\label{sec: Quadratic Boolean functions}

Recall that only quadratic functions can be computed with high probability using stabiliser $l2$-MBQCs. We now find the minimal number of qubits to do so. Consider a quadratic Boolean function
\begin{equation}\label{eq: quadratic Boolean function}
    f(x) = \sum_{i=1}^n l_i x_i + \sum_{i<j} q_{i,j} x_i x_j \pmod 2, \quad l_i, q_{i,j} \in \zz_2
\end{equation}
and define a symmetric matrix $Q(f)$ such that $Q_{i,i}=0$ and $Q_{i,j}=Q_{j,i}=q_{i,j}$. We denote by $\mathrm{rk}(f)$ the $\zz_2$-rank of $Q$.

\begin{theorem}\label{thm: resource of quadratic Boolean functions}
	Let $f: \zz_2^n \rightarrow \zz_2$ be a quadratic Boolean function. Let $f$ be expressed as 
	\begin{equation}
	f(x) = \sum_{i=1}^n l_i x_i + \sum_{i<j} q_{i,j} x_i x_j \pmod 2, \quad l_i, q_{i,j} \in \zz_2
	\end{equation}
	Then $f$ can be implemented as a non-adaptive, deterministic, level-$2$ (i.e., stabiliser) $l2$-MBQC on $R(f) = \mathrm{rk}(f)+1$ qubits, where $\mathrm{rk}(f)$ is the $\zz_2$-rank of the symmetric matrix $Q(f)$. 
\end{theorem}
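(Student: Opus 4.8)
The plan is to establish the two inequalities $R(f) \leq \mathrm{rk}(f)+1$ and $R(f) \geq \mathrm{rk}(f)+1$ separately, where $\mathrm{rk}(f)$ is the $\zz_2$-rank of the symmetric hollow matrix $Q(f)$ encoding the quadratic part. The starting observation is that, by Thm.~\ref{thm:stabdeterm}, a non-adaptive deterministic stabiliser $l2$-MBQC computes only quadratic functions, so it suffices to work entirely within the quadratic world; and by Eq.~(\ref{eq: phase relations}) together with the fact that level-$2$ measurements have phases $\vartheta_k \in \{0,\tfrac12\}$ (Pauli operators obtained by Clifford conjugation of $X$), the computation is governed by $\zz_2$-linear data. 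The first step is to normalise $f$ up to affine equivalence: since $\mathrm{rk}(f)$ and $R(f)$ are both invariant under the equivalence in Eq.~(\ref{eq: linear equivalence classes}), I would bring $Q(f)$ into a canonical form over $\zz_2$. A symmetric hollow matrix of rank $r=\mathrm{rk}(f)$ over $\zz_2$ is congruent to a block-diagonal form consisting of $r/2$ copies of $\left(\begin{smallmatrix}0&1\\1&0\end{smallmatrix}\right)$ (note $r$ is even for such matrices), so that the quadratic part becomes $\sum_{t=1}^{r/2} x_{2t-1}x_{2t}$, possibly plus a residual linear part.

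For the upper bound, the construction is to realise each elementary product $x_{2t-1}x_{2t}$ by a three-qubit Anders–Browne–type gadget on a GHZ block, but crucially \emph{sharing} one global qubit so that composing $r/2$ such gadgets costs $2\cdot(r/2)+1 = r+1$ qubits rather than $3r/2$. Concretely, I would take the $(r+1)$-qubit GHZ state, assign to party $2t-1$ and $2t$ the measurement settings $c_{2t-1}=x_{2t-1}$, $c_{2t}=x_{2t}$ with phases $\tfrac12$, and to the single ``collector'' qubit the setting $c_0 = \bigoplus_t (x_{2t-1}\oplus x_{2t})$ also with phase $\tfrac12$; summing the phase relation in Eq.~(\ref{eq: phase relations}) then yields $\sum_t (x_{2t-1}+x_{2t}) \cdot \tfrac12 + (\sum_t x_{2t-1}\oplus x_{2t})\cdot\tfrac12 \pmod 2$, which collapses to $\sum_t x_{2t-1}x_{2t}$ by the identity $a + b - (a\oplus b) = 2ab$. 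Any leftover linear terms $\sum_i l_i x_i$ are absorbed into the constant $m_0$ or into existing settings at no extra qubit cost, since linear functions are free in $l2$-MBQC. This shows $R(f)\le \mathrm{rk}(f)+1$, and since a GHZ state is a stabiliser state and all gates used are Cliffords (conjugating $X$ to $Y$), the MBQC is level-$2$.

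For the lower bound I would argue that any non-adaptive deterministic level-$2$ $l2$-MBQC on $N$ qubits computing $f$ forces $N \geq \mathrm{rk}(f)+1$. The idea: by Eq.~(\ref{eq: output function in ld-MBQC}) the computation is a family of Pauli eigenvalue equations on a stabiliser state, parametrised linearly by $\mathbf{i}$; extracting the $\zz_2$-bilinear form that these equations impose (coming from the symplectic/commutation structure of the Pauli group acting on the stabiliser group of $\ket\psi$), the quadratic part of $o(\mathbf{i})$ is a bilinear form whose rank is bounded by the number of qubits minus one — the ``minus one'' accounting for the single global parity/GHZ-type correlation that supports the nontrivial phase. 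Making this precise is the main obstacle: I expect the hard part is to pin down exactly which bilinear forms are achievable on $N$ qubits in the stabiliser setting and to show the rank cannot exceed $N-1$, rather than, say, $N$ or $2N$. I would approach it by reducing the stabiliser group of the resource state to a normal form (a graph/GHZ-like state up to local Cliffords), tracking how the input-dependent Clifford conjugations rotate the measured Paulis within the symplectic space $\zz_2^{2N}$, and showing the resulting quadratic form in $\mathbf{i}$ factors through an $N\times N$ symplectic Gram matrix of even rank at most $N-1$ once the determinism constraint (the eigenvalue equations must hold for \emph{all} $\mathbf{i}$) is imposed. Matching this with the congruence normal form of $Q(f)$ from the first step then closes the argument.
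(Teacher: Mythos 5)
There are two genuine gaps here. First, your upper-bound gadget does not compute the function you claim. With a single collector qubit carrying the setting $c_0=\bigoplus_{j=1}^r x_j$ and phase $\tfrac12$, Eq.~(\ref{eq: phase relations}) gives
\begin{equation*}
o(\mathbf{x})=\tfrac12\sum_{j=1}^r x_j+\tfrac12\bigl(\textstyle\bigoplus_{j=1}^r x_j\bigr)\equiv \sum_{i<j}x_ix_j+\bigoplus_{j=1}^r x_j \pmod 2,
\end{equation*}
i.e.\ the full elementary symmetric function $\Sigma_2^r$ plus a linear part, not the block-diagonal sum $\sum_t x_{2t-1}x_{2t}$. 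The identity $a+b-(a\oplus b)=2ab$ applies per pair, but your collector carries the XOR of all $r$ variables rather than the integer sum of the per-block XORs; already for $r=4$ and $\mathbf{x}=(1,0,1,0)$ your scheme outputs $1$ while $x_1x_2+x_3x_4=0$. The construction is salvageable, since $\Sigma_2^r$ has a rank-$r$ hollow Gram matrix for even $r$ and is therefore congruent to the hyperbolic normal form, so linear pre-processing recovers any rank-$r$ quadratic part --- but you must argue this explicitly, and you must also handle linear terms in variables lying in the radical of $Q(f)$, which cannot simply be ``absorbed'' into settings that never mention those variables.

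Second, the lower bound, which is the substantive half of the theorem, is not proved: you explicitly defer ``the main obstacle.'' The paper closes both directions at once via a reduction your sketch never isolates: the determinism and commutativity constraints derived in the proof of Thm.~\ref{thm:stabdeterm} force the columns $\mathbf{p}_i$ of the pre-processing matrix $P\in\mathrm{Mat}(N\times n,\zz_2)$ to satisfy $\mathbf{p}_i\cdot\mathbf{p}_j=q_{i,j}$ and $W(\mathbf{p}_i)=0\pmod 2$, which is precisely the Gram factorization $Q(f)=P^TP\pmod 2$; conversely, any such $P$ yields an implementation on $N$ qubits. The minimal $N$ is then the minimal number of rows in a GF(2) factorization of a zero-diagonal symmetric matrix, which equals $\mathrm{rk}(Q(f))+1$ by a theorem of Lempel~\cite{Lempel1975}. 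Your symplectic-rank heuristic (``rank at most $N-1$'') points in the right direction but is never made precise, so as written the proposal establishes neither inequality.
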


\begin{proof}
	We give the proof in App.~\ref{sec: Proof of thm: resource of quadratic Boolean functions}.
\end{proof}

Thm.~\ref{thm: resource of quadratic Boolean functions} replicates the Anders-Browne result as a special case, where $o(i_1,i_2) = i_1i_2 \oplus i_1 \oplus i_2$ is quadratic with
\begin{equation}
Q = \left( \begin{array}{cc} 0 & 1 \\
1 & 0 \end{array}
\right) .
\end{equation}
This is a rank 2 matrix and so the theorem says it can be computed using 3 qubits.

Note that Thm.~\ref{thm: resource of quadratic Boolean functions} suggests another resource measure: by Thm.~\ref{thm:stabdeterm}, within the stabiliser formalism only quadratic functions can be computed; in fact, they can be computed \emph{efficiently in the number of qubits}. Unfortunately, as a consequence of Thm.~\ref{thm:stabdeterm} we cannot use arguments based on stabilisers (as in the proof of Thm.~\ref{thm: resource of quadratic Boolean functions}) to understand the number of qubits as a resource also in the general case. Instead, in the next section we will apply the Fourier transform between the polynomial and $\zz_2$-linear representation of Boolean functions to obtain a lower bound on the number of qubits for non-adaptive, deterministic $l2$-MBQCs with a GHZ resource state.\footnote{Recall that by Thm.~\ref{thm: ld-MBQC universal} GHZ-states are universal for the computation of Boolean functions in non-adaptive, deterministic $l2$-MBQC.} This turns out to be a hard problem in general, yet we show how to reproduce the bound in Thm.~\ref{thm: resource of quadratic Boolean functions}, as well as other known bounds for $R$ obtained in previous sections.

\subsection{Qubit count in $l2$-MBQC using GHZ states}\label{sec: Optimal representation of functions in non-adaptive l2-MBQC}

By comparison with optimal bounds for Bell inequalities, finding the optimal $l2$-MBQC implementing a given Boolean function is likely a difficult problem. Here, we approach this problem by fixing the resource state to be a GHZ state, which we found to be universal for non-adaptive, deterministic $l2$-MBQC in Thm.~\ref{thm: ld-MBQC universal}. We will also make use of the discrete Fourier transform defined in Eq.~(\ref{eq: map of reps on basis element}).

More precisely, let $f = \sum_{\mathbf{a} \in \zz_2^n} C_\mathbf{a} \phi_\mathbf{a}$ be a Boolean function, which is implemented (in terms of the output function) of a non-adaptive, deterministic $l2$-MBQC with a GHZ state. By Thm.~\ref{thm: phase relations ld-MBQC}, the coefficients $C_\mathbf{a}$ are encoded in terms of local phases, which in turn define local measurement operators via Eq.~(\ref{eq: l2-MBQC operators X(theta,f)}). It follows that the minimal number of qubits required to implement $f$ deterministically as a non-adaptive $l2$-MBQC with a GHZ state corresponds with the minimal number of terms in the $\zz_2$-linear representation of $f$.

Let $f = \sum_{\mathbf{b} \in \zz_2^n} D_\mathbf{b} \pi_\mathbf{b}$ be the polynomial representation of $f$. Then we obtain a corresponding representation in terms of $\zz_2$-linear functions by applying the inverse discrete Fourier transform $\mathcal{F}^{-1}$ in Eq.~(\ref{eq: map of reps on basis element}). As we will see in the next sections, for monomials and other highly symmetric functions this representation is already minimal in the number of non-zero coefficients in its $\zz_2$-linear representation, and thus in the number of qubits in the implementation as $l2$-MBQC. However, for more general Boolean functions this is no longer the case. The reason is that we may change the representation of $f$ in terms of $\zz_2$-linear functions, as long as $f$ describes the same Boolean function. To give an example, the minimal number of $\zz_2$-linear terms of the Boolean function $f: \zz_2^4 \rightarrow \zz_2$, $f(\mathbf{i}) = i_1i_2 + i_3i_4$ arises by subtracting the term $z = 4i_1i_2i_3i_4 - 2i_1i_2(i_3+i_4)$ from the `naive' representation $f(\mathbf{i}) = \frac{1}{2}(i_1 + i_2 - i_1\oplus i_2) + \frac{1}{2}(i_3 + i_4 - i_3\oplus i_4)$ given by adding the optimal representations of the Boolean functions $i_1i_2$ and $i_3i_4$.

More generally, let $f: \zz_2^n \rightarrow \zz_2$ be a Boolean function and define the linear span of \emph{zero polynomials}
\begin{equation}\label{eq: zero polynomials}
    Z(f) = \bigl\langle 2^m \sum_{\mathbf{b} \in \zz^n_2} D_\mathbf{b} \bigl( \prod_{j=1}^n x_j^{b_j} \bigr) \bigm\vert n \geq m \geq 1, D_\mathbf{b} \in \zz_2 \forall \mathbf{b} \in \zz_2^n \bigr\rangle\; .
\end{equation}
In addition to the linear equivalence relation in Eq.~(\ref{eq: linear equivalence classes}), we have the following characterisation.

\begin{theorem}\label{thm: zero term reduction}
	The minimal number of qubits $R_\mathrm{GHZ}(f)$ required to deterministically implement a given Boolean function $f: \zz_2^n \rightarrow \zz_2$ in non-adaptive $l2$-MBQC with a GHZ state, is the minimal number of non-zero coefficients $C_\mathbf{a}$ in $\mc{F}^{-1}(f)$ in Eq.~(\ref{eq: map of reps on basis element}) under the relation $f \sim f' \Longleftrightarrow f' = f + z$, $z \in Z(f)$ of Eq.~(\ref{eq: zero polynomials}).
\end{theorem}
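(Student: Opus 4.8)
The plan is to recast the qubit-count problem, via Theorem~\ref{thm: phase relations ld-MBQC}, as a sparse-representation problem in the $\zz_2$-linear basis, and then to identify the admissible representations with the affine set $f+Z(f)$ by means of the transform $\mc{F}$ of Eq.~(\ref{eq: map of reps on basis element}).

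\textbf{Step 1: from MBQCs to coefficient vectors.} By Theorem~\ref{thm: phase relations ld-MBQC}, a non-adaptive, deterministic $l2$-MBQC on an $N$-qubit GHZ state computing $f$ amounts to a list of pairs $(\mathbf{a}_k,\vartheta_k)_{k=1}^N$ with $\mathbf{a}_k\in\zz_2^n$, $\vartheta_k\in\rr$, pre-processing $c_k=\phi_{\mathbf{a}_k}$, and $f(\mathbf{i})=\sum_{k=1}^N\vartheta_k\,\phi_{\mathbf{a}_k}(\mathbf{i})\pmod 2$ for all $\mathbf{i}\in\zz_2^n$. I would first group terms with equal $\mathbf{a}_k$ to obtain a real vector $C=(C_\mathbf{a})_{\mathbf{a}\in\zz_2^n}$ with $g_C:=\sum_\mathbf{a}C_\mathbf{a}\phi_\mathbf{a}\equiv f\pmod 2$ pointwise on $\zz_2^n$ and with at most $N$ nonzero entries. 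Conversely, from any real $C$ with $g_C\equiv f\pmod 2$ one builds an $l2$-MBQC on a GHZ state of size $\|C\|_0$ (the number of nonzero $C_\mathbf{a}$) by assigning to each $\mathbf{a}$ in the support a qubit with pre-processing $\phi_\mathbf{a}$ and measurements $M(0)=X$, $M(1)=X(e^{i\pi C_\mathbf{a}})$; the congruence forces $g_C$ to be integer-valued, so this MBQC is deterministic, and it outputs $f$ by Theorem~\ref{thm: phase relations ld-MBQC}. Finally, a nonzero \emph{even-integer} coefficient $C_\mathbf{a}$ is removable: by Eq.~(\ref{eqLinearToPoly}) the monomial coefficients of $C_\mathbf{a}\phi_\mathbf{a}$ are all even, so $C_\mathbf{a}\phi_\mathbf{a}\in Z(f)$ and deleting it stays in the congruence class. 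Together these give $R_\mathrm{GHZ}(f)=\min\{\|C\|_0 : g_C\equiv f\pmod 2\text{ pointwise}\}$.

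\textbf{Step 2: Fourier reformulation.} Since $\{\phi_\mathbf{a}\}_{\mathbf{a}\in\zz_2^n}$ is a basis of the real-valued functions on $\zz_2^n$ and $\mc{F}$ is invertible, $C\mapsto g_C$ is a bijection onto the real multilinear polynomials with $C=\mc{F}^{-1}(g_C)$. Hence $R_\mathrm{GHZ}(f)=\min\{\|\mc{F}^{-1}(g)\|_0 : g\text{ real multilinear},\ g\equiv f\pmod 2\text{ pointwise}\}$, and it remains to show that the set of such $g$ equals $f+Z(f)$, where $f$ is identified with its multilinear polynomial representation. One inclusion is immediate: every generator of $Z(f)$ has all monomial coefficients in $2\zz$ (as $m\geq 1$), hence vanishes mod $2$ on $\zz_2^n$, and integer combinations preserve this, so $f+Z(f)\subseteq\{g\equiv f\}$. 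For the converse I would set $z:=g-f$, which takes values in $2\zz$ on all of $\zz_2^n$, and invoke Möbius inversion on the Boolean lattice to write each monomial coefficient of $z$ as a $\pm1$-combination of values $z(\mathbf{x})\in 2\zz$; hence all monomial coefficients of $z$ are even integers, making $z$ an integer combination of the generators $2\prod_{j\in T}x_j$, so $z\in Z(f)$. Combining with Step~1 proves the stated formula.

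\textbf{Main obstacle.} The argument is essentially a change of dictionary, with no single deep step; the two points that need care are (i) that no $l2$-MBQC outside the canonical $X(\theta)$ parametrisation does better, which is exactly the content of Theorem~\ref{thm: phase relations ld-MBQC}, and (ii) the characterisation of the real multilinear polynomials vanishing modulo $2$ on $\zz_2^n$ as precisely $Z(f)$, which is the Möbius-inversion argument in Step~2.
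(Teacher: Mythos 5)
Your proposal is correct and follows essentially the same route as the paper's proof: Theorem~\ref{thm: phase relations ld-MBQC} reduces the qubit count to the minimal number of nonzero terms in a $\zz_2$-linear representation, and the transform $\mc{F}$ then identifies the admissible representations with the coset $f+Z(f)$. If anything, your Step~2 is slightly more careful than the paper's argument, since the M\"obius-inversion step showing that pointwise vanishing modulo $2$ on $\zz_2^n$ forces all monomial coefficients to be even integers (hence membership in $Z(f)$) is only implicit in the published proof.
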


\begin{proof}
	From the above discussion, we know that the minimal number of qubits to implement $f$ as a non-adaptive, deterministic $l2$-MBQC with a GHZ state corresponds to the minimal number of terms in the $\zz_2$-linear representation of $f$. Recall that $\mc{F}: \mathbb{R}^{2^n} \rightarrow \mathbb{R}^{2^n}$ in Eq.~(\ref{eq: map of reps on basis element}) is an orthogonal linear map, in particular, it has full rank. It follows that $\sum_\mathbf{a} C_\mathbf{a} \mc{F}(\phi_\mathbf{a}) = \mc{F}(\sum_\mathbf{a} C_\mathbf{a} \phi_\mathbf{a}) = \mathbf{0}$ for $C_\mathbf{a} \in \mathbb{R}$ implies $C_\mathbf{a} = 0$ for all $\mathbf{a} \in \zz_2^n$. Now let $f = \sum_\mathbf{a} C_\mathbf{a}\phi_\mathbf{a} = \sum_\mathbf{a} C'_\mathbf{a}\phi_\mathbf{a}$ such that $\mc{F}(\sum_\mathbf{a} C_\mathbf{a}\phi_\mathbf{a}) = \mc{F}(\sum_\mathbf{a} C'_\mathbf{a}\phi_\mathbf{a}) \pmod 2$. It follows that $\mc{F}(\sum_\mathbf{a} C_\mathbf{a}\phi_\mathbf{a}) = \mc{F}(\sum_\mathbf{a} C'_\mathbf{a}\phi_\mathbf{a}) + \mathbf{z}$ for some $\mathbf{z} \in \mathbb{R}^{2^n}$ with $\mathbf{z} = 0 \pmod 2$. Since $\{\pi_\mathbf{b}\}_{\mathbf{b} \in \zz_2^n}$ is a basis of $\mathbb{R}^{2^n}$, we conclude that $\mc{F}(\sum_\mathbf{a} C_\mathbf{a}\phi_\mathbf{a}) = \mc{F}(\sum_\mathbf{a} C'_\mathbf{a}\phi_\mathbf{a}) + z$ with $z \in Z(f)$.
	Consequently, the optimal implementation of $f$ is given by minimising the number of terms in the $\zz_2$-linear representation of $\mc{F}^{-1}(f+z)$ over all $z \in Z(f)$.\footnote{We remark that for $d>2$ the minimisation over zero polynomials in Eq.~(\ref{eq: zero polynomials}) only provides an upper bound to $R_\mathrm{GHZ}$. The reason is that the representation of the output function via $\zz_2$-linear terms in Eq.~(\ref{eq: phase relations}) breaks down for $ld$-MBQCs.}
\end{proof}

The ambiguity in the $\zz_2$-linear representation of Boolean functions makes computing the qubit count in non-adaptive, deterministic $l2$-MBQC a complex task in general. Since the number of terms in Eq.~(\ref{eq: zero polynomials}) grows doubly exponentially with $n$, a brute force search is generally infeasible. Moreover, the existence of a general solution as in the case of quadratic functions within stabiliser $l2$-MBQC via Thm.~\ref{thm:stabdeterm} seems unlikely by comparison with similar problems in circuit synthesis. For instance, the minimal number of $T$-gates can be related to minimal number of mod-$2$ linear functions with odd coefficients. Solving the latter relates to minimum distance decoding in punctured Reed-Muller codes which is hard in general \cite{AmyMosca2016,seroussi1983maximum}. 

Nevertheless, for the $\delta$-function as well as some highly symmetric functions we can use Thm.~\ref{thm: zero term reduction}, together with the discrete Fourier transform in Eq.~(\ref{eq: map of reps on basis element}), to obtain at least an estimate on the qubit count.

\subsubsection*{Example 1: $n$-dimensional $\delta$-function}

Given a general output function in its polynomial representation $o(\mathbf{i})$ we may use $\mathcal{F}^{-1}$ to obtain its representation in terms of $\zz_2$-linear basis functions and thus study its scaling behaviour. For monomials this decomposition is optimal with respect to minimising necessary $\zz_2$-linear terms.

\begin{corollary}\label{cor: optimality delta-function}
	In order to implement the monomial $f: \zz_2^n \rightarrow \zz_2$, $f(x) = \prod_{j=1}^n x_j$ in non-adaptive, deterministic $l2$-MBQC with a GHZ resource state one requires no fewer than $N = 2^n -1$ qubits, i.e., $R_\mathrm{GHZ}(f) = N$.
\end{corollary}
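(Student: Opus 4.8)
The plan is to use Theorem~\ref{thm: zero term reduction}: $R_\mathrm{GHZ}(f)$ equals the minimal number of nonzero coefficients $C_\mathbf{a}$ in $\mathcal{F}^{-1}(f+z)$ minimised over $z \in Z(f)$ and over affine reparametrisations. So I must show that for the monomial $f(x)=\prod_{j=1}^n x_j$, no choice of zero polynomial $z$ can reduce the support of its $\zz_2$-linear representation below $2^n-1$. First I would compute $\mathcal{F}^{-1}(f)$ explicitly. Since $f = \pi_{\mathbf{1}}/2^{n-1}$ with $\mathbf{1}=(1,\dots,1)$, and $\mathcal{F}$ (suitably normalised) is a Hadamard transform, the coefficients are $C_\mathbf{a} = \langle \phi_\mathbf{a},\pi_\mathbf{b}\rangle$ summed appropriately; using Eq.~(\ref{eqLinearToPoly}) read backwards, one finds $f = \prod_j x_j = \sum_{\mathbf{0}\neq\mathbf{a}\in\zz_2^n} \frac{(-1)^{W(\mathbf{a})-1}}{2^{n-1}} \phi_\mathbf{a}$ — i.e.\ \emph{every} one of the $2^n-1$ nonzero basis functions $\phi_\mathbf{a}$ appears with a nonzero (dyadic rational) coefficient. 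This establishes the upper bound $R_\mathrm{GHZ}(f)\le 2^n-1$, already known from Lm.~\ref{lem: qubit delta function} via the $\delta$-AND equivalence, but more importantly it pins down the precise coefficient vector to analyse.

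The core of the argument is the matching lower bound. Here I would argue that any $z\in Z(f)$ leaves the number of nonzero $\zz_2$-linear coefficients unchanged or larger. The key structural fact is that a zero polynomial $z = 2^m \sum_\mathbf{b} D_\mathbf{b}\prod_j x_j^{b_j}$ has \emph{all} its polynomial coefficients divisible by $2^m$ with $m\ge 1$, hence divisible by $2$; applying $\mathcal{F}^{-1}$, the resulting correction to the $\zz_2$-linear coefficients $C_\mathbf{a}$ is an integer linear combination of columns of $\mathcal{F}^{-1}$ with all entries even. Meanwhile the coefficients of $\mathcal{F}^{-1}(f)$ are $\pm 2^{-(n-1)}$ — in particular they have $2$-adic valuation exactly $-(n-1)$, which is strictly less than the valuation of \emph{any} entry of $\mathcal{F}^{-1}(z)$ (which is $\ge$ something like $m - (n-1) \ge -(n-2)$, but crucially never equals $-(n-1)$ unless $z$ has odd coefficients, contradicting $z\in Z(f)$). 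Therefore $C_\mathbf{a} + (\mathcal{F}^{-1}z)_\mathbf{a} \ne 0$ for every $\mathbf{a}\neq\mathbf{0}$, so the support cannot shrink. A parallel (in fact easier) observation handles the $C_\mathbf{0}$ term and the affine-equivalence freedom: an invertible linear substitution $x\mapsto Px$ sends the monomial to a product of linearly independent linear forms, whose Fourier support is the image of the full support $\{\mathbf{a}\neq\mathbf{0}\}$ under $(P^{-1})^\intercal$ — still of size $2^n-1$ — so no affine move helps either.

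Assembling: the minimal support over $f\sim f'$ and over $z\in Z(f)$ is exactly $2^n-1$, which by Thm.~\ref{thm: zero term reduction} equals $R_\mathrm{GHZ}(f)$, and since the implementation in Lm.~\ref{lem: qubit delta function} achieves $N=2^n-1$ qubits the bound is attained, giving $R_\mathrm{GHZ}(f)=N=2^n-1$.

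\textbf{Main obstacle.} The delicate point is the $2$-adic valuation argument: I need to verify carefully that subtracting a zero polynomial can never create a cancellation among the $\zz_2$-linear coefficients of the monomial. The subtlety is that $Z(f)$ is defined with the factor $2^m$ for $m\ge 1$, so zero polynomials genuinely have a gap in $2$-adic valuation relative to $\mathcal{F}^{-1}(f)$; but one must be sure that $\mathcal{F}^{-1}$ (a $\pm 1$ matrix up to the $2^{-n/2}$ normalisation) does not mix valuations in a way that defeats this — essentially one tracks that $\mathcal{F}^{-1}$ applied to an even-integer vector yields a vector all of whose entries have $2$-adic valuation $\ge 1-(n-1) = 2-n$, strictly above $1-n$, the common valuation of all nonzero entries of $\mathcal{F}^{-1}(f)$. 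Once this valuation bookkeeping is pinned down the rest is routine.
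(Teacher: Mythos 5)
Your proposal is correct and follows essentially the same route as the paper's proof: compute $\mathcal{F}^{-1}$ of the monomial explicitly to find all $2^n-1$ nontrivial coefficients equal to odd multiples of $2^{-(n-1)}$, then observe that no element of $Z(f)$ can cancel any of them because its contributions necessarily carry an extra factor of $2$. Your $2$-adic valuation bookkeeping is just a more explicit rendering of the paper's remark that the coefficients are odd multiples of $2^{-(n-1)}$ and hence cannot be reduced by any zero polynomial, and your additional handling of the $C_{\mathbf{0}}$ term and the affine reparametrisation freedom matches the paper's closing observations.
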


\begin{proof}
	Note that $f$ has degree $\mathrm{deg}(f) = n = W(\mathbf{b})$ for $\mathbf{b} = (1)^n := (1,\cdots,1) \in \zz_2^n$, hence, by Eq.~(\ref{eq: map of reps on basis element}) it has coefficient $\frac{1}{2^{W(\mathbf{b})-1}} = \frac{1}{2^{n-1}}$. Explicitly, the coefficients in the $\zz_2$-linear representation under the transformation $\mc{F}^{-1}$ read:
	\begin{align*}
	    \mathcal{F}^{-1}(\prod_{j=1}^n x_j) \stackrel{\mathbf{b} = (1)^n}{=} \mathcal{F}^{-1}(\prod_{j=1}^n x_j^{b_j}) &= \mathcal{F}^{-1}(\frac{1}{2^{W(\mathbf{b})-1}}\pi_{\mathbf{b}}) \\
	    &= \sum_{\mathbf{a} \in \zz^n_2} \frac{1}{2^{W(\mathbf{b})-1}} \langle \phi_\mathbf{a},\pi_{\mathbf{b}} \rangle \phi_\mathbf{a} = \frac{1}{2^{n-1}} \sum_{\mathbf{a} \in \zz^n_2} (-1)^{W(\mathbf{a})-1} \oplus_{j=1}^n a_j x_j
	\end{align*}
	Since these terms are all odd multiples of $\frac{1}{2^{W(\mathbf{b})-1}}$, they can only be reduced by a zero term of degree at least $n$, however, there are no such terms in $Z(f)$, hence, the representation in terms of $\zz_2$-linear functions under the transformation $\mathcal{F}^{-1}$ is already optimal. Finally, note that the overlap with $\phi_\mathbf{a}$, $\mathbf{a} = 0$ can be implemented by post-processing, leaving $N = 2^n-1$ non-zero terms. 
\end{proof}

Cor.~\ref{cor: optimality delta-function} reproduces Prop.~1 in \cite{HobanCampbell2011}. Note also that the $n$-dimensional $\delta$-function arises from monomials by linear pre-composition in Eq.~(\ref{eq: linear equivalence classes}), hence, $R_\mathrm{GHZ}(\delta) = 2^n-1$.

\subsubsection*{Example 2: elementary symmetric functions}

While for monomials the transformation in Eq.~(\ref{eq: map of reps on basis element}) is already optimal in the number of non-zero coefficients (and thus in the number of qubits in the implementation as non-adaptive, deterministic $l2$-MBQC with a GHZ resource state), this is no longer the case for more general polynomials. Nevertheless, for certain symmetric functions the minimisation problem in Thm.~\ref{thm: zero term reduction} under the equivalence relation in Eq.~(\ref{eq: zero polynomials}) simplifies.

As an example we consider elementary symmetric functions,
\begin{equation*}
    \Sigma^n_k(\mathbf{x}) \ = \sum_{\substack{i_1<\cdots<i_k\\ i_j \in \{1,\cdots,n\}}} x_{i_1} \cdots x_{i_k}, \quad \quad k \leq n\; .
\end{equation*}
Plugging $\Sigma^n_k$ into the inverse transformation in Eq.~(\ref{eq: map of reps on basis element}) results in a total number of terms equal to $\sum_{l=1}^k \binom{n}{l}$. It turns out that we can minimise this number by (at least) $\binom{n}{k}-1$ as follows. We add the zero polynomial $z \in Z(\Sigma^n_k)$ given by
\begin{align*}
    z &= (-2)^{n-k} x_1 \cdots x_n + (-2)^{n-k-1}\sum_{\substack{i_1< \cdots <i_{n-1}\\i_j \in \{1,\cdots,n\}}}x_{i_1} \cdots x_{i_{n-1}} + \cdots + (-2)\sum_{\substack{i_1< \cdots <i_{k+1}\\ i_j \in \{1,\cdots,n\}}} x_{i_1} \cdots x_{i_{k+1}} \\
    &= \sum_{l=0}^{n-k-1} (-2)^{n-k-l}\ \Sigma_l^n(\mathbf{x})\; .
\end{align*}
By construction, $\mc{F}^{-1}(\Sigma^n_k)$ and $\mc{F}^{-1}(z)$ have the same (smallest) coefficient $\frac{1}{2^{k-1}}$, and we can thus compare the coefficients in their representation based on $\zz_2$-linear functions $\phi_\mathbf{a}$, $\mathbf{a} \in \zz_2^n$. Clearly, $\mc{F}^{-1}(\Sigma^n_k+z)$ contains the term $x_1 \oplus \cdots \oplus x_n$ and thus $C^{\Sigma^n_k+z}_{W(\mathbf{a}) = n} = \frac{(-1)^{n-k}}{2^{k-1}}$. For the terms of length $k \leq m < n$, the coefficients $C^{\Sigma^n_k+z}_{W(\mathbf{a})=m}$ contain contributions from all higher degree terms in the polynomial representation of $\Sigma^n_k+z$:
\begin{align*}
    C^{\Sigma^n_k+z}_{W(\mathbf{a}) = m} &= \frac{1}{2^{k-1}} (-1)^{(n-k)+(m-1)} \left(1 - \binom{n-m}{n-m-1} + \binom{n-m}{n-m-2} - \cdots + (-1)^{n-m} \right) \\
    &= \frac{1}{2^{k-1}} (-1)^{(n-k)+(m-1)} \left( \sum_{l=0}^{n-m} (-1)^l \binom{n-m}{n-m-l}\right) = 0\; .
\end{align*}
Hence, with respect to monomials of degree $k \leq m$ in $\Sigma^n_k+z$, we have reduced the overall number of non-zero coefficients by $\binom{n}{k}-1$. Note also that the coefficients of the remaining monomials of degree $1 \leq m < k$ are non-zero since there, the above sum is truncated and reads
\begin{align*}
    C^{\Sigma^n_k+z}_{W(\mathbf{a}) = m} &= \frac{1}{2^{k-1}} (-1)^{(n-k)+(m-1)} \left(1-\binom{n-m}{n-m-1}+\binom{n-m}{n-m-2} - \cdots + (-1)^{n-k} \binom{n-m}{k-m} \right) \\
    &= \frac{1}{2^{k-1}} (-1)^{(n-k)+(m-1)} \left( \sum_{l=0}^{n-k} (-1)^l \binom{n-m}{n-m-l}\right) \neq 0\; ,
\end{align*}
thus leaving a total of $\sum_{l=1}^{k-1} \binom{n}{l} + 1$ terms in the $\zz_2$-linear representation, hence, $R_\mathrm{GHZ}(\Sigma^n_k) \leq \sum_{l=1}^{k-1} \binom{n}{l} + 1$.
Note also that: (i) $R_\mathrm{GHZ}(\Sigma^n_2) = \binom{n}{1} + 1 = n + 1$ confirms Thm.~\ref{thm: resource of quadratic Boolean functions}, since $\Sigma^n_2$ is quadratic with $\mathrm{rk}(\Sigma^n_2) = n$ (see also Prop.~2 in Ref.~\cite{HobanCampbell2011}), and (ii) $R_\mathrm{GHZ}(\Sigma^n_n) = \sum_{l=1}^{n-1} \binom{n}{l} + 1 = 2^n - 1$ reproduces the minimal number of qubits within $l2$-MBQC for monomials in Cor.~\ref{cor: optimality delta-function} (see also Prop.~1 in \cite{HobanCampbell2011}).
Comparing the latter, we draw the following conclusion from the above classification.

\begin{corollary}\label{cor: mismatch degree vs qubit count}
    There are Boolean functions $f,g: \zz_2^n \rightarrow \zz_2$ such that $\mathrm{deg}(f) > \mathrm{deg}(g)$, yet $R_\mathrm{GHZ}(f) < R_\mathrm{GHZ}(g)$.
\end{corollary}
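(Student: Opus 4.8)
The plan is to exhibit an explicit pair of functions realising the claimed mismatch, using the bounds already established for elementary symmetric functions. The key observation is that the preceding analysis gives us a handle on $R_\mathrm{GHZ}$ for the family $\Sigma^n_k$, and these bounds are not monotone in $k$ (which governs the degree). Concretely, I would set $g = \Sigma^n_n = \prod_{j=1}^n x_j$, for which Cor.~\ref{cor: optimality delta-function} gives $R_\mathrm{GHZ}(g) = 2^n - 1$ and $\mathrm{deg}(g) = n$. For $f$ I would take a function of strictly larger degree but much smaller qubit count -- but since $n$ is the maximal degree on $n$ variables, this forces us to increase the number of input variables. So I would instead compare across different input arities, viewing both $f$ and $g$ as functions $\zz_2^N \to \zz_2$ for a common large $N$ by padding with dummy variables, which changes neither the degree nor $R_\mathrm{GHZ}$.

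The cleanest instantiation: fix some large $n$ and let $g = \Sigma^n_n$ on $n$ variables (degree $n$, qubit count $2^n-1$), then let $f$ be an elementary symmetric function $\Sigma^m_2$ on $m$ variables for $m$ chosen so that $\mathrm{deg}(f) = 2 < n = \mathrm{deg}(g)$ is false -- so this direction is wrong. Let me reverse the roles to match the statement as written ($\mathrm{deg}(f) > \mathrm{deg}(g)$ but $R_\mathrm{GHZ}(f) < R_\mathrm{GHZ}(g)$): take $g$ to be a function of small degree but large qubit count, and $f$ a function of large degree but small qubit count. For $g$, take $g = \Sigma^n_2$ on $n$ variables, so $\mathrm{deg}(g) = 2$ and, by the computation above, $R_\mathrm{GHZ}(g) = n+1$. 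For $f$, take $f = \Sigma^3_3 = x_1x_2x_3$ on $3$ variables: $\mathrm{deg}(f) = 3 > 2 = \mathrm{deg}(g)$, and by Cor.~\ref{cor: optimality delta-function}, $R_\mathrm{GHZ}(f) = 2^3 - 1 = 7$. Then choosing $n \geq 7$ gives $R_\mathrm{GHZ}(g) = n+1 \geq 8 > 7 = R_\mathrm{GHZ}(f)$ while $\mathrm{deg}(f) = 3 > 2 = \mathrm{deg}(g)$, as required. Embedding both into a common variable count $N = \max(n,3) = n$ by adjoining dummy variables leaves degrees and qubit counts unchanged, so the statement holds verbatim.

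The main step requiring care is verifying that padding with dummy variables affects neither the degree nor $R_\mathrm{GHZ}$: the degree is obviously unchanged since the polynomial is literally the same, and $R_\mathrm{GHZ}$ is unchanged because an $l2$-MBQC computing $f$ on $\zz_2^n$ trivially computes the same function on $\zz_2^{n+1}$ by having the pre-processing $P$ ignore the extra input bit, while conversely any implementation on $\zz_2^{n+1}$ that happens to compute a function independent of the last bit restricts to one on $\zz_2^n$ -- or more simply, one invokes Thm.~\ref{thm: zero term reduction} and notes the $\zz_2$-linear representation, hence its minimal term count, is unaffected by formally allowing an extra variable that never appears. I do not anticipate a genuine obstacle here; the corollary is essentially a bookkeeping consequence of the two worked examples, and the only thing to get right is presenting the witnesses cleanly and noting the padding argument so that $f$ and $g$ share a domain. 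An alternative, avoiding padding entirely, is to stay within a single arity $n$ large and compare $g = \Sigma^n_2$ (degree $2$, count $n+1$) with $f = \Sigma^n_k$ for a fixed small $k \geq 3$, using the bound $R_\mathrm{GHZ}(\Sigma^n_k) \leq \sum_{l=1}^{k-1}\binom{n}{l} + 1$; but this bound grows with $n$, so it does not immediately give $R_\mathrm{GHZ}(f) < n+1$, making the fixed-small-arity witness for $f$ the more robust choice.
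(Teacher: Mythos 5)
Your proposal is correct and essentially identical to the paper's own proof, which exhibits the same witnesses: $R_\mathrm{GHZ}(\Sigma^7_2) > R_\mathrm{GHZ}(\Sigma^3_3)$ despite $\mathrm{deg}(\Sigma^3_3) = 3 > 2 = \mathrm{deg}(\Sigma^7_2)$, obtained by comparing the linear scaling of quadratic functions (Thm.~\ref{thm: resource of quadratic Boolean functions}) with the exponential scaling in Cor.~\ref{cor: optimality delta-function}. Your explicit remark on padding with dummy variables to place $f$ and $g$ on a common domain $\zz_2^n$ is a point the paper leaves implicit, but it does not change the substance of the argument.
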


\begin{proof}
    This follows immediately by comparing the linear scaling (in the number of qubits) of quadratic Boolean functions according to Thm.~\ref{thm: resource of quadratic Boolean functions} with the exponential scaling of the symmetric function $\Sigma^n_n$ and the $n$-qubit $\delta$-function in Cor.~\ref{cor: optimality delta-function}. For instance, $R_\mathrm{GHZ}(\Sigma^7_2) > R_\mathrm{GHZ}(\Sigma^3_3)$ despite $\mathrm{deg}(\Sigma^3_3) = 3 > 2 = \mathrm{deg}(\Sigma^7_2)$.
\end{proof}

In summary, we find that---unlike the contextuality threshold in Ref.~\cite{FrembsRobertsBartlett2018} and the close correspondence with the Clifford hierarchy in Thm.~\ref{thm:CliffHierNec}---the degree is not sufficient to compare Boolean functions with respect to their optimal representation in non-adaptive, deterministic $l2$-MBQC with a GHZ resource state. The computational classification of the latter therefore possesses a rich substructure beyond the non-contextual case.

\section{Discussion}\label{sec: discussion}

We have assessed the ability to compute Boolean functions in non-adaptive, deterministic $l2$-MBQC under various resource restrictions.
We have considered the computational power of stabiliser $l2$-MBQC, as well as $l2$-MBQC involving operations from higher levels in the Clifford hierarchy. We find that stabiliser $l2$-MBQCs can only compute quadratic functions with high probability (with the Anders and Browne example~\cite{AndersBrowne2009} being a prototypical example), while higher degree polynomials require operations from increasing levels in the Clifford hierarchy. In this way, we obtain a hierarchy of resources for non-adaptive, deterministic $l2$-MBQC beyond contextuality in \cite{Raussendorf2013}.

In addition to the necessity of certain quantum operations in $l2$-MBQC for evaluating Boolean functions, we posed the resource-theoretic problem of determining the minimal number of qubits needed to implement a given Boolean function within non-adaptive, deterministic $l2$-MBQC. Clearly, this is an important and often limiting resource for near-term quantum devices. We characterise this problem by focusing on GHZ resource states and find that it too reveals a complex substructure to contextuality. At the heart of this is the (quantum Fourier) transformation mapping between two different representations of a Boolean function, as polynomial and as a $\zz_2$-linear sum. Interestingly, our characterisation closely resembles known hard problems in circuit synthesis and minimal distance coding in punctured Reed-Muller codes \citep{AmyMosca2016}, suggesting that finding the minimal number of qubits is hard in general. Nevertheless, in certain cases the sharp bound can be found, such as for quadratic functions within stabiliser $l2$-MBQC.

Finally, we comment on some close connections and extensions of our results.

\textbf{Adaptivity.} The motivation for our setting was based on the recent results for shallow circuits, which constitute the first proof of a quantum-classical gap~\cite{BravyiGossetKoenig2018}. For this class of circuits, a constant depth circuit of one and two qubit gates is performed - that depends on the classical input bit string - followed by a measurement in the computational basis. Conversely, we consider a fixed unitary circuit (i.e., the resource state preparation), followed by a measurement that depends on the input bit string. This simplifies the analysis and allows us to derive strong bounds on resources in this scheme, but the same reasoning can also be applied in the adaptive case. As outlined in more detail in App.~\ref{secAdaptivity}, within the latter the exponential scaling in qubit count, along with the necessity of non-Clifford gates for certain functions in the non-adaptive case quickly collapse. Nevertheless, one can sometimes trade off between space and time resources such as in \cite{BravyiGossetKoenig2018}. We hope that the non-adaptive case can be leveraged to understand resource costs for more general adaptive computations.

\textbf{Magic, contextuality, and cohomology.} Both magic and contextuality can be classified by cohomology. In the former case, certain gates in the $D$'th level in the Clifford hierarchy $\mathcal{C}_N^{D}$ on $N$ qubits can be classified by elements of the group cohomology $\mathcal{H}^{D}(\zz_2^N, U(1))$, following for example Ref.~\cite{yoshida2017gapped}, while in the latter case, group cohomology also appears as a classifier for certain proofs of contextuality~\cite{AbramskyBrandenburger2011, raussendorf2016cohomological, okay2017topological, OkayTyhurstRaussendorf2018}. As both magic and contextuality appear as resources for quantum computation, it is tempting to construct a unified framework for resource theories based on cohomology. 

Recently, the role of magic in certain many-body systems known as symmetry-protected topological (SPT) phases\footnote{We remark that such phases are also classified by group cohomology~\cite{chen2013symmetry}.} has been studied~\cite{daniel2020quantum,liu2020many,ellison2020symmetry}, whereby all states within a phase of matter possess magic. Such SPT phases have also been identified as resources for MBQC~\cite{else2012symmetry,NWMBQC,miller2016,raussendorf2019computationally,devakul2018universal,roberts2020symmetry}. It would be interesting to study the role of many-body magic for computational universality, particularly with the example of Ref.~\cite{miller2016}, which is universal with only Pauli measurements. Further, it would be interesting to consider the role of contextuality in the fault-tolerant setting -- particularly fault-tolerant MBQC~\cite{raussendorf2005long,raussendorf2006fault,raussendorf2007fault,raussendorf2007topological,brown2020universal,bravyi2020quantum} -- where non-Clifford operations require vastly more resources than non-Clifford operations (and indeed is the motivation for considering magic as a resource in the present setting).

\begin{acknowledgements} 
We acknowledge support from Australian Research Council via the Centre of Excellence in Engineered Quantum Systems (EQUS) project number CE170100009. MF was supported through a studentship in the Centre for Doctoral Training on Controlled Quantum Dynamics at Imperial College funded by the EPSRC, as well as through grant number FQXi-RFP-1807 from the Foundational Questions Institute and Fetzer Franklin Fund, a donor advised fund of Silicon Valley Community Foundation, and ARC Future Fellowship FT180100317. ETC's technical contributions were made while at the University of Sheffield.  SDB acknowledges additional support from the Australian Research Council via project number DP220101771.
\end{acknowledgements}

\bibliographystyle{unsrtnat}
\bibliography{literature.bib}
\appendix

\section{Proof of Theorem~\ref{thm: phase relations ld-MBQC}}\label{app: proof of Thm 1}
We first prove Thm.~\ref{thm: phase relations ld-MBQC} for the local measurement operators $X(\theta)$, $\theta = e^{i\pi \vartheta}$ in Eq.~(\ref{eq: l2-MBQC operators X(theta,f)}). The relation between their eigenstates and the computational basis reads as follows:
\begin{equation*}
    |m\rangle_{\vartheta} = \frac{1}{\sqrt{2}}(|0\rangle + (-1)^m e^{i\pi\vartheta}|1\rangle)\; .
\end{equation*}
Conversely, the computational basis expressed in terms of eigenstates of $X(\theta)$ reads
\begin{equation}\label{eq: computational basis state in measurement basis2}
    |q\rangle = \frac{1}{\sqrt{2}}e^{-i\pi q\vartheta}\sum_{m=0}^{1} (-1)^{qm} \ket{m}_{\vartheta}\; .
\end{equation}
We encode the choice of local measurement operators by $M_k(c_k(\mathbf{i})) = X(e^{i\pi c_k(\mathbf{i}) \vartheta_k})$ for linear functions $c_k: \zz_2^n \rightarrow \zz_2$ with $c_k(0) = 0 \pmod 2$\footnote{In other words, $c_k = \phi_{\mathbf{a}_k}$ with $0 \neq \mathbf{a}_k \in \zz_2^n$ for every $k \in \{1,\cdots,N\}$ (see Sec.~\ref{sec: Polynomial vs Z_2-linear function representation}).} and parameters $\vartheta_k \in [0,1)$. In particular, note that $M(0) = X$.  Rewriting the $N$-qubit GHZ resource state in Eq.~(\ref{eq: GHZ resource state}) in terms eigenstates of the local measurement bases thus yields
\begin{align}
    |\psi\rangle = \frac{1}{\sqrt{2}} \sum_{q=0}^1 |q\rangle^{\otimes N}
    &= \frac{1}{\sqrt{2}} \sum_{q=0}^1 \otimes_{k=1}^N \left(\frac{1}{\sqrt{2}} e^{-i\pi q c_k(\mathbf{i})\vartheta_k} \sum_{m_k=0}^1 (-1)^{qm_k} |m_k\rangle_{\vartheta} \right) \nonumber\\
    &= \left(\frac{1}{\sqrt{2}}\right)^{N+1} \sum_{q=0}^1 \left( \sum_{\mathbf{m} \in \zz_2^N} (-1)^{q(\sum_{k=1}^N m_k - o'(\mathbf{i}))} \otimes_{k=1}^N |m_k\rangle_{\vartheta} \right) \nonumber\\
    &= \left(\frac{1}{\sqrt{2}}\right)^{N-1} \left( \sum_{\substack{\mathbf{m} \in \zz_2^N, \\ \oplus_{k=1}^N m_k = o'(\mathbf{i})}} \otimes_{k=1}^N|m_k\rangle_{\vartheta} \right)\; , \label{eq: parity eigenstate}
\end{align}
where we defined $(-1)^{o'(\mathbf{i})} = e^{-i\pi q c_k(\mathbf{i})\vartheta_k}$ and we used that $|\psi\rangle$ is an eigenstate of the global measurement operators $M(\mathbf{i}) = \otimes_{k=1}^N M_k(c_k(\mathbf{i}))$. Finally, since the output function of the non-adaptive, deterministic $l2$-MBQC reads $o(\mathbf{i}) = \oplus_{k=1}^N m_k$, we find $o'=o$, hence,
\begin{equation*}
    o(\mathbf{i}) = \sum_{k=1}^N c_k(\mathbf{i})\vartheta_k \pmod{2} \; .
\end{equation*}

We are left to show that every local measurement operator is of the form in Eq.~(\ref{eq: l2-MBQC operators X(theta,f)}). To see this, note that the global measurement operators $M(\mathbf{i}) = \otimes_{k=1}^N M_k(c_k(\mathbf{i}))$ are such that $|\psi\rangle$ is a parity eigenstate of $M(\mathbf{i})$ for all inputs $\mathbf{i} \in \zz_2^n$. 
For every $\mathbf{i} \in \zz_2^n$, rewrite $|\psi\rangle$ in the local eigenbases corresponding to the $M_k(c_k(\mathbf{i}))$. This yields a superposition of product states $|\mathbf{m}\rangle_{\varphi,\vartheta} = \otimes_{k=1}^N |m_k\rangle_{\varphi,\vartheta}$, where we again denote every product state by the Boolean vector $\mathbf{m} \in \zz_2^N$ such that
\begin{equation*}
    |0\rangle_{\varphi,\vartheta} = \sin(\varphi)|0\rangle + e^{\pi i\vartheta} \cos(\varphi)|1\rangle \quad \quad \quad 
    |1\rangle_{\varphi,\vartheta} = \cos(\varphi)|0\rangle - e^{\pi i\vartheta} \sin(\varphi)|1\rangle\; .
\end{equation*}
In particular, note that $|m\rangle_{\vartheta} = |m\rangle_{\frac{\pi}{4},\vartheta}$. Clearly, the product state $|\mathbf{m}\rangle_{\varphi,\vartheta}$ has parity $m := \oplus_{k=1}^N m_k$. Moreover, the coefficient to the product state $|\mathbf{m}\rangle_{\varphi,\vartheta}$ reads
\begin{equation}\label{eq: product state coefficients}
    {}_{\varphi,\vartheta}\langle \mathbf{m} \vert \psi \rangle = \frac{1}{\sqrt{2}} \left(\prod_{k=1}^N \Phi^{m_k}(\varphi_k) + (-1)^m e^{\pi i \sum_{k=1}^N \vartheta_k } \prod_{k=1}^N \Phi^{m_k \oplus 1}(\varphi_k)\right)\; ,
\end{equation}
where we defined $\Phi^0(\varphi_k) = \sin(\varphi_k)$ and $\Phi^1(\varphi_k) = \cos(\varphi_k)$, and the two summands correspond to the inner product between the two summands in $|\psi\rangle = \frac{1}{\sqrt{2}}(|0\rangle^{\otimes N} + |1\rangle^{\otimes N})$ with $|\mathbf{m}\rangle_{\varphi,\vartheta}$.\footnote{Note that local measurements in the computational basis only change the resource state and can thus be neglected.}

We have a parity eigenstate if ${}_{\varphi,\vartheta}\langle \mathbf{m} \vert \psi \rangle=0$ for all $\mathbf{m}$ with $m \neq o$ for some $o \in \zz_2$. 
We thus obtain $\frac{2^N}{2}$ constraints from Eq.~(\ref{eq: product state coefficients}), both on absolute values and phases of the form
\begin{equation}\label{eq: product state coefficients2}
    \prod_{k=1}^N \Phi^{m_k}(\varphi_k) + (-1)^m e^{\pi i \sum_{k=1}^N\vartheta_k}  \prod_{k=1}^N \Phi^{m_k \oplus 1}(\varphi_k) = 0 \quad \quad \quad \forall \mathbf{m} \in \zz_2^N \text{ s.t. } m \neq o\; .
\end{equation}
Clearly, the constraints on absolute values are satisfied for $\varphi = \frac{\pi}{4}$. Moreover, for $N \geq 3$ all solutions are of this form. First, for $N \geq 3$ odd, consider pairs of constraints in Eq.~(\ref{eq: product state coefficients}) of the same parity $m = \oplus_{k=1}^N m_k$. Specifically, given any $\mathbf{m} \in \zz_2^N$ and another vector arising from $\mathbf{m}$ by flipping all bits except the one at site $k$. Then we have the following pair of constraints,
\begin{align*}
    \Phi^{m_k}(\varphi_k)\prod_{k'\neq k} \Phi^{m_{k'}}(\varphi_{k'}) + (-1)^me^{\pi i \sum_{k=1}^N\vartheta_k} \Phi^{m_k \oplus 1}(\varphi_k)\prod_{k'\neq k} \Phi^{m_{k'} \oplus 1}(\varphi_{k'}) &= 0 \\
    \Phi^{m_k}(\varphi_k)\prod_{k'\neq k} \Phi^{m_{k'} \oplus 1}(\varphi_{k'}) + (-1)^me^{\pi i \sum_{k=1}^N\vartheta_k} \Phi^{m_k \oplus 1}(\varphi_k)\prod_{k'\neq k} \Phi^{m_{k'}}(\varphi_{k'}) &= 0
\end{align*}
These imply $\frac{\prod_{k'\neq k}\Phi^{m_{k'}}(\varphi_{k'})}{\prod_{k'\neq k}\Phi^{m_{k'} \oplus 1}(\varphi_{k'})} = (-1)^{m+1}e^{\pi i \sum_{k=1}^N\vartheta_k} \frac{\Phi^{m_k \oplus 1}(\varphi_k)}{\Phi^{m_k}(\varphi_k)} = \frac{\prod_{k'\neq k}\Phi^{m_{k'} \oplus 1}(\varphi_{k'})}{\prod_{k'\neq k}\Phi^{m_{k'}}(\varphi_{k'})}$ and thus $|\sin(\varphi_k)| = |\cos(\varphi_k)|$, hence, $\varphi_k = \frac{\pi}{4}$. For $N$ even, similar constraints yield $|\Phi^{m_k}(\varphi_k)\Phi^{m_{k'}}(\varphi_{k'})| = |\Phi^{m_k \oplus 1}(\varphi_k)\Phi^{m_{k'} \oplus 1}(\varphi_{k'})|$. For $N \neq 2$ we thus again find $\varphi_k = \frac{\pi}{4}$, since for another pair of constraints in Eq.~(\ref{eq: product state coefficients}) also $|\Phi^{m_k}(\varphi_k)\Phi^{m_{k'} \oplus 1}(\varphi_{k'})| = |\Phi^{m_k+ \oplus 1}(\varphi_k)\Phi^{m_{k'}}(\varphi_{k'})|$, hence, $\frac{|\Phi^{m_{k'}}(\varphi_{k'})|}{|\Phi^{m_{k'} \oplus 1}(\varphi_{k'})|} = \frac{|\Phi^{m_k \oplus 1}(\varphi_k)|}{|\Phi^{m_k}(\varphi_k)|} = \frac{|\Phi^{m_{k'} \oplus 1}(\varphi_{k'})|}{|\Phi^{m_{k'}}(\varphi_{k'})|}$. Finally, for all $N > 2$ we find $(-1)^{m+1} e^{\pi i \sum_{k=1}^N \vartheta_k}=1$, hence, $e^{\pi i \sum_{k=1}^N \vartheta_k} = (-1)^{m+1} = (-1)^o$, which recovers the first part of the proof.

\section{Proof of Lemma~\ref{lem: qubit delta function}}\label{app: proof of Lm 1}

Consider the resource state given by the $N$-qubit GHZ state $|\psi\rangle$ in Eq.~(\ref{eq: GHZ resource state}) with $N=2^n-1$, and consider the measurement procedure $0 \rightarrow M(0) = X$ and $1 \rightarrow M(1) = X(\theta) = X(e^{i\pi\vartheta})$ with measurements in Eq.~(\ref{eq: l2-MBQC operators X(theta,f)}), which we re-state here for convenience,
\begin{equation*}\label{eq: generalised qubit operators}
    X(\theta) = \left( \begin{array}{cc}
       0 & \theta^*  \\
    \theta & 0  
    \end{array} \right) \quad \quad X(\theta)|q\rangle = \theta^{1-2q}|q\oplus 1\rangle = e^{i\pi(1-2q)\vartheta}|q\oplus 1\rangle\; .
\end{equation*}
As before, the measurement operators $M_k(c_k(\mathbf{i})) = X(e^{i\pi c_k(\mathbf{i}) \vartheta_k})$ are specified by linear functions $c_k: \zz_2^n \rightarrow \zz_2$. In particular, we set
\begin{equation*}
    c_k(\mathbf{i}) := \phi_\mathbf{a}(\mathbf{i}) = \oplus_{j=1}^n a_j i_j\; , \quad \quad \mathbf{0} \neq \mathbf{a} \in \zz_2^n\; .
\end{equation*}
In other words, the qubits in $|\psi\rangle$ are indexed by vectors $\mathbf{0}\neq \mathbf{a} \in \zz_2^n$. We prove that this indeed allows us to compute the $n$-dimensional $\delta$-function in Eq.~(\ref{eq: n-dimensional delta function}) for a suitable $\vartheta_k = \vartheta$.

First, consider the case of the input string containing exactly one non-zero entry, e.g. $\mathbf{i}^T = (1,0,\cdots,0)$, and count the number of phases $\vartheta$ that we collect. As $\vartheta$ is independent of the site, this is simply the number of functions $\phi_\mathbf{a}$ that $i_1$ appears in. There is one function in which it appears by itself, then $n-1$ functions where it appears together with another input, $\binom{n-1}{2}$ in which it appears together with two more inputs and so on. Overall, the number of functions is
\begin{equation*}
    \sum_{k=0}^{n-1} \binom{n-1}{k} = 2^{n-1}\; .
\end{equation*}
For inputs containing two non-zero entries, e.g., $\mathbf{i} = (1,1,0,\cdots,0)$, we again count the number of appearances of, in this case, $i_1$ and $i_2$. Note that only those functions $\phi_\mathbf{a}$ will contribute for which $\phi_\mathbf{a}(\mathbf{i}) = 1$, i.e., those that contain exactly one but not both of entries. The corresponding counting of appearances is thus given by
\begin{equation*}
\sum_{k=0}^{n-2} \binom{2}{1}\binom{n-2}{k} = 2 \cdot 2^{n-2} = 2^{n-1}.
\end{equation*}
The general case with $m$ non-zero entries reads as follows:
\begin{equation*}
\sum_{k=1}^{\left \lceil{\frac{m}{2}}\right \rceil} \binom{m}{2k-1} \sum_{l=0}^{n-m} \binom{n-m}{l} = 2^{m-1} \cdot 2^{n-m} = 2^{n-1}
\end{equation*}
Hence, for all but the zero input we flip the overall parity in Eq.~(\ref{eq: phase relations}) if we set
\begin{equation}\label{eq: qubit delta function phase}
(e^{\pi i \vartheta})^{2^{n-1}} = -1\ \Longleftrightarrow\ \vartheta = 2^{-(n-1)} \pmod 2\; . 
\end{equation}
Finally, note our setup computes the function $o(\mathbf{i}) = \delta(\mathbf{i}) + 1$, hence, we obtain the $n$-dimensional $\delta$-function by simple post-processing.

\section{Universality of non-adaptive, deterministic $ld$-MBQC with $d$ prime}\label{sec: Proof of lm: qudit delta function}

In this section we show that any function $f: \zz_d^n \rightarrow \zz_d$ for $d$ prime, can be implemented using a non-adaptive, deterministic $ld$-MBQC. We will follow a similar strategy to the proof of Thm.~\ref{thm: ld-MBQC universal} in App.~\ref{app: proof of Thm 1} We start by choosing measurement operators for prime dimension $d$ similar to those in  in  Eq.~(\ref{eq: l2-MBQC operators X(theta,f)}), namely
\begin{equation*}
    M(0)|q\rangle := X|q\rangle = |q\oplus 1\rangle, \quad \quad
    M(c)|q\rangle
    := \theta(c) \chi^{c q^{d-1}} |q\oplus 1\rangle
    \quad 1 \leq c \leq d-1\; .
\end{equation*}
If we set $\theta(c)^d = \chi^{-(d-1)c}$ we have $M(c)^d = 1$ for all $c \in \zz_d$. We find the following eigenstates,
\begin{align*}
    |m\rangle_{\theta(c)} &= \frac{1}{\sqrt{d}}(|0\rangle + \omega^{-m} \theta(c)|1\rangle + (\omega^{-m}\theta(c))^2\chi^c|2\rangle + \cdots + (\omega^{-m}\theta(c))^{d-1}(\chi^c)^{d-2}|d-1\rangle) \\
    &= \frac{1}{\sqrt{d}}\sum_{q=0}^{d-1} (\omega^{-m}\theta(c))^{q}(\chi^c)^{(q-1)q^{d-1}}|q\rangle\; ,
\end{align*}
with corresponding expressions in terms of computational basis states,
\begin{equation}\label{eq: computational basis state in measurement basis}
    |q\rangle = \frac{1}{\sqrt{d}}\frac{1}{\theta(c)^{q}\chi^{c(q-1)q^{d-1}}} \sum_{m=0}^{d-1} \omega^{qm} |m\rangle_{\theta(c)}, \quad \forall c \in \zz_d\; .
\end{equation}
We fix the resource state to be the $N$-qudit GHZ state for $N = d^n-1$,
\begin{equation*}
    |\psi\rangle = \frac{1}{\sqrt{d}}\sum_{q=0}^{d-1}|q\rangle^{\otimes N}\; .
\end{equation*}
Assume that the output function $o: \zz^n_d \rightarrow \zz_d$ is encoded in the phase relations as follows
\begin{equation}\label{eq: parity phase relations1}
    \prod_{1 \leq k \leq d^n-1} \theta(c_k(\mathbf{i}))^q\chi^{c_k(\mathbf{i})(q-1)q^{d-1}} = \omega^{qo(\mathbf{i})}\; .
\end{equation}
Rewriting $|\psi\rangle$ in terms of the respective measurement bases via Eq.~(\ref{eq: computational basis state in measurement basis}) then yields
\begin{align*}
    |\psi\rangle &= \frac{1}{\sqrt{d}} \sum_{q=0}^{d-1} \otimes_{k=1}^N \left(\frac{1}{\sqrt{d}} \frac{1}{\theta(c_k(\mathbf{i}))^{q}\chi^{c_k(\mathbf{i})(q-1)q^{d-1}}} \sum_{m_k=0}^{d-1} \omega^{qm_k} |m_k\rangle_{\theta(c_k)} \right) \\
    &= \left(\frac{1}{\sqrt{d}}\right)^{N+1} \sum_{q=0}^{d-1} \left( \omega^{-qo(\mathbf{i})} \sum_{\mathbf{m} \in \zz_d^N} \otimes_{k=1}^N \omega^{qm_k} |m_k\rangle_{\theta(c_k)} \right) \\
    &= \left(\frac{1}{\sqrt{d}}\right)^{N+1} \sum_{q=0}^{d-1} \left( \sum_{\mathbf{m} \in \zz_d^N} \omega^{q(\sum_{k=1}^N m_k - o(\mathbf{i}))} \otimes_{k=1}^N |m_k\rangle_{\theta(c_k)} \right) \\
    &= \left(\frac{1}{\sqrt{d}}\right)^{N-1} \sum_{\substack{\mathbf{m} \in \zz_d^N, \\ \sum_{k=1}^N m_k = o(\mathbf{i}) \pmod d}} \otimes_{k=1}^N|m_k\rangle_{\theta(c_k)}\; .
\end{align*}
It follows that $|\psi\rangle$ is a parity $\sum_{k=1}^N m_k = o(\mathbf{i}) \pmod d$ eigenstate for all operators $M(\mathbf{i})$ with $\mathbf{i} \in \zz_d^n$.

We thus want to show that we can satisfy the phase relations in Eq.~(\ref{eq: parity phase relations1}) for any $o:\zz_d^n \rightarrow \zz_d$ by choosing suitable linear functions for the measurement settings $c_k$. Similar to the case of Boolean functions in Sec.~\ref{sec: Polynomial vs Z_2-linear function representation}, we take as a basis for the space of functions $f: \zz^n_d \rightarrow \zz_d$ all (non-zero) linear functions of the form $\phi_\mathbf{a} = \oplus_{j=1}^n a_j i_j$ for $\mathbf{0} \neq \mathbf{a} \in \zz_d^n$. In analogy with the proof of Lm.~\ref{lem: qubit delta function} in App.~\ref{app: proof of Lm 1} we again count vectors $\mathbf{0} \neq \mathbf{a} \in \zz_d^n$ such that $\phi_\mathbf{a}(\mathbf{i}) \neq 0$ for input $\mathbf{i} \in \zz_d^n$.

First, consider a single non-zero entry, $\zz_d^n \ni \mathbf{i} = (i_1,0,\cdots,0)$ and let $\mathbf{a}$ such that $a_1 \neq 0$. There is $(d-1)$-fold degeneracy resulting from changing $\mathbf{a}$ to $\mathbf{a}'$ such that $a_1' = ra_1$ for some $0\neq r \in \zz_d$ and $a_j' = a_j$ for all $j >2$. This degeneracy yields a the local phase factor\footnote{Note that we are abusing notation slightly by using modulo-$d$ arithmetic over phases with different periods. However, as the functions are computed classically the input is always an element in $\zz_d$.}
\begin{equation}\label{eq: local phase}
    \phi(q) := \prod_{c=0}^{d-1} \theta(c)^q\chi^{c(q-1)q^{d-1}} = \theta^q \chi^{\sum_{c=0}^{d-1} c(q-1)q^{d-1}}
    = \theta^q \chi^{\frac{d(d-1)}{2}(q-1)q^{d-1}},
\end{equation}
where we set $\theta := \prod_{c=0}^{d-1} \theta(c)$. Furthermore, the number of functions $\phi_\mathbf{a}$ with $a_1 \neq 0$ counts $\sum_{k=0}^{n-1} \binom{n-1}{k} (d-1)^k = d^{n-1}$, hence, the overall phase factor in Eq.~(\ref{eq: parity phase relations1}) reads $\phi(q)^{d^{n-1}}$.

Next we consider an input with two non-zero entries, e.g., $\mathbf{i}=(i_1,i_2,0,\cdots,0)$. We need to be more careful about the counting in this case as in contrast to case $\zz_2$, where $i_1 \oplus i_2 = 0$ for $i_1,i_2 \neq 0$, this does not hold in $\zz_d$. For functions $\phi_\mathbf{a}$ with non-zero coefficients $a_1,a_2 \neq 0$, we instead need to count how many combinations $a_1i_1 + a_2i_2 \neq 0 \pmod d$ for any $i_1,i_2 \neq 0$ fixed. It is not hard to see that there are $(d-1)^2-(d-1) = (d-1)(d-2)$ non-trivial combinations, hence, we end up with the following global phase factor in Eq.~(\ref{eq: parity phase relations1}),
\begin{equation*}
    (\phi(q))^{d^{n-2}}
    (\phi(q))^{d^{n-2}}
    (\phi(q)^{d-2})^{d^{n-2}} = (\phi(q)^d)^{d^{n-2}} = \phi(q)^{d^{n-1}}.
\end{equation*}
The first two factors count functions $\phi_\mathbf{a}$ where either $a_1 =0$ or $a_2 = 0$. The third arises from functions $\phi_\mathbf{a}$ with both $a_1,a_2 \neq 0$, out of which there are $((d-1)(d-2))^{d^{n-2}}$ (and where by symmetry we can always group $(d-1)$ together to obtain the phase $\phi(q)$ from Eq.~(\ref{eq: local phase})).

This argument generalises to inputs $\mathbf{i} \in \zz_d^n$ with $m$ non-zero entries. Note first that the number of non-zero linear combinations in $\phi_\mathbf{a} = \oplus_{j=1}^n a_ji_j$ with $W(\mathbf{a}) = k$, denoted by $g(k)$, is given by,
\begin{align*}
    g(1) = (d-1), \quad \quad g(k) &= (d-1)^k - g(k-1) \\
    &= (d-1)^k - (d-1)^{k-1} + g(k-2) \\
    &\hspace{1cm} \vdots \\
    &= \sum_{l=0}^{k-1} (-1)^l (d-1)^{k-l}.
\end{align*}
Now there are $d^{n-m}$ functions for every function that contains at least one $a_k \neq 0$ and for each of those we have the contribution,
\begin{align*}
    \sum_{k=1}^m \binom{m}{k}g(k)
    &= \sum_{k=1}^m \binom{m}{k} \left( \sum_{l=0}^{k-1} (-1)^l (d-1)^{k-l} \right) \\
    &= \binom{m}{m} \left( (d-1)^m - (d-1)^{m-1} + (d-1)^{m-2} - (d-1)^{m-3} + \cdots \right) \\
    &\hspace{1.2cm} +\binom{m}{m-1} \left( (d-1)^{m-1} - (d-1)^{m-2} + (d-1)^{m-3} - \cdots \right) \\
    &\hspace{3.3cm} +\binom{m}{m-2} \left( (d-1)^{m-2} - (d-1)^{m-3} + \cdots \right) \\
    &\hspace{7cm} \vdots \\
    &= \sum_{k=0}^{m-1} (-1)^k (d-1)^{m-k} \left( \sum_{l=0}^k (-1)^l \binom{m}{m-l} \right) \\
    &= \sum_{k=0}^{m-1} (-1)^k (d-1)^{m-k} \left( (-1)^k \binom{m-1}{k} \right) \\
    &= (d-1)\left( \sum_{k=0}^{m-1} (d-1)^k \binom{m-1}{k} \right) \\
    &= (d-1)d^{m-1}.
\end{align*}
In here, the first factor $(d-1)$ will result in the phase $\phi(q)$ from Eq.~(\ref{eq: local phase}) and we thus again obtain the global phase,
\begin{equation}
    (\phi(q)^{d^{m-1}})^{d^{n-m}} = \phi(q)^{d^{n-1}}.
\end{equation}
Finally, we relate this global phase factor to the local phases $\theta(c)$ and $\chi$. Since,
\begin{equation}\label{eq: phase constraint 1}
    \theta^d = \left(\prod_{c=1}^{d-1} \theta(c)\right)^d = \prod_{c=1}^{d-1} \chi^{-(d-1)c} = \chi^{-(d-1)\sum_{k=1}^{d-1} c} = \chi^{-\frac{d(d-1)^2}{2}},
\end{equation}
we need to choose $\theta(c)$ for $1 \leq c \leq d-1$ such that $\theta = \chi^{-\frac{(d-1)^2}{2}}$, e.g. $\theta(c) := \chi^{-\frac{c(d-1)}{d}}$. Next, we insert Eq.~(\ref{eq: phase constraint 1}) into Eq.~(\ref{eq: local phase}) and compute the global phase factors,
\begin{equation*}
    \phi(q)^{d^{n-1}} = (\chi^{-\frac{(d-1)^2}{2}q} \cdot \chi^{(q-1)\frac{d(d-1)}{2}q^{d-1}})^{d^{n-1}}
    = \begin{cases} 1 &\mathrm{if\ } q=0 \\ 
    \chi^{-\frac{d^{n-1}d(d-1)}{2}} (\chi^{-d^{n-1}\frac{(d-1)}{2}})^q &\mathrm{if\ } 1 \leq q \leq d-1\\ \end{cases}.
\end{equation*}
We may thus set $\chi^{-\frac{d^{n-1}(d-1)}{2}} = \omega$ from which it follows that $(\chi^{-\frac{d^{n-1}(d-1)}{2}})^d = 1$, hence, $\phi(q)^{d^{n-1}} = \omega^{q}$. We obtain the following output function,
\begin{equation}
    o(\mathbf{i}) = \begin{cases} 0 &\mathrm{if} \ \mathbf{i} = 0 \\
    1 &\mathrm{if} \ \mathbf{i} \neq 0 \end{cases}\; ,
\end{equation}
from which we compute $\delta(\mathbf{i}) = (d-1) o(\mathbf{i}) + 1$ by simple post-processing.

Finally, as in Thm.~\ref{thm: ld-MBQC universal} the result follows since every function can be written as a sum of $\delta$-functions, $f(\mathbf{i}) = \sum_{\mathbf{j} \in \zz_d^n} f_\mathbf{j} \delta(\mathbf{i} - \mathbf{j})$, $f_\mathbf{j} \in \zz_d$ for all inputs $\mathbf{i} \in \zz_d^n$.

\section{Proof of Theorem \ref{thm:stabdeterm}}\label{sec: Proof of thm:stabdeterm}

Recall from Def.~\ref{defldMBQC} that a non-adaptive, deterministic, level-$2$ (i.e., stabiliser) $l2$-MBQC is specified by the following data: $P \in \mathrm{Mat}(N\times n,\zz_2)$ is the classical, linear pre-processing; $|\psi\rangle$ is an $N$-qubit stabiliser resource state; and $M_k(c_k(\mathbf{i}))$ is a single qubit Pauli operator for every $c_k(\mathbf{i})  = P\mathbf{i}$, input $\mathbf{i} \in \zz_2^n$ and $k \in \{1,\cdots,N\}$. 
Given a stabiliser $l2$-MBQC as above, note that the exact same measurement statistics are obtained if we instead rotate $|\psi\rangle$ by some local Clifford operations and conjugate the measurement settings by the inverse Clifford operations. Therefore, we can assume without loss of generality that $M_k(0)=X_k$ and $M_k(1)=Z_k$. We denote $M(\mathbf{c} = P\mathbf{i})$ to be the tensor product of unitaries measured in this canonical choice.

Consider the quadratic function $f(\mathbf{x}) = \sum_{i=1}^n l_i x_i + \sum_{i<j} q_{i,j} x_i x_j$ with associated matrix $Q$. We require
\begin{align}
    M(\mathbf{0}) \ket{\psi}& = (+1) \ket{\psi} \\
M(P \mathbf{x}) \ket{\psi}& = (- 1)^{f(\mathbf{x})} \ket{\psi} \quad \forall \mathbf{0} \neq \mathbf{x} \in \zz_2^n\; .
\end{align}
If we denote by $\mathcal{S}$ the stabiliser of $\ket{\psi}$, this can be restated as $M(\mathbf{0}) \in \mathcal{S}$ and $(-1)^{f(\mathbf{x})}M(P\mathbf{x}) \in \mathcal{S}$ for all $\mathbf{0} \neq \mathbf{x} \in \zz_2^n$.  Under group closure of the stabiliser we have $(-1)^{f(\mathbf{x})}M(P\mathbf{x}) M(\mathbf{0}) \in \mathcal{S}$.  Note that $M(\mathbf{0})=X^{\otimes N}$ and $M(P\mathbf{x})$ will be a tensor product of $X$ and $Z$ operators. Therefore, the product $M(P\mathbf{x}) M(\mathbf{0})$ is a tensor product of the identity and $Y$ operators, possibly with some extra phase. We define 
\begin{align}
    Q(\mathbf{u}) := \bigotimes_{k=1}^N (i Y)^{u_k} = i^{W(\mathbf{u})} \bigotimes_{k=1}^N  Y^{u_k}  ,
\end{align}
where we recall that $W(\mathbf{u}) := |\{k \in \{1,\cdots,N\} \mid u_k \neq 0\}|$ denotes the \emph{Hamming weight} of $\mathbf{u} \in \zz_2^N$, and we observe that $M(P\mathbf{x}) M(\mathbf{0})=Q(P\mathbf{x})$. Therefore, $(-1)^{f(\mathbf{x})}Q(P\mathbf{x}) \in \mathcal{S}$ for all $\mathbf{0} \neq \mathbf{x} \in \zz_2^n$. In particular, for $\mathbf{x}=(1,0,\ldots)^T$ and $\mathbf{x}=(0,1,\ldots)^T$ we have $(-1)^{l_1}Q(\mathbf{p}_1) \in \mathcal{S}$, $(-1)^{l_2}Q(\mathbf{p}_2) \in \mathcal{S}$, where we write $\mathbf{p}_j$ to denote the $j^{\mathrm{th}}$ column of $P$.

Assuming the stabiliser is abelian, $Q(P\mathbf{x})$ ought to be Hermitian and so $W(P\mathbf{x})=0 \pmod{2}$ for all $\mathbf{x} \in \zz_2^n$. Next we note that we have the relation
\begin{equation}
\label{eq_relations}
    Q(\mathbf{u}) Q(\mathbf{v}) = (-1)^{\mathbf{u} \cdot \mathbf{v}} Q(\mathbf{u} \oplus \mathbf{v})\; .
\end{equation}
Since  $(-1)^{l_1}Q(\mathbf{p}_1) \in \mathcal{S}$ and $(-1)^{l_2}Q(\mathbf{p}_2) \in \mathcal{S}$, we have $(-1)^{l_1+l_2}Q(\mathbf{p}_1)Q(\mathbf{p}_2) \in \mathcal{S}$ by group closure. Using the above relation, this entails that $(-1)^{l_1 + l_2 + \mathbf{p}_1 . \mathbf{p}_2}Q(\mathbf{p}_1 \oplus \mathbf{p}_2) \in \mathcal{S}$, where $\mathbf{p}_1 . \mathbf{p}_2$ is the dot product of these vectors. However, we also know that  $(-1)^{f(1,1,\ldots )} Q(P(1,1,0, \ldots, 0 )^T)\in \mathcal{S}$. These two results are only compatible if 
\begin{equation}
    (-1)^{l_1 + l_2 + \mathbf{p}_1 . \mathbf{p}_2}=(-1)^{l_1 + l_2 + q_{1,2}}
\end{equation}
and so $\mathbf{p}_1 . \mathbf{p}_2 = q_{1,2} \pmod{2}$. A similar argument shows that for all $i,j \in \{1,\cdots,n\}$ we must have 
\begin{align}\label{Eq_contraints}
	\mathbf{p}_i . \mathbf{p}_j & = q_{i,j} \pmod{2} \\ \nonumber
	W(\mathbf{p}_i)& =0 \pmod{2}\; .
\end{align}

We have so far checked inputs $\mathbf{x} \in \zz_2^n$ with Hamming weight $W(\mathbf{x}) \leq 2$. More generally, let $\mathbf{x} \in \zz_2^n$ be arbitrary such that $P \mathbf{x} = \bigoplus_{i=1}^n \mathbf{p}_ix_i$.
For every $i \in \{1,\cdots,n\}$ with $x_i =1$, we find $(-1)^{l_i}Q(\mathbf{p}_i)\in \mathcal{S}$ as before, hence, by group closure
\begin{equation} \label{Step1}
    \prod_{i=1}^n (-1)^{l_ix_i}Q(\mathbf{p}_ix_i) =(-1)^{\sum_{i=1}^n l_i x_i} \prod_{i=1}^n Q(\mathbf{p}_ix_i) \in \mathcal{S}\; .
\end{equation}
Repeated application of Eq.~(\ref{eq_relations}) then yields
\begin{align}
    \prod_{i=1}^n Q(\mathbf{p}_ix_i)
    & = (-1)^{\sum_{i < j} \mathbf{p}_i . \mathbf{p}_j x_ix_j} Q(\bigoplus_{i=1}^n \mathbf{p}_ix_i)
    \nonumber \\
    & = (-1)^{\sum_{i < j} q_{i,j} x_i x_j }   Q(P\mathbf{x}) \label{Step2}\; ,
\end{align}
where in the second line we have used $\mathbf{p}_i. \mathbf{p}_j= q_{i,j}$ from Eq.~(\ref{Eq_contraints}). 
Combining Eqs.~(\ref{Step1}) and (\ref{Step2}) gives
\begin{equation} 
    (-1)^{\sum_{i=1}^n l_i x_i + \sum_{i<j} q_{i,j} x_i x_j}Q(P \mathbf{x}) = (-1)^{f(\mathbf{x})}Q(P\mathbf{x}) \in \mathcal{S}.
\end{equation}
This proves that any quadratic function can be computed within non-adaptive, deterministic, level-$2$ $l2$-MBQC. Conversely, for any Boolean function $f: \zz_2^n \rightarrow \zz_2$ the above argument shows that only its quadratic part can be computed deterministically. Hence, $f$ can be computed by a non-adaptive, deterministic, level-$2$ $l2$-MBQC if and only if $f$ is quadratic.

\section{Proof of Theorem \ref{thm: stabiliser success probability}}\label{sec: Proof of thm: stabiliser success probability}

In this section, we prove Thm.~\ref{thm: stabiliser success probability}, which bounds the success probability of non-adaptive, level-$2$ (i.e., stabiliser) $l2$-MBQC. Let $f:\zz_2^n \rightarrow \zz_2$ be a Boolean function. Then the closest Boolean function (in Hamming distance) which can be deterministically computed in non-adaptive, level-$2$ $l2$-MBQC is a quadratic function. Hence, the success probability is determined by the non-quadraticity of $f$ if we restrict to deterministic $l2$-MBQC (recall  Cor.~\ref{cor: stabiliser success probability}). However, it is not immediately clear that a deterministic $l2$-MBQC necessarily performs best, i.e., it maximises the success probability. Here we show that for non-adaptive, level-$2$ $l2$-MBQC this is indeed the case. 

Let $A$ be a non-adaptive, level-$2$ $l2$-MBQC that given $\mathbf{x} \in \zz_2^n$, outputs $f(\mathbf{x})$ with probability $p_A(\mathbf{x})$ so that
\begin{equation}
    P_{\mathrm{succ}}(A) = \frac{1}{2^n} \sum_{\mathbf{x} \in \zz_2^n} p_A(\mathbf{x}).
\end{equation}
If $A$ is probabilistic, we let $D_A := \{\mathbf{x} \in \zz_2^n \mid p_A(\mathbf{x})\in \{0,1\}\}$ denote the subset of values such that the outcome is deterministic. We denote the complement by $R_A := \zz_2^n\backslash D_A$, which is the random subset on which  $0<p_A(\mathbf{x})<1$.  If $R_A$ is empty, $A$ has deterministic outcomes and we can deploy Cor.~\ref{cor: stabiliser success probability}. We will show that when $R_A$ is not empty, we can find a deterministic (non-adaptive, level-$2$) $l2$-MBQC $A^\star$ with $P_{\mathrm{succ}}(A^\star) \geq P_{\mathrm{succ}}(A)$.

\begin{lemma} \label{5050Lem}
	For all $\mathbf{x} \in R_A$, $p_A(\mathbf{x})=1/2$.
\end{lemma}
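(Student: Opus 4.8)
The plan is to prove a sharper dichotomy from which the lemma is immediate: in a non-adaptive, level-$2$ $l2$-MBQC the output on any fixed input is \emph{either} deterministic \emph{or} an unbiased coin flip, with no intermediate bias possible. First I would record the structural facts. Because the MBQC is level-$2$, each local measurement $M_k(c_k(\mathbf{x}))=U_k(c_k(\mathbf{x}))M_k(0)U_k^{-1}(c_k(\mathbf{x}))$ is a single-qubit Clifford conjugate of a fiducial Pauli, hence itself a single-qubit Pauli (Hermitian, squaring to $I$, with eigenvalues $(-1)^{m_k}$). Because it is non-adaptive, $c_k(\mathbf{x})=(P\mathbf{x})_k$ depends only on the input, so for each $\mathbf{x}$ the global operator $M(P\mathbf{x}):=\bigotimes_{k=1}^N M_k(c_k(\mathbf{x}))$ is a well-defined Hermitian Pauli on the $N$-qubit stabiliser resource state $|\psi\rangle$; let $\mathcal{S}$ denote its (maximal) stabiliser group.

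The key observation is that the distribution of the parity $\bigoplus_{k=1}^N m_k$ coincides with that of a single projective measurement of $M(P\mathbf{x})$: the local projectors act on distinct tensor factors (hence commute), the $+1$-eigenvalue of $M(P\mathbf{x})$ corresponds exactly to even parity, and summing the joint-outcome probabilities over all patterns of fixed parity collapses to $\langle\psi|\Pi_\pm|\psi\rangle$ with $\Pi_\pm=\tfrac{1}{2}(I\pm M(P\mathbf{x}))$. Thus $\mathrm{Pr}[\bigoplus_k m_k=0]=\tfrac12(1+\langle\psi|M(P\mathbf{x})|\psi\rangle)$. I would then invoke the standard stabiliser dichotomy for a Hermitian Pauli $M$ and a pure stabiliser state: $\langle\psi|M|\psi\rangle=\pm1$ when $\pm M\in\mathcal{S}$, and $\langle\psi|M|\psi\rangle=0$ otherwise. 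The "otherwise" case is handled as follows: if $M$ commutes with all of $\mathcal{S}$, maximality of $\mathcal{S}$ forces $M$ to be a phase multiple of a stabiliser element and Hermiticity pins the phase to $\pm1$, so $\pm M\in\mathcal{S}$; the only remaining possibility is that $M$ anticommutes with some $S\in\mathcal{S}$, whence $M|\psi\rangle=MS|\psi\rangle=-SM|\psi\rangle$ and so $\langle\psi|M|\psi\rangle=-\langle\psi|M|\psi\rangle=0$.

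Combining these, $\mathrm{Pr}[\bigoplus_k m_k=0]\in\{0,\tfrac12,1\}$, and since the linear post-processing only adds the fixed bit $m_0$, the output $o(\mathbf{x})$ is either deterministic or uniformly distributed on $\{0,1\}$. In the latter case $p_A(\mathbf{x})=\mathrm{Pr}[o(\mathbf{x})=f(\mathbf{x})]=\tfrac12$ irrespective of the value $f(\mathbf{x})$, while in the former $p_A(\mathbf{x})\in\{0,1\}$. For $\mathbf{x}\in R_A$ we have $0<p_A(\mathbf{x})<1$ by definition of $R_A$, which rules out the deterministic case and forces $p_A(\mathbf{x})=\tfrac12$. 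I do not expect a genuine obstacle; the only point demanding care is the first reduction — arguing that the coarse-grained parity statistics of the $N$ individual Pauli measurements are exactly the statistics of measuring the product operator $M(P\mathbf{x})$ — which relies on the measurements acting on distinct qubits (so they mutually commute) and on non-adaptivity (so the settings, hence $M(P\mathbf{x})$, are fixed once $\mathbf{x}$ is fixed).
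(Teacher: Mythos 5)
Your proposal is correct and follows essentially the same route as the paper: reduce the parity statistics to a single measurement of the global Pauli $S(\mathbf{x})=M(P\mathbf{x})$ on the stabiliser state, then apply the dichotomy that either $\pm S(\mathbf{x})\in\mathcal{S}$ (deterministic outcome, so $\mathbf{x}\in D_A$) or $S(\mathbf{x})$ anticommutes with some stabiliser element (so $\langle\psi|S(\mathbf{x})|\psi\rangle=0$ and $p_A(\mathbf{x})=\tfrac{1}{2}$). You simply spell out the two steps the paper leaves implicit — the coarse-graining of local Pauli outcomes into the product observable, and the centraliser/maximality argument showing no third case exists.
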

\begin{proof}
For every $\mathbf{x}\in \zz_2^n$, let $S(\mathbf{x})$ be the observable measured.  Assuming the stabiliser state has stabiliser $\mathcal{S}$, there are two possible cases, either
\begin{enumerate}
	\item $S(\mathbf{x}) \in \mathcal{S}$ or $-S(\mathbf{x}) \in \mathcal{S}$ and so $p_A(\mathbf{x}) \in \{ 0, 1\}$ and $\mathbf{x} \in D_A$;
	\item or $S(\mathbf{x})$ anti-commutes with some element in $\mathcal{S}$ in which case $p_A(\mathbf{x})=1/2$ and $\mathbf{x} \in R_A$.
\end{enumerate}	
This proves the lemma.
\end{proof}

From Lm.~\ref{5050Lem} it follows that
\begin{equation} \label{FormEq}
	\sum_{\mathbf{x} \in \zz_2^n} p_A(\mathbf{x})  =  	   \frac{1}{2}|R_A| + \sum_{\mathbf{x} \in D_A}  p_A(\mathbf{x} )  = \frac{1}{2}(2^n - 2^m ) + \sum_{\mathbf{x} \in D_A}  p_A(\mathbf{x})\; , 
\end{equation}
where we have used that $|R_A|=2^n-|D_A|=:2^n-2^m$.

\begin{lemma}\label{lm: affine space}
    For all $\mathbf{x},\mathbf{y},\mathbf{z} \in D_A$, $\mathbf{x} \oplus \mathbf{y} \oplus \mathbf{z} \in D_A$.
\end{lemma}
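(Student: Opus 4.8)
The plan is to exploit the fact that, in the non-adaptive stabiliser case, the $N$-qubit Pauli observable $S(\mathbf{x})$ measured on input $\mathbf{x}$ depends \emph{affinely} on $\mathbf{x}$ in the symplectic representation; hence $D_A$ is the preimage of a subgroup under an affine map, and such a set is closed under the three-term sum $\mathbf{x}\oplus\mathbf{y}\oplus\mathbf{z}$. Concretely, I would first recall from the proof of Lm.~\ref{5050Lem} the characterisation of $D_A$: since $\ket{\psi}$ is a pure stabiliser state with stabiliser group $\mathcal{S}$, for each $\mathbf{x}$ the observable $S(\mathbf{x})$ either commutes with every element of $\mathcal{S}$ --- equivalently $\pm S(\mathbf{x})\in\mathcal{S}$, using the standard fact that an $N$-qubit Pauli commuting with all of a maximal stabiliser group lies in $\pm\mathcal{S}$ --- in which case $\mathbf{x}\in D_A$, or it anticommutes with some stabiliser element and $\mathbf{x}\in R_A$. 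So $\mathbf{x}\in D_A$ iff $S(\mathbf{x})$ lies in the centraliser of $\mathcal{S}$ in the Pauli group.

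The main step is the affine dependence. Writing $S(\mathbf{x}) = \bigotimes_{k=1}^N M_k((P\mathbf{x})_k)$ as in the proof of Thm.~\ref{thm:stabdeterm} (recall that in the non-adaptive case $\mathbf{c}=P\mathbf{i}$), let $\mathbf{v}_k^0,\mathbf{v}_k^1\in\zz_2^2$ be the symplectic vectors of the single-qubit Paulis $M_k(0),M_k(1)$. Then the symplectic vector of $M_k(c)$ is $\mathbf{v}_k^0\oplus c(\mathbf{v}_k^0\oplus\mathbf{v}_k^1)$, which is affine in $c\in\zz_2$ (this holds even in the degenerate case $M_k(0)=\pm M_k(1)$, where it is constant); since each $(P\mathbf{x})_k$ is $\zz_2$-linear in $\mathbf{x}$, stacking over $k$ yields $\mathrm{symp}(S(\mathbf{x})) = \mathbf{s}_0 \oplus L\mathbf{x}$ for a fixed vector $\mathbf{s}_0\in\zz_2^{2N}$ and fixed linear map $L$. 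Using $3\mathbf{s}_0=\mathbf{s}_0 \pmod 2$ we then get, for any $\mathbf{x},\mathbf{y},\mathbf{z}$,
\[
\mathrm{symp}\big(S(\mathbf{x}\oplus\mathbf{y}\oplus\mathbf{z})\big) = \mathbf{s}_0\oplus L\mathbf{x}\oplus L\mathbf{y}\oplus L\mathbf{z} = \mathrm{symp}(S(\mathbf{x}))\oplus\mathrm{symp}(S(\mathbf{y}))\oplus\mathrm{symp}(S(\mathbf{z})),
\]
that is, $S(\mathbf{x}\oplus\mathbf{y}\oplus\mathbf{z})$ equals $S(\mathbf{x})S(\mathbf{y})S(\mathbf{z})$ up to an overall scalar phase.

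To conclude: if $\mathbf{x},\mathbf{y},\mathbf{z}\in D_A$ then each of $S(\mathbf{x}),S(\mathbf{y}),S(\mathbf{z})$ commutes with every element of $\mathcal{S}$, hence so does their product, and hence so does $S(\mathbf{x}\oplus\mathbf{y}\oplus\mathbf{z})$ (the extra phase is central and irrelevant to commutation). By the characterisation of $D_A$ above, $\mathbf{x}\oplus\mathbf{y}\oplus\mathbf{z}\in D_A$. I do not expect a genuine obstacle; the only care needed is the standard stabiliser fact invoked above and the elementary verification of the affine formula for $\mathrm{symp}(M_k(c))$. Equivalently, one may phrase the whole argument structurally: $D_A$ is the preimage under the affine map $\mathbf{x}\mapsto S(\mathbf{x})$ of the centraliser subgroup of $\mathcal{S}$, hence an affine subspace of $\zz_2^n$ (or empty), and affine subspaces are precisely the subsets closed under $(\mathbf{x},\mathbf{y},\mathbf{z})\mapsto \mathbf{x}\oplus\mathbf{y}\oplus\mathbf{z}$.
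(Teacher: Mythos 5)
Your proof is correct and follows essentially the same route as the paper: the key identity $S(\mathbf{x})S(\mathbf{y})S(\mathbf{z})\propto S(\mathbf{x}\oplus\mathbf{y}\oplus\mathbf{z})$ (which the paper obtains from the canonical form $S(\mathbf{x})=\otimes_k X_k(iZ_k)^{[P\mathbf{x}]_k}$ and you obtain from the affine symplectic representation) combined with closure of the centraliser of $\mathcal{S}$ under products. Your symplectic phrasing merely avoids the paper's WLOG reduction to a canonical measurement basis; the argument is otherwise identical.
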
	 
\begin{proof}
Consider the measurement $S(\mathbf{x})$.  W.l.o.g we can assume it has the form
\begin{equation}
	S(\mathbf{x}) = \otimes_k X_k (iZ_k)^{[P\mathbf{x}]_k}\; ,
\end{equation}	
where $P$ is the matrix describing the ($\zz_2$-linear) pre-processing (see Def.~\ref{defldMBQC}). From this we find that
\begin{equation}
	\label{Sxyz}
	S(\mathbf{x}) S(\mathbf{y}) S(\mathbf{z}) 
	\propto \otimes_j X_j (iZ_j)^{[P(\mathbf{x} \oplus \mathbf{y} \oplus \mathbf{z})]_j}
	= S(\mathbf{x} \oplus \mathbf{y} \oplus \mathbf{z})\; ,
\end{equation}	
where the proportionality constant can be worked out but is not important. Assuming $\mathbf{x},\mathbf{y},\mathbf{z} \in D_A$ entails that $S(\mathbf{x}), S(\mathbf{y})$ and $S(\mathbf{z})$ all commute with $\mathcal{S}$. Therefore, $S(\mathbf{x}) S(\mathbf{y}) S(\mathbf{z})$ must also commute with $\mathcal{S}$, and by Eq.~\eqref{Sxyz} we know $S(\mathbf{x} \oplus \mathbf{y} \oplus \mathbf{z})$ must also commute with $\mathcal{S}$.  Therefore, $\mathbf{x} \oplus \mathbf{y} \oplus \mathbf{z} \in D_A$.
\end{proof}

We remark that this is the structure of an affine space. Recall that an affine space is a set $\{ \mathbf{y} \oplus \mathbf{w} : \mathbf{y} \in L \}$ where $L$ is a linear space and $\mathbf{w}$ is some constant shift.
Let $\mathbf{w} \in D_A$ arbitrary, and define $L_A^\mathbf{w} =\{ \mathbf{x} \oplus \mathbf{w} : \mathbf{x} \in D_A \}$. The space $L_A^\mathbf{w}$  is linear: from $\mathbf{x} \in L_A^\mathbf{w} \Rightarrow (\mathbf{x} \oplus \mathbf{w}) \in D_A$ and $\mathbf{y} \in L_A^\mathbf{w} \Rightarrow (\mathbf{y} \oplus \mathbf{w}) \in D_A$ it follows that $\mathbf{x}\oplus \mathbf{y} \in L_A^\mathbf{w}$ since, by Lm.~\ref{lm: affine space}, $(\mathbf{x} \oplus \mathbf{w}) \oplus (\mathbf{y} \oplus \mathbf{w}) \oplus \mathbf{w} = (\mathbf{x} \oplus \mathbf{y} )\oplus \mathbf{w} \in D_A$. Hence, $D_A$ is an affine space.

Since $D_A$ is an affine space, we can define an invertible, affine transformation $\Phi:\zz_2^n \rightarrow \zz_2^n $ such that the image $\Phi(D_A)$ corresponds to the vectors of the form $(u_1,\ldots,u_m,0,\ldots,0) \in \zz_2^n$.  It is convenient to change the problem under this transformation, in particular, we define the new target function by $g(\Phi(\mathbf{x}))=f(\mathbf{x})$. We also define the truncated function $\tilde{g}:\zz_2^m \rightarrow \zz_2$ such that $\tilde{g}(\mathbf{u})=g(u_1,\ldots,u_m,0,\ldots,0)$. 
 
Since $A$ is deterministic over $D_A$, by Thm.~\ref{thm:stabdeterm} it defines a quadratic Boolean function $\tilde{q}_A: \zz_2^m \rightarrow \zz_2$ on inputs in $\Phi(D_A)$.
Clearly, the success probability (with respect to the different target functions $f$ and $g$) is invariant under the transformation $\Phi$ (being a mere relabelling of inputs), hence, Eq.~\eqref{FormEq} becomes
\begin{align} \label{TransFormEq}
    \sum_{\mathbf{x} \in \zz_2^n} p_A(\mathbf{x}) = \sum_{\mathbf{x} \in \zz_2^n} p_A(\Phi(\mathbf{x})) =  \frac{1}{2}(2^n - 2^m )+ (2^m -d_H(\tilde{q}_A , \tilde{g} )) 
    =  \frac{1}{2}(2^n + 2^m ) -d_H(\tilde{q}_A , \tilde{g} )\; ,
\end{align}
where we recall that $d_H(\tilde{q}_A,\tilde{g}) := |\{\mathbf{x} \in \zz_2^m \mid \tilde{q}_A(\mathbf{x}) \neq \tilde{g}(\mathbf{x})\}|$ denotes the Hamming distance between $\tilde{q}_A$ and $\tilde{g}$. Next, we extend $\tilde{q}_A$ to a quadratic function on all inputs $\mathbf{x} \in \zz_2^n$.

\begin{lemma}\label{EmbeddedFunctionLem}
    Let $\tilde{g}$ be a Boolean function $\tilde{g}:\zz_2^m \rightarrow \zz_2$ with an extension $g:\zz_2^m \times \zz_2^{n-m} \rightarrow \zz_2 $.  For any quadratic function $\tilde{q}_A:\zz_2^m \rightarrow \zz_2$, we can find a quadratic function $q_A:\zz_2^m \times \zz_2^{n-m} \rightarrow \zz_2 $ such that
    \begin{equation}
 	    d_H(q_A, g ) \leq \frac{1}{2}(2^n - 2^m ) + d_H(\tilde{q}_A, \tilde{g} )\; .
    \end{equation}	
\end{lemma}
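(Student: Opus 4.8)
The plan is to prove the inequality by the probabilistic method: I would exhibit a one-parameter family of quadratic extensions of $\tilde{q}_A$ and show that the \emph{average} Hamming distance to $g$ over this family already equals the right-hand side, so that some member of the family attains at most the average. Write the input as $\mathbf{x} = (\mathbf{u},\mathbf{v})$ with $\mathbf{u} \in \zz_2^m$ and $\mathbf{v} \in \zz_2^{n-m}$, so that $\tilde{g}(\mathbf{u}) = g(\mathbf{u},\mathbf{0})$, the slice $\{\mathbf{v}=\mathbf{0}\}$ contains $2^m$ points, and its complement contains $2^n - 2^m$ points.

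For each $\mathbf{s} \in \zz_2^{n-m}$ I would define the candidate extension
\begin{equation*}
    q_A^{(\mathbf{s})}(\mathbf{u},\mathbf{v}) := \tilde{q}_A(\mathbf{u}) + \mathbf{s}\cdot\mathbf{v} \pmod 2\; .
\end{equation*}
Each $q_A^{(\mathbf{s})}$ is quadratic, since $\tilde{q}_A$ is quadratic and the added term $\mathbf{s}\cdot\mathbf{v}$ is linear, and each restricts to $\tilde{q}_A$ on the slice $\mathbf{v}=\mathbf{0}$ because $\mathbf{s}\cdot\mathbf{0}=0$. The essential point is the role of the linear term off the slice: for any fixed $\mathbf{v}\neq\mathbf{0}$, as $\mathbf{s}$ ranges over $\zz_2^{n-m}$ the value $\mathbf{s}\cdot\mathbf{v}$ is balanced, taking the values $0$ and $1$ equally often. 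Hence for every point $(\mathbf{u},\mathbf{v})$ with $\mathbf{v}\neq\mathbf{0}$, exactly half of the $2^{n-m}$ choices of $\mathbf{s}$ yield $q_A^{(\mathbf{s})}(\mathbf{u},\mathbf{v}) = g(\mathbf{u},\mathbf{v})$ and half yield disagreement.

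Summing the Hamming distances over the whole family and exchanging the order of summation, the contribution of the slice $\mathbf{v}=\mathbf{0}$ is $2^{n-m}\,d_H(\tilde{q}_A,\tilde{g})$ (independent of $\mathbf{s}$), while the contribution of each of the $2^n-2^m$ points with $\mathbf{v}\neq\mathbf{0}$ is $2^{n-m-1}$ by the balancedness just noted. Dividing by the family size $2^{n-m}$ gives
\begin{equation*}
    \frac{1}{2^{n-m}} \sum_{\mathbf{s} \in \zz_2^{n-m}} d_H(q_A^{(\mathbf{s})},g) = d_H(\tilde{q}_A,\tilde{g}) + \tfrac{1}{2}(2^n - 2^m)\; ,
\end{equation*}
so some $\mathbf{s}^\star$ satisfies $d_H(q_A^{(\mathbf{s}^\star)},g) \leq \tfrac{1}{2}(2^n-2^m) + d_H(\tilde{q}_A,\tilde{g})$, and $q_A := q_A^{(\mathbf{s}^\star)}$ is the desired quadratic function.

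The step I would flag as the crux is precisely the factor $\tfrac{1}{2}$ multiplying $2^n-2^m$. A naive constant-in-$\mathbf{v}$ extension would force uniform agreement or disagreement off the slice and yield only the weaker bound $(2^n-2^m)+d_H(\tilde{q}_A,\tilde{g})$; it is the linear randomising term $\mathbf{s}\cdot\mathbf{v}$, together with averaging over $\mathbf{s}$, that halves the off-slice cost. One should verify that adjoining $\mathbf{s}\cdot\mathbf{v}$ keeps the degree at most two (it does, being linear) and that no cross terms $u_iv_j$ or quadratic-in-$\mathbf{v}$ terms are needed; the linear family alone already realises the stated bound.
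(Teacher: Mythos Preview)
Your argument is correct. Both your proof and the paper's use the same family of candidate extensions, namely $q_A(\mathbf{u},\mathbf{v}) = \tilde{q}_A(\mathbf{u}) + \mathbf{s}\cdot\mathbf{v}$ for $\mathbf{s}\in\zz_2^{n-m}$, but the selection mechanisms differ. The paper proceeds recursively, adjoining one coordinate $v_{j+1}$ at a time and choosing the bit $\Delta_j$ (your $s_{j+1}$) greedily so that the new slab $\{v_{j+1}=1\}$ contributes at most $2^{m+j}/2$ disagreements; summing the geometric series $\tfrac{1}{2}\sum_{j=0}^{n-m-1}2^{m+j}$ gives the same $\tfrac{1}{2}(2^n-2^m)$. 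Your global averaging argument over all $\mathbf{s}$ reaches the bound in one step and is arguably cleaner, while the paper's version is explicitly constructive (it tells you which $\mathbf{s}$ to pick without searching). Neither approach needs cross terms $u_iv_j$ or quadratic terms in $\mathbf{v}$, as you correctly note.
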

\begin{proof}
    The proof is recursive.  We define the series of nested extension functions $g^{(j)}:  \zz_2^{{m+j}} \rightarrow \zz_2 $ such that  $g^{(0)}=\tilde{g}$ and $g^{(n-m)}=g$, where  $g^{(j)}(\mathbf{u})=g^{(j+1)}(\mathbf{u},0)$ for all $\mathbf{u} \in \zz_2^{m+j}$. We will recursively define a series of quadratic functions $q_A^{(j)}:  \zz_2^{m+j} \rightarrow \zz_2 $ starting with $q_A^{(0)}=\tilde{q}_A$, such that $q_A^{(j)}(\mathbf{u})=q_A^{(j+1)}(\mathbf{u},0)$ and $\Delta_j + q_A^{(j)}(\mathbf{u})=q_A^{(j+1)}(\mathbf{u},1)$ for all $\mathbf{u} \in \zz_2^{m+j}$ for some constant $\Delta_j \in \zz_2$ to be determined. Clearly, the $q_A^{(j)}$ are all quadratic if and only if $q_A^{(0)}$ is quadratic. Furthermore,
    \begin{align}
        d_H(q_A^{(j+1)}, g^{(j+1)} ) & =  \sum_{\mathbf{u} \in \zz_2^{m+j}} \left[ q_A^{(j+1)}(\mathbf{u},0) \oplus g^{(j+1)}(\mathbf{u},0)   \right]  +  \sum_{\mathbf{u} \in \zz_2^{m+j}} \left[ q_A^{(j+1)}(\mathbf{u},1) \oplus g^{(j+1)}(\mathbf{u},1)   \right]  \\ \nonumber
        & =  d_H(q_A^{(j)}, g^{(j)} ) +  \sum_{\mathbf{u} \in \zz_2^{m+j}} \left[ \Delta_j \oplus q_A^{(j)}(\mathbf{u}) \oplus g^{(j+1)}(\mathbf{u},1)  \right]\; ,
    \end{align}	
    Assume the sum in the last line evaluates to $N$ when $\Delta_j=0$, then it evaluates to $2^{m+j}-N$ when  $\Delta_j=1$.  Therefore, we can choose $\Delta_j$ such that the sum evaluates to $2^{m+j}/2$ or less. This yields
    \begin{align}
    	d_H(q_A^{(j+1)}, g^{(j+1)} ) & \leq  d_H(q_A^{(j)}, g^{(j)} ) + 2^{m+j}/2\; .
    \end{align}
    Using our initial condition for $j=0$, and applying this bound recursively we get
    \begin{align}
    	d_H(q_A, g ) =	d_H(q_A^{(n-m)}, g^{(n-m)} )
    	\leq  d_H(q_A^{(0)}, g^{(0)} ) +  \frac{1}{2}\sum_{j=0}^{n-m-1} 2^{m+j}
    	= d_H( \tilde{q}_A, \tilde{g} ) +   \frac{1}{2}(2^n - 2^m)\; ,
    \end{align}
    which proves the lemma.
\end{proof}

Since $q_A$ from Lm.~\ref{EmbeddedFunctionLem} is quadratic, by Thm.~\ref{thm:stabdeterm} we can find a non-adaptive, deterministic, level-$2$ $l2$-MBQC $A^{\star}$ (using stabiliser states) implementing $q_A$. Applying Lm.~\ref{EmbeddedFunctionLem} and comparing with Eq.~(\ref{TransFormEq}) we obtain
\begin{align} 
    \sum_{\mathbf{x} \in \zz_2^n} p_{A^\star}(\mathbf{x})  & =  2^n -	d_H(q_A, g )
    \geq 2^n -  d_H( \tilde{q}_A, \tilde{g} ) -   \frac{1}{2}(2^n - 2^m )
    = \frac{1}{2}(2^n + 2^m ) -  d_H( \tilde{q}_A, \tilde{g} )
    =  \sum_{\mathbf{x} \in \zz_2^n} p_A(\mathbf{x})\; ,
\end{align}
such that the deterministic $l2$-MBQC $A^{\star}$ performs at least as well as probabilistic $l2$-MBQC $A$. Thm.~\ref{thm: stabiliser success probability} thus follows from Cor.~\ref{cor: stabiliser success probability}.

\section{Proof of Theorem \ref{thm:CliffHierNec}}\label{app: Clifford hierarchy}

Let the $l2$-MBQC belong to level-D in the Clifford hierarchy (per Def.~\ref{defLevelDMBQC}). Then each measurement $M_k(c_k)$ takes the form 
\begin{equation}
    M_k(c_k) = U_k({c_k}) M_k(0) U_k^{\dagger}({c_k})\; ,
\end{equation}
where $U_k({c_k}) \in \mathcal{C}_1^D$ and $c_k: \zz_2^n \rightarrow \zz_2$ a linear function for all $k \in \{1,\ldots,N\}$ (see Def.~\ref{defldMBQC}). For deterministic computation we have for each input $\mathbf{i}\in \zz_2^n$
\begin{equation}
    \bigotimes_{k=1}^N \left[ U_k(c_k) M_k(0) U_k^{\dagger}(c_k)\right] \ket{\psi} = \omega^{o(\mathbf{i})}\ket{\psi}
\end{equation}
where $\ket{\psi}$ is the resource state and $o(\mathbf{i})$ is the computational output. From Ref.~\cite{zeng2008semi}, we have $\mathcal{SC}_1^D  = \mathcal{C}_1^D$, meaning $\exists C_k, C_k' \in \mathcal{C}_1^2$ and diagonal gates $D_k \in \mathcal{C}_1^D$ such that $U_k = C_k D_k C_k'$. Eq.~(\ref{eq: output function in ld-MBQC}) can be rewritten as 
\begin{equation}\label{eqOutputWithClifford}
\bigotimes_{k=1}^N \left[ D_k(c_k) \tilde{M}_k(0) D_k^{\dagger}(c_k) \right] \ket{\tilde \psi} = \omega^{o(\mathbf{i})}\ket{\tilde \psi},
\end{equation}
where $\ket{\tilde{\psi}} = C_k^{\dagger} \ket{\psi}$ and $\tilde{M}(0) = \bigotimes_{k=1}^N (C_k' M_k(0) C_k'^{\dagger})$. Note that $\ket{\tilde \psi}$ is a stabiliser state and $\tilde{M}(0)$ is a Pauli operator, which we write 
\begin{equation}\label{eqFidMeasTilde}
    \tilde{M}(0) = e^{\frac{i\pi\beta}{2}}(X_1^{x_1} Z_1^{z_1}) \otimes \ldots \otimes (X_N^{x_N} Z_N^{z_N}),
\end{equation}
for $\beta\in \zz_4$ and $\mathbf{x} = (x_1,\ldots,x_N) \in \zz_2^N$, $\mathbf{z} = (z_1,\ldots,z_N) \in \zz_2^N$. Expanding $\ket{\tilde \psi}$ in the computational basis, we have 
\begin{equation}
    \ket{\tilde \psi} = \sum_{\mathbf{q} \in \zz_2^N} \alpha(\mathbf{q}) \ket{\mathbf{q}} \quad \text{ such that } \quad  |\alpha(\mathbf{q})| \in \{0, \alpha\} \ \forall \mathbf{q} \in \zz_2^N, \text{ for some } \alpha \in \rr,
\end{equation}
which follows the fact that all nonzero amplitudes of a stabilizer state in the computational basis have the same magnitude. The global measurements in the updated basis
\begin{equation}
    \tilde{M}(\mathbf{c}) = \bigotimes_{k=1}^N D_k(c_k) \tilde{M}_k(0) D_k^{\dagger}(c_k)
\end{equation}
permute computational basis states up to a phase,
\begin{equation}\label{eq: action of global measurement operators}
    \tilde{M}(\mathbf{c}) \ket{\mathbf{q}} = \theta({\mathbf{c}, \mathbf{q}}) \ket{\mathbf{q} \oplus \mathbf{x}},
\end{equation}
where $\theta({\mathbf{c}, \mathbf{q}}) \in U(1)$ for all $\mathbf{c},\mathbf{q} \in \zz_2^N$. To satisfy Eq.~(\ref{eqOutputWithClifford}) we must have 
\begin{equation}
    \theta({\mathbf{c}, \mathbf{q}}) = \omega^{o(\mathbf{i})} \quad \forall \mathbf{q}\in \zz_2^N \text{ with } \alpha(\mathbf{q}) \neq 0, ~\forall \mathbf{c} \in \zz_2^N.
\end{equation}
Thus, the dependence on $\mathbf{q}$ may be dropped and we may write $\theta({\mathbf{c}})  := \theta({\mathbf{c}, \mathbf{q}})$, and we remark that $\mathbf{c}$ is implicitly dependent on the input $\mathbf{i}$.

To determine the allowable phases $\theta({\mathbf{c}}) = \omega^{o(\mathbf{i})}$, we utilise a classification of diagonal gates in the Clifford hierarchy from Ref.~\cite{cui2017diagonal}. For any function $f: \zz_2^N\rightarrow U(1)$ we denote by $D[f]$ the diagonal operator whose action is given by $D[f]\ket{\mathbf{q}} = f(\mathbf{q}) \ket{\mathbf{q}}$ for all $\mathbf{q} \in 
\zz_2^N$. From Ref.~\cite{cui2017diagonal}, up to a global phase every diagonal function $D \in \mathcal{C}_1^D$ can be written as $D[f]$ where $f:\zz_2 \rightarrow U(1)$ is given by
\begin{equation}
    f(q_k) = \exp\left(2\pi i\sum_{m=0}^D \frac{\vartheta_m q_k}{2^m} \right), \quad \text{for some } \vartheta_m \in \zz_{2^m} \quad \forall q_k\in \zz_2.
\end{equation}

Then each $\tilde{M}_k(c_k) = D_k(c_k)\tilde{M}_k(0) D_k^{\dagger}(c_k)$, with $\tilde{M}_k(0) = e^{\frac{i\pi \beta_k}{2}} X_k^{x_k}Z_k^{z_k}$ has an action on computational basis states as
\begin{equation}
    \tilde{M}_k(c_k) \ket{q_k} = \exp\left[2\pi i\left(\frac{\beta_k}{4} + \frac{z_k q_k}{2} + \sum_{m=0}^{D} \frac{\vartheta_{m,k}[c_k] x_k}{2^m} \right) \right] \ket{q_k \oplus x_k}, \quad \text{for some } \vartheta_{m,k}[c_k] \in \zz_{2^m}, \quad \forall q_k \in \zz_2.
\end{equation}
Therein, the factors $\vartheta_{m,k}[c_k]$ are determined by the choice of gate $D_k(c_k) = \text{diag}(1, \exp\left(2\pi i\sum_{m=0}^D \frac{\vartheta_{m,k}[c_k]}{2^m} \right))$. In particular, we may rewrite them as $\vartheta_{m,k}[c_k] = \vartheta_{m,k}[0](1-c_k) + \vartheta_{m,k}[1]c_k$, for $\vartheta_{m,k}[0]$, $\vartheta_{m,k}[1] \in \zz_{2^m}$. 

Then the global phase, and thus computational output can be obtained by accumulating all local phases,
\begin{equation}
    \tilde{M}(\mathbf{c})\ket{\mathbf{q}} = \exp\left[2\pi i\left( \frac{\beta}{4} + \sum_{k=1}^{N} \frac{z_k q_k}{2} + \sum_{m=0}^{D}\sum_{k=1}^N \frac{ \vartheta_{m,k}[0](1 - c_k) + \vartheta_{m,k}[1]c_k x_k}{2^m} \right) \right] \ket{\mathbf{q} \oplus \mathbf{x}}, \quad \forall \mathbf{q}, \mathbf{x} \in \zz_2^N.
\end{equation}

Equating the phase in the above expression to $\omega^{o(\mathbf{i})}$ as dictated by Eq.~(\ref{eq: action of global measurement operators}) we have
\begin{equation}\label{eqPhaseFunction}
    o(\mathbf{i}) = \frac{\beta}{2} + \sum_{k=1}^{N} {z_k q_k} + \sum_{m=0}^{D}\sum_{k=1}^N \frac{ \vartheta_{m,k}[0](1 - c_k) + \vartheta_{m,k}[1]c_k x_k}{2^{m-1}} \quad \pmod{2}.
\end{equation}

Now we recall that the measurement settings may be written as $\zz_2$-linear basis functions $c_k = \phi_{\mathbf{a}}$ for some $\mathbf{a}_k\in \zz_2^n$ (where $\phi_{\mathbf{a}}$ is defined in Sec.~\ref{sec: Polynomial vs Z_2-linear function representation}). Then using Eq.~(\ref{eqLinearToPoly}) we rewrite this function in the monomial basis
\begin{equation}\label{eqLinearToPoly2}
    c_k = \sum_{\mathbf{0} \neq \mathbf{b} \in \mathbf{a}_k\zz_{2}^n} (-2)^{W(\mathbf{b})-1} \prod_{l=1}^n i_{l}^{b_l}.
\end{equation}
Inserting into Eq.~(\ref{eqPhaseFunction}), we conclude that the third term in Eq.~(\ref{eqPhaseFunction}) contributes only if $m \geq W(\mathbf{b})$. Moreover, since $D \geq m$, and since the degree of the monomial term in Eq.~(\ref{eqLinearToPoly2}) is given by $W(\mathbf{b})$, any non-vanishing term in the output function in Eq.~(\ref{eqPhaseFunction}) has degree at most $D$. This completes the proof.

\section{Proof of Theorem \ref{thm: resource of quadratic Boolean functions}}\label{sec: Proof of thm: resource of quadratic Boolean functions}

Following the terminology of the proof of Thm.~\ref{thm:stabdeterm} in App.~\ref{sec: Proof of thm:stabdeterm}, we denote by $P \in \mathrm{Mat}(N\times n,\zz_2)$ the classical, $\zz_2$-linear pre-processing of a non-adaptive, deterministic, level-$2$ (i.e., stabiliser) $l2$-MBQC. 

It follows that the qubit count equals the number of rows in $P$, hence, we seek a suitable $P$ with minimal number of rows. We also recall the conditions $\mathbf{p}_i . \mathbf{p}_j = q_{i,j} \pmod{2}$ and $W(\mathbf{p}_i)=0 \pmod{2}$ (see App.~\ref{sec: Proof of thm:stabdeterm}) for any non-adaptive, deterministic, level-$2$ $l2$-MBQC computing the quadratic function $f$. The latter constraints are equivalent to $Q(f)=P^TP \pmod{2}$, where $Q(f)$ is the matrix associated with $f$ in Eq.~(\ref{eq: quadratic Boolean function}). It was shown by Lempel~\cite{Lempel1975} that a solution $P$ always exists and that the smallest number of rows of $P$ equals $N = \mathrm{rk}(Q(f)) + 1$. This completes the proof.

\section{Adaptivity}\label{secAdaptivity}
In this section, we comment further on how our results change in the presence of adaptive measurements. Adaptivity is a powerful resource for many quantum computational schemes. For universal MBQC it is essential---in general, measurement bases must be chosen based on previous measurement outcomes in order to control the randomness induced by non-deterministic measurement outcomes. For many families of quantum circuits adaptivity is also essential and they may become classically simulable in its absence, see Ref.~\cite{jozsa2013classical} for example.

By conditioning future measurements on prior measurement outcomes, qubit count and non-Clifford resource requirements can be drastically reduced. To see this, we consider a general adaptive MBQC as being composed of several non-adaptive MBQCs called components (where each component does need not to have deterministic output). The overall computation can be represented by a directed acyclic graph $\calG$ called the incidence graph. Each node on the graph $\calG$ corresponds to a non-adaptive component, and the directed edges correspond to the information flow required for adaptivity: the target node corresponds to the component that requires the output of the component corresponding to the source node. 

The nodes of the graph $\calG$ are also labelled by integers, referring to the order in which they are performed. Multiple nodes may share the same label -- meaning they are performed in parallel -- but the labels must strictly increase when moving along the edges. We call this list of integers the schedule $\calS$. For a MBQC with incidence graph $\calG$ and schedule $\calS$, we define the depth of the computation as the largest integer in $\calS$. We define the width of the computation as the total number of qubits in all components with a common schedule index $k\in \calS$, maximised over all $k\in \calS$.

\begin{figure}[h]%
	\centering
	\includegraphics[width=0.5\linewidth]{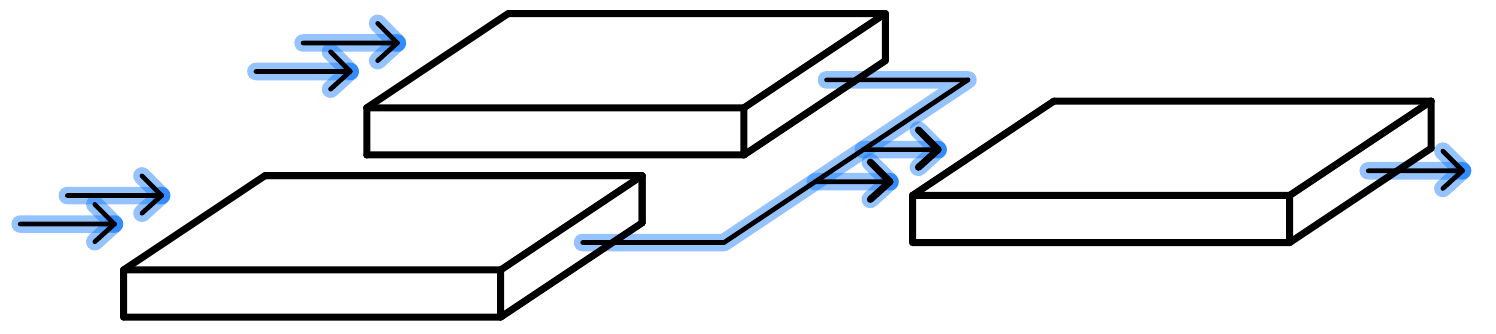}
	\caption{Sequence of non-adaptive MBQCs composed together.}%
\end{figure}

The depth is how many timesteps the computation takes to perform, while the width is how many qubits are required to execute it with the prescribed schedule. Note the volume does not represent the number of qubits required to implement the MBQC. In fact, an arbitrary width $w$ MBQC can be implemented using $w$ qubits as not all measurements need to be executed in parallel. In general, space-time tradeoffs are possible, meaning that it may be possible to vary between the width and the depth of the MBQC. We note that shallow quantum circuits in \cite{BravyiGossetKoenig2018} are restricted to constant depth, while non-adaptive MBQCs admit a depth-1 representation.

As a concrete example, we consider the delta function $\delta: \zz_2^n \rightarrow \zz_2$. As shown in Cor~\ref{cor: optimality delta-function}, $2^n-1$ qubits are required for its implementation in non-adaptive MBQC (i.e. width $2^n-1$ and depth 1), as well as non-Clifford gates belonging to the $n$-th level in the Clifford hierarchy. Using the adaptive scheme represented in Fig.~\ref{figBinaryTreeScheme} (left), the delta function can be implemented with width $3$ (meaning only 3 qubits are required), however the depth needed is $n$. The volume of $3n$ is exponentially smaller (in $n$) than the non-adaptive case. Similarly, one could choose an adaptive scheme based on a binary tree, such as that depicted in Fig.~\ref{figBinaryTreeScheme} (right). In this case, one can use $O(3n)$ qubits and a depth of $O(\log(n))$ to compute the delta function. This gives a volume of $O(3n\log(n))$. \\

\begin{figure}[h]%
\centering
\includegraphics[width=0.95\linewidth]{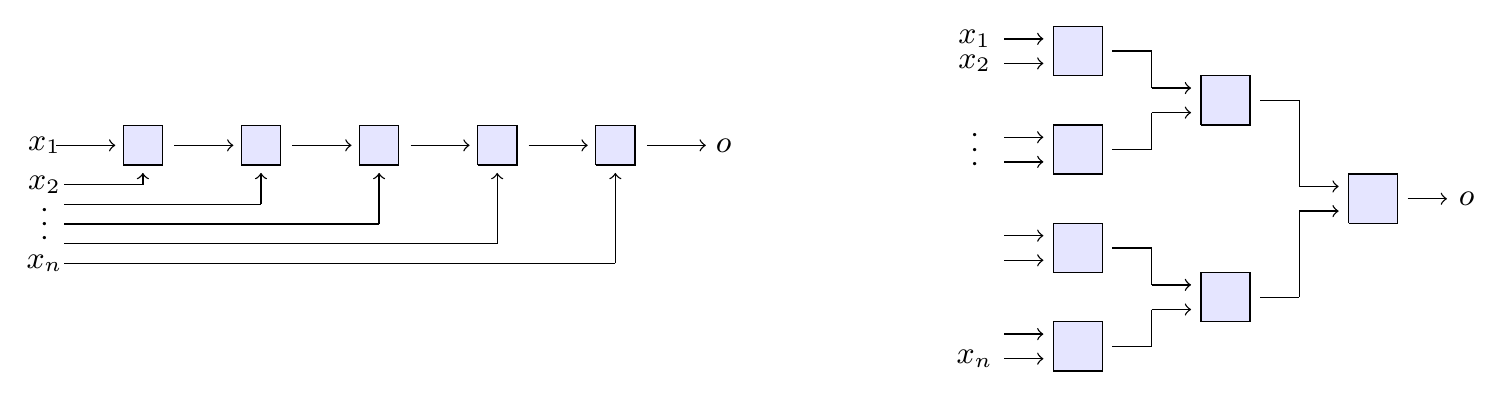}
\caption{Linearly composed (left) and binary tree composition (right) of Anders and Browne MBQCs to compute the delta function. Each box represents an $l2$-MBQC such that the output is the product of the two inputs.}%
\label{figBinaryTreeScheme}
\end{figure}

\end{document}